\newtheorem{theorem}{Theorem}
\newtheorem{definition}{Definition}
\newtheorem{lemma}{Lemma}
\newtheorem{proposition}{Proposition}
\newcommand{\vol}{\mathop{\mathrm{vol}}\nolimits}
\newcommand{\dist}{\mathop{\mathrm{dist}}\nolimits}
\begin{document}

\title[LTE for certain stochastic models]{Local thermal equilibrium for certain stochastic models of heat transport}
\author{Yao Li}
\address{Yao Li: Courant Institute of Mathematical
Sciences, New York University, New York, NY 10012, USA}
\email{yaoli@cims.nyu.edu}

\author{P\'eter N\'andori}
\address{P\'eter N\'andori: Courant Institute of Mathematical
Sciences, New York University, New York, NY 10012, USA}
\email{nandori@cims.nyu.edu}

\author{Lai-Sang Young}
\address{Lai-Sang Young: Courant Institute of Mathematical
Sciences, New York University, New York, NY 10012, USA}
\email{lsy@cims.nyu.edu}
\thanks{LSY was supported in part by NSF Grant DMS-1363161. }

\maketitle
\begin{abstract} This paper is about nonequilibrium steady states (NESS) of a class of
stochastic models in which particles exchange energy with their ``local environments"
rather than directly with one another. The physical domain of the system can be a
bounded region of $\mathbb R^d$ for any $d \ge 1$. We assume that the temperature 
at the boundary of the domain is prescribed and is nonconstant, so that the system 
is forced out of equilibrium. Our main result is local thermal
equilibrium in the infinite volume limit. In the Hamiltonian context, this would mean
that at any location $x$ in
the domain, local marginal distributions of NESS tend to a probability with
density $\frac{1}{Z} e^{-\beta (x) H}$, permitting one to define the local temperature
at $x$ to be $\beta(x)^{-1}$. We prove also that in the infinite volume limit,
the mean energy profile of NESS satisfies Laplace's equation for the prescribed
boundary condition. Our method of proof is duality: by reversing the sample paths
of particle movements, we convert the problem of studying local marginal energy 
distributions at $x$ to that of joint hitting distributions of  certain 
random walks starting from $x$, and prove that the walks in question become increasingly independent as system size tends to infinity.
\end{abstract}

\bigskip
\section{Introduction}

This paper attempts to address, using highly idealized models, two
of the major challenges in nonequilibrium statistical mechanics:
One is the derivation of the Fourier Law, equivalently the heat equation,
 from microscopic principles.
The other has to do with local thermal equilibrium (LTE) for systems that are 
driven out of, and possibly far from,  equilibrium. Both of these topics are of fundamental 
importance, and have been the focus of much research activity in recent years
(see below), yet no satisfactory general theory has been proposed. In this paper
we study the invariant measures corresponding to nonequilibrium
steady states (NESS) of a specific class of particle systems with stochastic interactions.
The models we consider are simple enough to be amenable to rigorous analysis,
yet not overly specialized, so they may offer insight into more general situations.
A feature that distinguishes our paper from a number of previous works 
is that
the physical space of our models can be $\mathbb R^d$ for any $d \ge 1$.
Model behavior depends on $d$, necessitating different arguments in the proofs, 
but our results are valid for all $d \ge 1$.

We begin with a rough model description; see Sect. 2.1 for
more detail. For any $d \ge 1$, let $\mathcal D \subset \mathbb R^d$ be a bounded domain with smooth boundary, and let $T$
be a prescribed temperature function on $\partial \mathcal D$. We consider
 $\sim L^d$ particles performing independent random walks on 
$\mathcal D_L = \mathbb Z^d \cap L \mathcal D$ where
$L\gg 1$ is a real number and $L \mathcal D$ is the dilation of $\mathcal D$.
These particles do not interact directly with one another, but only via
their ``local environments", symbolized by a collection of random variables each representing the energy at a site in $\mathcal D_L$. More precisely, each particle
carries with it an energy. As it moves about, 
it exchanges energy with each of the sites it visits, and when it reaches the boundary
of $\mathcal D_L$, it abandons the energy it was carrying, replacing it by an energy 
drawn randomly from the ``bath distribution"  at the corresponding point in $\partial \mathcal D$.

These stochastic models are modifications of the 1-D mechanical chains studied in
\cite{EY06} and their 2-D generalizations in \cite{LY10}. In these mechanical models,
energy transport occurs via particle-disk interactions, an idea borrowed from 
\cite{LLM03}. More precisely, there is an array of rotating disks evenly 
spaced in the domain. Particles do not interact with each other directly;
they exchange energy with these disks upon collision.
Our site energies are an abstraction of the kinetic energies of these disks, 
or the ``tank energies" in the stochastic models
in \cite{EY06}. Further simplifications have been introduced 
in the present models to make the analysis feasible.

In our models, when the prescribed temperature function is constant on $\partial
\mathcal D$, i.e. $T \equiv T_0$ for some $T_0 \in \mathbb R^+$,
 it is easy to see that the unique invariant probability distribution is a product
measure: particle numbers are independent and Poissonian, and particle and site
energies are independent  and exponentially distributed 
with mean $T_0$. Let us refer to such a distribution as
``the {\it equilibrium distribution} at temperature $T_0$". For a nonconstant $T$ (all that
we require is that it be a continuous function), the system is forced out of equilibrium.
It is not hard to see that there is still a unique NESS to which all initial states converge.
Our two main results assert that the following hold in the infinite volume limit: 

\smallskip 
(1) mean energy profiles with respect to NESS when scaled back to $\mathcal D$ 

\quad converge  to the unique solution $u(x)$ of 
$$\Delta u=0 \quad \mbox{ on } \mathcal D\ ,
\qquad u|_{\partial \mathcal D} = T\ ;
$$ 

(2) given any $x \in \mathcal D$, for sites located near $xL$, 
marginal distributions of 

\quad the NESS tend to the equilibrium distribution at temperature $u(x)$.

\smallskip \noindent
These and other results are formulated precisely in Sect. 2.2.

Our method of proof is {\it duality}, and the dual used here is similar to that in  
\cite{RY07}, which in turn borrowed its main idea from \cite{KMP82}. We
differ from these earlier works in that the KMP-model is
not a particle system, and our results are proved for all $d\ge 1$. As in \cite{KMP82},
our ``dual process" (duality is with respect to a function,
to be precise) keeps track of movements of certain discrete objects we call ``packets"
in this paper. Reasoning naively,  marginal energy distributions at a site $v \in
\mathcal D_L$ is determined
by what particles bring to this site from the bath, so of interest are the points of origin 
of these energies. The idea is that to identify these points of origin, we can
place some packets at site $v$, to be carried around by particles in a manner 
analogous to the way energy is transported, run the particle trajectories ``backwards", 
and look at the hitting distributions of the packets on the boundary of $\mathcal D_L$. 
As it turns out, packet movements in this process are
effectively independent random walks when their trajectories do not meet. 
From this, one observes that the problem is simpler in $d \ge 3$, where independent
random walkers tend not to meet, and that for $d=1$, where they meet often, 
exchangeability, which was used in \cite{KMP82}, is entirely natural.

\bigskip \noindent
{\bf Related works.} We finish with a discussion of some results that are related to the
various aspects of the present paper; needless to say, this is very far from
a complete list. Some general references on interacting particle
systems are \cite{L85,S91,KL99}. A sample of the articles on NESS are 
\cite{RLL67,KLS84,EPR99,RT02,DLS02,D07,FLM11,KY13,CE14,LY14}.
These and other works treat the existence and uniqueness of NESS, 
correlation decay, fluctuations etc. in stationary states. 
Two informative reviews on the Fourier Law are
\cite{BLRB00} and \cite{LLP03}. Mathematical papers on the Fourier Law proven
under various assumptions include \cite{BO05, BLL04, BK07,R11,DL11,LO12,DN14}.
There are many papers in the physics literature; see
the references in \cite{LLP03} and e.g. \cite{GG08}.

Turning now to LTE, for the physics of this phenomenon, see \cite{GM62}.
Our literature search turned up relatively few rigorous results on LTE: 
An early important example is \cite{KMP82},
see also \cite{RY07} and \cite{OR13};
a systematic study of certain gradient models in \cite{ELS90, ELS91};
and the survey article \cite{CGGR13}. It has been noted that a number of systems 
with integrable dynamics do not have LTE, see e.g.  \cite{DD99,BLY10,RY12}. 
In \cite{EY06} and \cite{LY10}, the authors derived formulas for macroscopic
observations such as energy and particle density profiles for both mechanical and stochastic models 
{\it assuming} LTE, and provided numerical validation for their derivations. 

As for duality, two classical references are \cite{S70} and \cite{L85}.
Two early influential papers in the use of this technique are \cite{KMP82} and 
\cite{PS83}; later works include e.g. \cite{GKR07,BCS12}. 
A recent survey of duality from the probabilistic viewpoint is \cite{JK14}, and
a nice survey on the use of duality in statistical mechanics problems is 
\cite{CGGR13}; this survey contains many other references.


\medskip
\section{Model and Results}
\label{secdef}

\subsection{Model description} \label{subsecdef}

We fix a dimension $d \geq 1$, and let $\mathcal D \subset \mathbb R^d$ be 
either a bounded open rectangle or a bounded, connected open set for which 
$\partial \mathcal D$ is a $\mathcal C^2$ submanifold. A continuous function 
$T : \partial \mathcal D \rightarrow \mathbb R_{+}$ to be thought of as {\it temperature}
is prescribed. 
For $L \gg 1$, the physical domain of our system is  
$$\mathcal D_{L} = L \mathcal D \cap \mathbb Z^d\ ,$$
and each lattice point in $\mathcal D_L$ is referred to as a {\it site}.
The {\it bath} is located at
$$
\mathcal B_L = \partial \mathcal D_L \setminus \mathcal D_L $$
where $\partial \mathcal D_L = \{ v\in \mathbb Z^d: v$ has a neighbor in $\mathcal D_L$
and a neighbor outside of $\mathcal D_L\}$. 
Throughout this article, $|\cdot|$ denotes the cardinality of a finite set. 
For any $v \in \mathbb R^d$,
$\langle v \rangle$ denotes its closest point in $\mathbb Z^d$ and $\| v\|$ is the Euclidean norm.

\medskip
We consider a Markov process $\bm X_t = \bm X_t^{(L)}$ with random variables 
$$
 \underline x = ((\xi_{v})_{v \in \mathcal D_{L}}, (\eta_1, ..., \eta_M), (X_1, ..., X_M)) 
\in \mathbb R_+^{\mathcal D_{L}} \times \mathbb R_+^{M} 
\times \mathcal D_{L} ^M 
$$
where $\xi_{v}$ denotes the energy at site $v \in \mathcal D_L$, 
$M=M(L) \in \mathbb Z^+$ is the number of particles in the system, and $\eta_i$ and 
$X_i$ are the energy and location of particle $i$.
The infinitesimal generator of this process has the form
$$
(Gf)(\underline x) =  (G_1f)(\underline x) + (G_2f)(\underline x),
$$
where $G_1$ and $G_2$ describe respectively interactions within $\mathcal D_L$ and
with the bath: 
Each particle carries an exponential clock which rings at rate $1$ independently
of the clocks of other particles. 
When its clock rings, the particle exchanges energy with the site at which it is located; 
immediately thereafter it jumps to a neighboring site, choosing
the $2d$ nearest neighbors with equal probability. The $G_1$ part of the generator
describes the action when the neighboring site chosen is in
$\mathcal D_L$:
\begin{eqnarray*}
 (G_1f)(\underline x) &=&
\sum_{k=1}^M  \frac{1}{2d} \sum_{w \in \mathcal D_L: |X_k -w| =1} \int_0^1 \mathrm{d}p \\
&& \qquad \Big[ 
f( (\xi'_{v})_{v \in \mathcal D_{L}},
\eta_1, ..., \eta_{k-1}, (1-p)(\xi_{X_k} + \eta_k), \eta_{k+1}, ..., \eta_M,\\
&&  \qquad X_1, ..., X_{k-1}, w, X_{k+1}, ..., X_M)
-f(\underline x)
\Big],
\end{eqnarray*}
where 
\begin{equation}
\label{eq:xihat}
\xi'_v = \left\{ \begin{array}{rl}
 \xi_v &\mbox{ if $v \neq X_k$} \\
  p(\xi_{X_k} + \eta_k) &\mbox{ if $v = X_k$.}
       \end{array} \right.
\end{equation}
If the particle jumps to a site $v \in \mathcal B_L$, then its energy is updated according
to the temperature at $v$, and it is returned immediately to its original site. More precisely,
we extend $T$ to a neighborhood of $\partial \mathcal D$, and define
\begin{eqnarray*}
 (G_2f)(\underline x) &=&
\sum_{k=1}^M  \frac{1}{2d}  \sum_{w \in \mathcal B_L: |X_k -w| =1} 
\int_0^1 \int_0^{\infty} \mathrm{d}p \mathrm{d} \eta' \beta \left(\frac{w}{L}\right) 
e^{- \eta' \beta(\frac{w}{L})}\\
&& \quad \Big[f((\xi'_{v})_{v \in \mathcal D_{L}}, \eta_1, ..., \eta_{k-1}, \eta', \eta_{k+1}, ..., 
\eta_M, X_1, ..., X_M) - f(\underline x) \Big],
\end{eqnarray*}
where $\xi_v'$ is given by (\ref{eq:xihat}) and $\beta(\frac{w}{L})=T(\frac{w}{L})^{-1}$.

This completes the definition of our model.

\bigskip \noindent
{\it Remark on distinguishable vs indistinguishable particles.} As defined,
the particles in $\bm X_t$ are named and distinguishable.
Since our results pertain to infinite-volume limits, it is natural 
to work with models with {\it indistinguishable} particles.
For each $L$, such a model can be obtained from $\bm X_t^{(L)}$ via the following identification:
 for $\underline x, \underline y \in
\mathbb R_+^{\mathcal D_{L}} \times \mathbb R_+^{M} \times \mathcal D_{L} ^M$, 
we let $\underline x \sim \underline y$ if 
\begin{eqnarray*}
\underline x & =  & ((\xi_{v})_{v \in \mathcal D_{L}}, (\eta_1, ..., \eta_M), (X_1, ..., X_M))\\
\mbox{and } \quad 
\underline y & = & ((\xi_{v})_{v \in \mathcal D_{L}}, (\eta_{\sigma(1)}, ..., \eta_{\sigma(M)}), 
(X_{\sigma(1)}, ..., X_{\sigma(M)}))
\end{eqnarray*}
where $\sigma$ is a permutation of the set $ \{1,2,\dots, M\}$.
It is easy to check that the quotient process $\bm X_t^{(L)}/_\sim$ is well defined and
corresponds to $\bm X^{(L)}_t$ with  indistinguishable particles.
As the desired results  for
$\bm X_t^{(L)}/_\sim$ are deduced easily from those for $\bm X_t^{(L)}$,
we will, for the most part, be working with $\bm X_t^{(L)}$.

\subsection{Statement of results}
\label{subsec:res}

We begin with a result on the existence and uniqueness of  invariant measure 
in the equilibrium case, i.e., when the prescribed bath temperature $T$ is constant.

\begin{proposition}
\label{prop:equilibrium}
Let $d \in \mathbb Z^+$ and $\mathcal D \subset \mathbb R^d$ be as above,
and let $L$ be such that $\mathbb Z^d$ restricted to $\mathcal D_L$
is a connected graph. If the function $T$ is constant, then with the notation $\beta = 1/T$,
$$ \mu^{(L)}_e = \prod_{v \in \mathcal D_L} \beta e^{- \beta \xi_v} d\xi_v \prod_{j=1}^M \beta e^{- \beta \eta_j} d\eta_j
\prod_{k=1}^M \left( \frac{1}{|\mathcal D_L|} \sum_{v \in \mathcal D_L} \delta_{X_{k} =v} \right)$$
is the unique invariant probability measure of the process $\bm X_t^{(L)}$.
\end{proposition}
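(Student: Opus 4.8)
The proof splits into invariance and uniqueness, and both rest on one elementary observation, which I will call the \emph{redistribution identity}: if $E_1,E_2$ are i.i.d.\ $\mathrm{Exp}(\beta)$ and $p\sim\mathrm{Unif}[0,1]$ is independent of them, then $(p(E_1+E_2),(1-p)(E_1+E_2))$ is again a pair of i.i.d.\ $\mathrm{Exp}(\beta)$ random variables. This is immediate from the fact that $E_1+E_2\sim\Gamma(2,\beta)$ while $E_1/(E_1+E_2)$ is uniform on $[0,1]$ and independent of the sum, so that both sides have the law of $(US,(1-U)S)$ with $U\sim\mathrm{Unif}[0,1]$ independent of $S\sim\Gamma(2,\beta)$.

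For invariance I would write the generator as $G=\sum_{k=1}^M(\mathcal Q_k-I)$, where $\mathcal Q_k$ is the Markov kernel describing one ring of particle $k$'s clock: pick a neighbour $w$ of $X_k$ uniformly among the $2d$ choices, perform the energy exchange $(\xi_{X_k},\eta_k)\mapsto(p(\xi_{X_k}+\eta_k),(1-p)(\xi_{X_k}+\eta_k))$ with $p\sim\mathrm{Unif}[0,1]$, and then either move particle $k$ to $w$ (if $w\in\mathcal D_L$) or resample $\eta_k$ from $\mathrm{Exp}(\beta(w/L))$ and leave $X_k$ fixed (if $w\in\mathcal B_L$); here one uses that for the domains in question every neighbour of a site of $\mathcal D_L$ lies in $\mathcal D_L\cup\mathcal B_L$. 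Then $\mu^{(L)}_eG=0$ follows once we check $\mu^{(L)}_e\mathcal Q_k=\mu^{(L)}_e$ for each $k$. Conditioning on all particle positions, the site and particle energies are i.i.d.\ $\mathrm{Exp}(\beta)$ independently of the positions; the redistribution identity shows this is preserved by the energy exchange, and a fresh resampling from $\mathrm{Exp}(\beta)$ (recall $T$ is constant here) changes nothing, so the energies remain i.i.d.\ $\mathrm{Exp}(\beta)$ and independent of the positions. For the positions, the move of particle $k$ is one step of the walk on $\mathcal D_L$ that jumps to a uniformly chosen neighbour or stays put if that neighbour lies outside $\mathcal D_L$; this walk has a symmetric transition matrix, hence is doubly stochastic, hence preserves $\mathrm{Unif}(\mathcal D_L)$. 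Since the other particles are untouched and everything remains independent, $\mu^{(L)}_e\mathcal Q_k=\mu^{(L)}_e$.

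For uniqueness I would run two copies $\bm X_t^1,\bm X_t^2$ of the process on a common probability space. In a first phase, drive only the two position configurations so as to couple them: the joint positions form an irreducible, aperiodic Markov chain on $(\mathcal D_L)^M$ (laziness at the boundary kills periodicity, and $\mathcal D_L$ is connected by hypothesis), so there is a coupling in which all $M$ positions agree from some a.s.\ finite time $\tau_1$ on. From $\tau_1$ on, use \emph{synchronous} randomness: the same clock rings, neighbour choices, values of $p$, and bath samples in both copies. Then the positions stay equal forever, and for the discrepancies $d_v=\xi_v^1-\xi_v^2$ and $e_k=\eta_k^1-\eta_k^2$ one computes that an energy exchange at $(X_k,k)$ sends $(d_v,e_k)\mapsto(p(d_v+e_k),(1-p)(d_v+e_k))$ — the same pooling map — while a bath visit additionally sets $e_k=0$. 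Hence $D_t:=\sum_v|d_v|+\sum_k|e_k|$ is nonincreasing, since $|p(d_v+e_k)|+|(1-p)(d_v+e_k)|=|d_v+e_k|\le|d_v|+|e_k|$, and it strictly decreases at every bath visit at a pair carrying nonzero discrepancy. A routine estimate — force a designated particle, with no other clock ringing, to walk to the bath, then to the site carrying the largest share of $D_t$, exchange there with $p\le\tfrac12$, and return to the bath — gives $\mathbb E[D_{t+s}\mid\mathcal F_t]\le(1-c)D_t$ for fixed $s,c>0$ depending only on $L,M$, so $D_t\to0$ a.s. Since the positions are eventually equal and the energies then converge in $\ell^1$, $f(\bm X_t^1)-f(\bm X_t^2)\to0$ for every bounded continuous $f$; taking $\bm X_0^2\sim\mu^{(L)}_e$ yields $\mathrm{Law}(\bm X_t^1)\Rightarrow\mu^{(L)}_e$ for an arbitrary initial law, which forces any invariant probability to coincide with $\mu^{(L)}_e$.

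The only real obstacle is the quantitative contraction $\mathbb E[D_{t+s}\mid\mathcal F_t]\le(1-c)D_t$: one must exhibit a single positive-probability event, uniform in the current state for fixed $L$ and $M$, on which a definite fraction of the current discrepancy is transported to the bath and erased. Because $\mathcal D_L$ is a fixed finite connected graph and the per-particle clock rate is $1$, this is straightforward bookkeeping, but it is the one place where connectivity of $\mathcal D_L$ is genuinely used; everything else reduces to the redistribution identity and the symmetry of the position walk.
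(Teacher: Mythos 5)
Your proposal is correct, and it is considerably more detailed than the paper's own treatment, which disposes of this proposition in two sentences: invariance ``by a direct computation'' and uniqueness ``from Doeblin's condition.'' Your invariance argument is exactly the intended direct computation, organized around the right structural fact --- the redistribution identity $(p(E_1+E_2),(1-p)(E_1+E_2))\overset{d}{=}(E_1,E_2)$ for i.i.d.\ exponentials, which is also the probabilistic content of the Beta-integral manipulation in the paper's Lemma~\ref{lemmaduality} --- together with double stochasticity of the lazy position walk. For uniqueness you genuinely depart from the paper: instead of Doeblin you build a two-stage coupling (first couple positions, then run synchronous randomness and contract the $\ell^1$ energy discrepancy $D_t$). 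This buys something real: a literal Doeblin minorization is \emph{not} uniform over the non-compact energy space (the density of $p(\xi_v+\eta)$ at a fixed point degenerates as the old site energy $\xi_v\to\infty$), so the paper's route implicitly needs the tightness/stochastic-domination input it only introduces afterwards for Proposition~\ref{prop:invmeasexists}; your coupling is self-contained. One caution on your final step: the multiplicative bound $\mathbb E[D_{t+s}\mid\mathcal F_t]\le(1-c)D_t$ requires the designated particle to actually \emph{retain} its load while walking to the bath, and en-route exchanges send $e\mapsto(1-p)(d_w+e)$, which can leak the load back onto sites or cancel against opposite-signed $d_w$. The clean way to close this is the dichotomy: at each en-route exchange either $|d_w|+|e|-|d_w+e|$ exceeds a fixed fraction of the targeted discrepancy (cancellation, which is itself a decrease of $D_t$) or, with all $p$'s forced small, the particle retains a definite fraction of its load until the terminal bath jump destroys it. With that, even the weaker statement --- from any state with $D_t\ge\varepsilon$ there is a uniformly positive probability of an additive drop $\delta(\varepsilon,L,M)$ within time $s$ --- combined with monotonicity of $D_t$ and conditional Borel--Cantelli already gives $D_t\to0$ a.s., which is all your argument needs.
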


We are primarily interested in out-of-equilibrium settings defined by non-constant 
bath temperatures.

\begin{proposition}
\label{prop:invmeasexists}
Let $d, \mathcal D$ and $T$ be as in Sect. 2.1. We assume $L$ is such that $\mathbb Z^d$ restricted to $\mathcal D_L$
is a connected graph. Then the process ${\bm X_t^{(L)}}$ has a unique invariant probability measure $\mu^{(L)}$.
Furthermore, the distribution of ${\bm X_t^{(L)}}$ converges to $\mu^{(L)}$ as $t \rightarrow \infty$
for any initial distribution of ${\bm X_0^{(L)}}$.
\end{proposition}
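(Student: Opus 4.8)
The plan is to recognize $\bm X_t^{(L)}$ as a piecewise-deterministic Markov process on the (non-compact) state space $\Omega = \mathbb R_+^{\mathcal D_L} \times \mathbb R_+^M \times \mathcal D_L^M$, and to establish existence, uniqueness, and convergence to equilibrium by a Lyapunov-function / Doeblin-type argument rather than by appealing to compactness. First I would observe that the spatial component $(X_1,\dots,X_M)$ evolves, on its own, as $M$ independent continuous-time random walks on $\mathcal D_L$ that are killed/reset at $\mathcal B_L$; since $\mathcal D_L$ is a finite connected graph and every site can reach $\mathcal B_L$, the spatial marginal alone is an irreducible finite-state chain with a unique stationary law, and from \emph{any} starting configuration, within a fixed time window there is a uniformly positive probability that every one of the $M$ particles has visited the bath at least once. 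This is the key renewal event: once particle $k$ touches $\mathcal B_L$, the energy $\eta_k$ it carries is freshly redrawn from an exponential law whose parameter $\beta(w/L)$ is bounded above and below (because $T$ is continuous and positive on the compact set $\partial\mathcal D$, hence bounded away from $0$ and $\infty$).

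Next I would control the site energies $\xi_v$. The subtlety is that a site $v$ may, in principle, not be visited for a long time, so its energy is not directly ``refreshed''; however, the energy currently at $v$ was deposited there by a particle, and tracing that particle's history backward it was either drawn from the bath or split off from an energy that traces back to the bath. Rather than make this precise, the cleaner route is a Foster--Lyapunov drift estimate: take $V(\underline x) = \sum_{v\in\mathcal D_L}\xi_v + \sum_{k=1}^M \eta_k$, compute $GV$, and note that every interaction event conserves the total energy $\xi_{X_k}+\eta_k$ locally \emph{except} at the bath, where $\eta_k$ is replaced by an independent exponential of mean $T(w/L) \le \|T\|_\infty$. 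Hence $GV \le -c\, V + C$ for constants $c, C>0$ depending on $L$, $M$, and $\|T\|_\infty$ (the negative drift comes from the rate-$1$ clocks carrying energy to the boundary and resetting it to a bounded mean). Standard results for PDMPs (e.g. Meyn--Tweedie, Davis, or Down--Meyn--Tweedie) then give a unique invariant probability measure together with exponential ergodicity, provided one also has an irreducibility/small-set property.

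For the irreducibility and aperiodicity input I would argue as follows: after the renewal event described above (all $M$ particles have visited $\mathcal B_L$ in a bounded time), the entire energy vector $(\eta_1,\dots,\eta_M)$ and, by letting the dynamics run a little longer so that particles traverse all of $\mathcal D_L$, also the site energies $(\xi_v)_{v\in\mathcal D_L}$ have been regenerated from absolutely continuous distributions whose densities are bounded below on compact sets; concretely, for any $\underline x$ one shows $P^{t_0}(\underline x, \cdot)$ dominates $\epsilon_0$ times a fixed probability measure on a reference set $\{V \le R\}$, uniformly in $\underline x$ in that set. Combined with the drift bound this makes $\{V\le R\}$ a small set, and the general theory closes the argument. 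The main obstacle is the non-compactness of $\Omega$ in the energy variables: one must rule out energy escaping to infinity and verify the minorization with genuinely continuous (not merely positive) densities, which requires unwinding the energy-splitting map $p \mapsto (p(\xi+\eta), (1-p)(\xi+\eta))$ and checking that finitely many such operations, interleaved with bath resets, produce a density bounded below on a reference set. I expect the drift computation $GV \le -cV + C$ and the uniform-in-starting-point lower bound on reaching $\{V\le R\}$ to be the two places where real work is needed; the remainder is an application of standard Harris-type theorems, and the ``convergence from any initial distribution'' claim then follows immediately from exponential ergodicity.
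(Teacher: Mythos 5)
Your overall strategy (a Harris-type theorem: drift plus minorization) could in principle be pushed through, but the specific drift estimate you rest it on is false, and this is a genuine gap. With $V(\underline x)=\sum_{v\in\mathcal D_L}\xi_v+\sum_{k=1}^M\eta_k$, the interior part $G_1$ of the generator conserves total energy exactly, and $G_2$ acts only through particles currently adjacent to $\mathcal B_L$: its contribution is of the form $-c\sum_k(\xi_{X_k}+\eta_k)\,\mathbf 1\{X_k\text{ has a neighbor in }\mathcal B_L\}+C$. Hence at any state in which no particle neighbors the bath --- or in which essentially all of the energy sits at sites not currently occupied by a particle --- one has $GV\ge 0$ no matter how large $V$ is, because site energies at unvisited sites are simply not acted on by the generator. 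So no pointwise inequality $GV\le -cV+C$ with $c>0$ can hold. To repair this you would have to prove instead a drift for a skeleton chain, $\mathbb E_{\underline x}\,V(\bm X_{t_0})\le \lambda V(\underline x)+C$ with $\lambda<1$, which for fixed $L$ amounts to showing that within time $t_0$ a uniformly positive expected fraction of the energy residing at \emph{every} site is transported to the bath (using linearity of the energy-splitting dynamics in the energy variables and finiteness of $\mathcal D_L$); that estimate, together with your minorization on $\{V\le R\}$ (which must also regenerate the site energies, not just the particle energies), is where the real work lies and is considerably more delicate than the one-line generator computation you propose.

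For comparison, the paper takes a much lighter route and avoids Lyapunov functions altogether: it first exhibits the explicit product invariant measure for constant bath temperature (Proposition \ref{prop:equilibrium}, with uniqueness via Doeblin's condition), and then, for nonconstant $T$, couples $\bm X^{(L)}_t$ to the two equilibrium processes with constant temperatures $T^{\rm max}=\sup T$ and $T^{\rm min}=\inf T$; stochastic domination from above by $\mu^{(L)}_{e,\rm max}$ and from below by $\mu^{(L)}_{e,\rm min}$ gives tightness of the laws both at energy infinity and at zero, after which uniqueness and convergence from any initial distribution follow from the same irreducibility/Doeblin considerations. If you want a complete proof, either fix the drift step as indicated above or adopt this comparison argument, which buys tightness with no drift computation at all.
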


The proofs of Propositions \ref{prop:equilibrium} and \ref{prop:invmeasexists} are 
straightforward; thus we mention only the ideas and leave details to the reader.
For Propositions \ref{prop:equilibrium}, one can check by a direct computation that 
$\mu^{(L)}_e$ is invariant; uniqueness follows from Doeblin's condition. 
Once Proposition \ref{prop:equilibrium} is established, tightness (at ``$\infty$"
and at ``$0$") for Proposition \ref{prop:invmeasexists} can be proved as follows: 
Given ${\bm X_t^{(L)}}$ defined by a continuous bath temperature function 
$T|_{\partial \mathcal D}$, we consider two equilibrium processes corresponding
to boundary conditions $T^{\rm max} = \sup_{x \in \partial \mathcal D} T(x)$ and
$T^{\rm min} = \inf_{x \in \partial \mathcal D} T(x)$. Let
$\mu^{(L)}_{e, \rm max}$ and $\mu^{(L)}_{e, \rm min}$ be the invariant probabilities
of these two processes respectively. Coupling $\bm X^{(L)}_t$ to the 
process defined by $T^{\rm max}$ in the natural way with 
$\bm X^{(L)}_0 = \mu^{(L)}_{e, \rm max}$, 
we see that the distribution of $\bm X^{(L)}_t$ is stochastically dominated by 
$\mu^{(L)}_{e, \rm max}$. Likewise, coupling to the process defined by $T^{\rm min}$,
we see that the distribution of $\bm X^{(L)}_t$ is stochastically bounded from below
by $\mu^{(L)}_{e, \rm min}$. 

\bigskip
{\it In Theorems 1--4 below, it is assumed that 
$d, \mathcal D, T$ and $L$ are as in Sect. 2.1.}
For given $L$, we let $\mathbb E^{(L)}( \cdot )$ denote expectation with respect to 
the nonequilibrium steady state $\mu^{(L)}$. 
Our first result is on the profile of mean site energies.

\medskip
\begin{theorem} [\bf Mean site energy profiles] \label{thm1}
For each $x \in \mathcal D$,
$$
\lim_{L \to \infty} \mathbb E^{(L)}(\xi_{\langle xL \rangle}) \ = \ u(x)
$$
where $u$ is the unique solution of the equation
\begin{equation}
\label{eq:defu}
 \Delta u = 0\text{ on } \mathcal D\ , \qquad u|_{\partial \mathcal D} = T.
\end{equation}
\end{theorem}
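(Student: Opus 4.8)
The plan is to use duality: I would reduce the computation of $\mathbb E^{(L)}(\xi_v)$, $v=\langle xL\rangle$, to the exit distribution of one random walk on $\mathcal D_L$. First realize the stationary process on the whole time line $t\in\mathbb R$ and note that, conditionally on the particle positions, the Poisson clocks and the splitting variables $p$, every energy $\xi_v(t),\eta_k(t)$ evolves \emph{linearly}: an interior ring at a site replaces a pair of energies by convex combinations (coefficients $p$ and $1-p$), and a bath ring overwrites one energy by a fresh draw $\eta'_e$ with $\mathbb E(\eta'_e)=T(w_e/L)$. Unrolling this recursion backward from time $0$ expresses $\xi_v(0)$ as a sum, over past bath events $e$, of $\eta'_e$ times a product of factors $p$ or $1-p$ along a ``provenance path'', plus a remainder from paths not yet traced back to the bath. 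Because the position/clock/$p$-data is independent of the energy values, because $\mathbb E(p)=\mathbb E(1-p)=\tfrac12$, and because a provenance path is time-monotone (so visits each event at most once), taking expectations turns each path into a random-walk path weighted by $\tfrac12$ per step. The remainder vanishes in the limit by Proposition~\ref{prop:invmeasexists}: the stochastic domination there gives $\mathbb E^{(L)}(\xi_v)\le T^{\rm max}$ and $\mathbb E^{(L)}(\eta_k)\le T^{\rm max}$ uniformly, and the probability that a provenance path has not reached the bath by time $-N$ tends to $0$. This yields
\[
\mathbb E^{(L)}(\xi_v)\ =\ \sum_{w\in\mathcal B_L}q_L(v,w)\,T(w/L),
\]
where $q_L(v,\cdot)$ is the law of the site at which the associated backward ``packet walk'' from $v$ first reaches $\mathcal B_L$.

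Second, I would identify this packet walk exactly. A packet is either attached to a site $u\in\mathcal D_L$ or riding a particle $k$, and at the last event (backward in time) touching its current energy it switches type: at a site it goes to the particle slot of the particle that last rang there, or stays, with probabilities $\tfrac12,\tfrac12$; riding a particle it goes to the site slot that particle last occupied, or stays, with probabilities $\tfrac12,\tfrac12$, \emph{unless} that ring was a bath interaction, in which case the packet is absorbed at the bath site. Tracking the packet's physical location, the site$\leftrightarrow$particle switches do not move it (particle and site coincide at the relevant instant), so the genuine spatial steps are exactly the backward steps of the particle currently ridden. The crucial observation is that the particle-position process is simple random walk on $\mathbb Z^d$ reflected at $\mathcal B_L$, which is reversible with respect to the uniform measure on $\mathcal D_L$; hence its time reversal is again such a walk, so each spatial step of the packet is an \emph{unbiased} step to a uniformly chosen neighbour in $\mathbb Z^d$, a step into $\mathcal B_L$ being precisely the absorbing event. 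Thus the embedded spatial chain of the packet walk is ordinary simple random walk on $\mathbb Z^d$ killed on first reaching $\mathcal B_L$, and $q_L(v,\cdot)$ is the discrete harmonic measure, $q_L(v,w)=\mathbb P_v[\text{SRW hits }\mathcal B_L\text{ at }w]$. (The packet is absorbed almost surely, so $q_L(v,\cdot)$ is a probability measure, since backward in time each site sees infinitely many particle rings and simple random walk exits the finite connected set $\mathcal D_L$ in finitely many steps a.s.)

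Third, pass to the limit. Since $\mathcal D$ is a rectangle or has $\mathcal C^2$ boundary, every boundary point is regular for the Dirichlet problem, and by the invariance principle the exit distribution of simple random walk from $\mathbb Z^d\cap L\mathcal D$ started at $\langle xL\rangle$, pushed forward under $w\mapsto w/L$, converges weakly to the harmonic measure $\omega_x$ of $\mathcal D$ from $x$. As $T$ is continuous and bounded near $\partial\mathcal D$,
\[
\mathbb E^{(L)}(\xi_{\langle xL\rangle})=\sum_{w\in\mathcal B_L}q_L(\langle xL\rangle,w)\,T(w/L)\ \longrightarrow\ \int_{\partial\mathcal D}T\,d\omega_x\ =\ u(x),
\]
the final equality being the classical harmonic-measure representation of the solution of \eqref{eq:defu}.

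The step I expect to be the main obstacle is the second one: constructing the dual ``packet'' process carefully enough to prove that its spatial motion is \emph{precisely} simple random walk killed on $\mathcal B_L$. This calls for making ``the last relevant event'' unambiguous in each state, and then using the Markov property together with the reversibility of the stationary particle-position process to check that each backward step of the ridden particle is genuinely unbiased in spite of the conditioning that created the current state — in particular that the site-versus-particle bookkeeping neither creates nor destroys probability mass at the boundary. The infinite-past remainder estimate and the almost-sure absorption of the packet in the first step also need attention, but become routine once the uniform bounds from Proposition~\ref{prop:invmeasexists} are in hand.
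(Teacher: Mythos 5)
Your proposal is correct and follows essentially the same route as the paper: the backward ``provenance path'' you construct is exactly the paper's dual packet process $\bm Z_k$ of Section \ref{sec:duality}, your reversibility argument is the identity $\mathbb P[\bm \sigma]=\mathbb P[\bm \sigma^{-1}]$ underlying the pathwise duality lemmas, and the final passage to harmonic measure is Proposition \ref{lem12}. The only difference is presentational: you unroll the linear first-moment recursion directly from the stationary process, whereas the paper obtains the same identity as the one-packet specialization of the general duality function $F$ (which it needs anyway for the higher moments in Theorem \ref{thm2}).
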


\medskip

Next we proceed to the definition of LTE
for site energies. For $x\in \mathcal D$ and a finite set $S \subset \mathbb Z^d$, 
we denote by $\mu_{x,S}^{(L)}$ the projection of 
$\mu^{(L)}$ to the coordinates $(\xi_{\langle xL \rangle + v})_{v \in S}$ 
and identify $\mu_{x,S}^{(L)}$ with a measure on $\mathbb R^S$
with coordinates $(\zeta_s)_{s \in S}$. Given $M(L) \in \mathbb Z^+$, 
we say the site energies $\xi_v$ of ${\bm X_t}^{(L)}$ approach {\it local thermodynamic equilibrium} (LTE) as $L \to\infty$ if for every $x \in \mathcal D$ and
every finite set $S \subset \mathbb Z^d$, 
$$ \mu_{x,S}^{(L)} \Rightarrow \mu_{x,S} = \prod_{s \in S}\beta(x) e^{-\beta(x) \zeta_s} d \zeta_s
\qquad \mbox{ as } L \to \infty\ ,
$$
where $\Rightarrow$ stands for weak convergence and $\beta(x)=u(x)^{-1}$,
$u$ being the function in Theorem \ref{thm1}.

\medskip
\begin{theorem}[\bf LTE for site energies]  \label{thm2}
For any choice of $M(L) \in \mathbb Z^+$, site
 energies $\xi_v$ of ${\bm X_t}^{(L)}$ approach LTE as $L \to \infty$\ .
\end{theorem}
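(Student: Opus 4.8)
The plan is to reduce Theorem~\ref{thm2} to the convergence of all joint moments of the site energies near $\langle xL\rangle$, and then to compute these moments by duality. Fix $x\in\mathcal D$ and a finite $S\subset\mathbb Z^d$, and let $(k_s)_{s\in S}$ be any tuple of nonnegative integers. Since $\mu_{x,S}$ is a product of exponentials and hence determined by its moments, while the family $(\mu_{x,S}^{(L)})_{L}$ is tight by the uniform upper moment bounds coming from the stochastic domination used to prove Proposition~\ref{prop:invmeasexists}, it suffices to prove
\begin{equation}
\label{eq:momgoal}
\lim_{L\to\infty}\mathbb E^{(L)}\Big[\prod_{s\in S}\xi_{\langle xL\rangle+s}^{\,k_s}\Big]=\prod_{s\in S}k_s!\,u(x)^{k_s}=\mathbb E_{\mu_{x,S}}\Big[\prod_{s\in S}\zeta_s^{k_s}\Big].
\end{equation}

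To compute the left side of (\ref{eq:momgoal}) I would use the duality function
$$
D(\underline x,\mathbf n)=\prod_{v\in\mathcal D_L}\frac{\xi_v^{\,m_v}}{m_v!}\ \prod_{i=1}^{M}\frac{\eta_i^{\,n_i}}{n_i!}\ ,
$$
where a dual configuration $\mathbf n$ attaches a nonnegative integer $m_v$ to each site $v$ and $n_i$ to each particle $i$, extended so that a packet frozen at a bath site $b\in\mathcal B_L$ contributes the deterministic factor $T(b/L)$. A direct computation on $G=G_1+G_2$ shows that $G$ is dual with respect to $D$ to the generator $\hat G$ of the following \emph{packet process}: packets are carried by the particles, which move as independent simple random walks; when a particle's clock rings at its site $v$, the $m_v+n_i$ packets then at $v$ and on the particle are pooled and the number reassigned to the particle is drawn uniformly from $\{0,1,\dots,m_v+n_i\}$; the particle then jumps to a uniform neighbor, carrying its packets, and if that neighbor lies in $\mathcal B_L$ the packets it carries are frozen there. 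Since $\mu^{(L)}$ is invariant and, on the finite connected graph $\mathcal D_L$, every particle a.s. visits $\mathcal B_L$ infinitely often, letting $t\to\infty$ in the duality relation yields
\begin{equation}
\label{eq:dualformula}
\mathbb E^{(L)}\Big[\prod_{s\in S}\xi_{\langle xL\rangle+s}^{\,k_s}\Big]=\Big(\prod_{s\in S}k_s!\Big)\,\mathbb E_{\mathbf k}\Big[\prod_{b\in\mathcal B_L}T(b/L)^{\,m^\infty_b}\Big]\ ,
\end{equation}
where the dual process starts from the configuration $\mathbf k$ placing $k_s$ packets at $\langle xL\rangle+s$ and $m^\infty_b$ is the number of packets eventually frozen at $b$. (When $|\mathbf k|=1$ this is precisely Theorem~\ref{thm1}: one packet traces a simple random walk killed on $\mathcal B_L$, and by the invariance principle together with the regularity of $\partial\mathcal D$ its killing site scaled by $1/L$ converges to the harmonic measure $\omega_x$, so the right side tends to $\int_{\partial\mathcal D}T\,d\omega_x=u(x)$.)

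It remains to analyze (\ref{eq:dualformula}) as $L\to\infty$. Writing $B_1,\dots,B_N$, $N=\sum_s k_s$, for the frozen locations of the $N$ packets, the dual expectation in (\ref{eq:dualformula}) equals $\mathbb E_{\mathbf k}[\prod_{i\le N}T(B_i/L)]$, and since each $B_i$ individually is the killing site of a simple random walk started near $\langle xL\rangle$ — a fact independent of $M(L)$ and of the energies — one has $\mathbb E[T(B_i/L)]\to u(x)$. Thus (\ref{eq:momgoal}) will follow once one shows that $(B_i/L)_{i\le N}$ become asymptotically independent as $L\to\infty$, so that $\mathbb E_{\mathbf k}[\prod_{i\le N}T(B_i/L)]\to u(x)^N$.

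This asymptotic independence is the heart of the proof, and it is here that the argument genuinely depends on $d$. The packets interact only when two of them occupy a common site, which happens only at (a subset of) the meeting times of the underlying random walks. For $d\ge 3$ these walks are transient: two of them, even started $O(1)$ apart, meet only finitely often and, once they have separated to a scale that is large in lattice units but $o(L)$, fail to meet again with probability tending to $1$; one then controls the bounded amount of initial reshuffling and argues that thereafter the $N$ packets evolve as essentially independent killed walks issued from a mesoscopic neighborhood of $xL$, so their killing sites decouple. For $d\le 2$ the walks are recurrent and meet infinitely often; here the uniform-redistribution rule makes $(B_1,\dots,B_N)$ exchangeable, as in \cite{KMP82}, and one shows that the exchangeable weak limit is the product $\omega_x^{\otimes N}$ — that is, the limiting de Finetti mixing measure is degenerate — using that all $N$ walks start within a bounded distance of $xL$ and that harmonic measure from a single macroscopic point is deterministic. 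The remaining ingredients, namely the convergence of discrete harmonic measure on $\mathcal D_L$ to $\omega_x$ (using the stated regularity of $\partial\mathcal D$) and the uniformity of all estimates in the arbitrary particle count $M(L)$, are routine.
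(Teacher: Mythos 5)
Your reduction to joint moments, the tightness step, the duality function, and the identification of the dual as a packet-carrying process are all essentially the paper's route, and your $d\ge 3$ argument via transience matches the paper's as well. (One small slip along the way: in the dual process the particle must jump \emph{first} and then pool its packets with those at the destination site, because duality is established pathwise along \emph{reversed} particle trajectories; mixing before jumping places the redistribution at the wrong site. This does not affect the asymptotic hitting distributions, but it does affect whether the exact duality identity holds.)

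The genuine gap is in dimension two, which you fold into the recurrent case and propose to treat by exchangeability and de Finetti. Two problems. First, with a continuum boundary, de Finetti/Hewitt--Savage only says that any weak limit of $(B_1/L,\dots,B_N/L)$ is a mixture of i.i.d.\ sequences directed by a \emph{random} measure $\Theta$ on $\partial\mathcal D$ whose mean is $\omega_x$; the observation that ``harmonic measure from a single macroscopic point is deterministic'' does not make $\Theta$ deterministic. (If the packets were glued together forever, the sequence would be exchangeable with every marginal equal to $\omega_x$, yet $\Theta=\delta_{B_1}$ would be random.) Degeneracy of $\Theta$ is equivalent to the two-packet factorization $\mathbb E[h(B_1/L)g(B_2/L)]\to\int h\,d\omega_x\int g\,d\omega_x$, so exchangeability only reduces general $N$ to $N=2$, and the decorrelation still has to be proved. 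Second, the only mechanism you invoke for that decorrelation is the KMP-style one, which in the paper (Lemmas \ref{lem:lowerbd} and \ref{upbound}) rests on the sub/supermartingale functions $xy$ and $xy-c|x-y|$ and on the boundary consisting of the two points $\{\mathcal L,\mathcal R\}$; neither ingredient survives in the plane. The paper instead handles $d=2$ by a quantitative comparison with independent walks: two packets meet only $O(\log L)$ times, each encounter lasts a time of bounded expectation (Lemma \ref{lemma:sticking}), so with probability $1-O(L^{-100})$ they separate to distance $L^{\beta}$ before either has moved $L^{\beta+\delta}$ from $\langle xL\rangle$ (Lemma \ref{lemma3}), and once $L^{\beta}$ apart they never meet again with probability tending to $1$ (Lemma \ref{lemma:mesoscop}). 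That two-scale argument is the missing idea; without it, or a genuinely new second-moment identity valid in the plane, your proof does not close for $d=2$ (and even for $d=1$ the degeneracy of the mixing measure requires the separate martingale computation, which you cite but do not supply).
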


\medskip
In the case where the number of particles tends to a fixed positive
density as $L \to \infty$, i.e. ,
\begin{equation}
\label{eq:densetracers}
 \frac{M(L)}{L^d} \to \alpha \vol (\mathcal D) \qquad \mbox{ as } L \to \infty
\end{equation}
for some constant $\alpha>0$, a more complete notion of LTE should include not only
distributions of site energies but also those of particle energies. In preparation for the
formal statement, we introduce the following notation: 

Let $\mu^{(L)} /_\sim$
be the steady state distribution of the process $\bm X^{(L)}_t/_\sim$ (see the Remark
at the end of Sect. 2.1). Given $x \in \mathcal D$, $S =\{ v_1, ..., v_s \}\subset \mathbb Z^d$, and non-negative integers $K_1, ..., K_s$, we consider the conditional probability
$$
(\mu^{(L)} /_\sim) \ | \ \{\# \mbox{ particles at site } \langle xL \rangle+v_j = K_j\ ,
\ j=1,\dots, s\}
$$
and project this measure to the site and particle energy coordinates on 
$\langle xL \rangle + S$. The resulting measure,  $\nu^{(L)}_{x, S, K_1, \dots, K_s}$, 
can be viewed as a measure on \linebreak
$ \Pi_{j =1}^s (\mathbb R_+ \times (\mathbb R_+^{K_j})/_\sim)$, where
the relation in $(\mathbb R_+^K)/_\sim$ is defined by $x \sim y$ if
$x=(x_1,...,x_K)$, $y=(x_{\sigma(1)}, ...,x_{\sigma(K)} )$, and $\sigma$ is
a permutation of the set $\{1, \dots, K\}$.
%
We are concerned with the limit of  $\nu^{(L)}_{x, S, K_1, \dots, K_s}$ as $L \to \infty$.

\medskip
\begin{theorem}[\bf LTE for systems with positive density of particles]
\label{thm:dense} Suppose $M(L)/L^d \to \alpha \vol (\mathcal D) $ for some
$\alpha>0$. Then the system approaches LTE as $L \to \infty$ in the sense of both 
site and particle energy distributions. That is to say,
for every $x \in \mathcal D$, $S =\{ v_1, ..., v_s \}\subset \mathbb Z^d$, 
and non-negative integers $K_1, ..., K_s$, we have the following limiting distributions
as $L \to \infty$:
\begin{itemize}
\item[(1)] Let $k^{(L)}_{v_j}$ be the number  of particles at site $\langle xL\rangle + v_j$, 
$j=1,2, \dots, s$, seen as a random variable with respect to $\mu^{(L)}$.
Then $k^{(L)}_{v_1}, \dots {k^{(L)}_{v_s}}$ tend in distribution to independent Poisson 
random variables with mean $\alpha$.
\item[(2)] The measures $\nu^{(L)}_{x, S, K_1, \dots, K_s}$ 
converge weakly to 
\begin{equation}
\Pi_{j=1}^s \ \left(\mathcal E_x \ \times (\mathcal E_x^{K_j})/_\sim \right) \qquad
\mbox{ as } L \to \infty\ ,
\end{equation}
where $\mathcal E_x$ is the exponential distribution on $\mathbb R_+$ with parameter
$\beta = u(x)^{-1}$, $\mathcal E_x^K$ is the product of $K$ copies of $\mathcal E_x$, and
$\sim$ is the usual identification in $(\mathbb R_+^K)/_\sim$.
\end{itemize}
\end{theorem}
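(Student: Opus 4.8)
The plan is to reduce both assertions to statements about joint moments and to feed the energy moments through the dual ``packet'' process used to prove Theorem~\ref{thm2}, adding packets that ride particles and a time\nobreakdash-scale separation argument that removes the conditioning on occupancies. Part~(1) I would dispatch first, as it is elementary and decoupled from the energies: the locations $(X_1,\dots,X_M)$ evolve as $M$ independent continuous-time random walks on $\mathcal D_L$ in which an attempted jump into $\mathcal B_L$ is suppressed (the bath touches only the energy coordinate), a symmetric irreducible walk whose unique stationary law is uniform on $\mathcal D_L$; since the positions form an autonomous subsystem, the position marginal of $\mu^{(L)}$ is the $M$-fold product of that uniform law, and because $|\mathcal D_L| = \vol(\mathcal D)\,L^d(1+o(1))$ while $M(L)/L^d\to\alpha\vol(\mathcal D)$, the occupancies $(k^{(L)}_{v_1},\dots,k^{(L)}_{v_s})$ at distinct sites converge, by the classical Poisson limit for occupancy, to independent ${\rm Poisson}(\alpha)$ variables.

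For part~(2) I would first reduce to moments. By the coupling in the proof of Proposition~\ref{prop:invmeasexists}, $\mu^{(L)}$ is stochastically dominated by the equilibrium measure at temperature $T^{\rm max}$, a product of exponentials; hence every fixed mixed moment of finitely many site and particle energies is bounded uniformly in $L$, and the bound survives conditioning on $A=\{k_{w_j}=K_j,\ j\le s\}$ (with $w_j:=\langle xL\rangle+v_j$) since $\mathbb P^{(L)}(A)$ stays bounded away from $0$ by part~(1). As a product of exponential distributions on $\mathbb R_+^N$ — or on a quotient $(\mathbb R_+^K)/_\sim$, where it is pinned down by its symmetric moments — is determined by its moments, and uniformly bounded moments supply the uniform integrability needed to upgrade moment convergence to weak convergence, it suffices to show that every mixed moment
$$
\mathbb E^{(L)}\!\Big[\ \prod_j \xi_{w_j}^{a_j}\ \cdot\ \prod_j\Big(\tfrac1{K_j!}\sum\nolimits^{*}\prod_{i=1}^{K_j}\eta_{k_i}^{b^{(j)}_i}\Big)\ \Big|\ A\ \Big]
$$
(inner starred sum over ordered tuples of distinct particles sitting at $w_j$) tends to $\prod_j\big(a_j!\,u(x)^{a_j}\prod_i b^{(j)}_i!\,u(x)^{b^{(j)}_i}\big)$.

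The dual process of Theorem~\ref{thm2} writes a pure site-energy moment $\mathbb E^{(L)}[\prod_v\xi_v^{m_v}]$ as an expectation over packets run backwards in time: $m_v$ packets are placed at site $v$; at each particle--site interaction the packets present are redistributed between the site and that particle according to the Beta-combinatorics generated by the uniform split, and packets ride the particle when it moves, until all of them land in $\mathcal B_L$, each then contributing the matching moment of the bath exponential. Enriching this to mixed moments is only bookkeeping — a packet may simply be born on a particle (accounting for a factor $\eta_k^{l_k}$) and thereafter behave as a walker carried by that particle — and the two facts established for Theorem~\ref{thm2} transfer intact: (i) under diffusive scaling each packet path converges to Brownian motion killed on $\partial\mathcal D$, exiting at an $\omega_x$-distributed point with $\int T\,d\omega_x=u(x)$; and (ii) distinct packets decorrelate as $L\to\infty$ — at once for $d\ge3$, where independent walks from a common site almost surely avoid each other, and for $d\le2$ through the exchangeability built into the redistribution rule — so the combinatorial weights collapse in the limit to exactly the factorials and powers of $u(x)$ named above.

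The one genuinely new difficulty, and what I expect to be the crux, is that $A$ constrains particle positions at the terminal time while the packets must be launched backwards from those same positions; I must show this constraint washes out. To do so I would exploit the separation of time scales: the packets hit $\mathcal B_L$ only after a backward time of order $L^2$, whereas $A$ — indeed the whole occupancy field near $\langle xL\rangle$ — is, up to a total-variation error tending to $0$ as $\epsilon\to0$ uniformly in $L$, measurable with respect to the particle trajectories over the final $\epsilon L^2$ units of time, by the $\Theta(L^2)$ mixing of the autonomous position chain. Conditioning on that short window alters each packet path only on an initial interval of length $O(\epsilon L^2)$, an $O(\sqrt\epsilon)$ fraction of its rescaled lifetime, so the exit point — read off from the macroscopic remainder — still converges to $\omega_x$ irrespective of the value $K_j$; letting $L\to\infty$ and then $\epsilon\to0$ makes the conditioning disappear. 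Combined with the moment computation this gives convergence of every mixed moment to the asserted product, hence — through the moment-determinacy reduction — the weak convergence in part~(2); as a consistency check the energy limit comes out independent of $(K_1,\dots,K_s)$, precisely as the two parts together require.
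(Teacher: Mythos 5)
Your overall architecture --- the Poisson occupancy limit for (1), reduction of (2) to mixed moments, dual packets with some packets ``born on'' the designated particles, and decorrelation of packet exit points --- is the same as the paper's, and part (1) as well as the moment-determinacy reduction are fine. The genuine gap is in the step you yourself flag as the crux: transferring the conditioning on the terminal occupancy event $A$ to the dual. Your justification rests on two claims: that $A$ is, up to small total variation, measurable with respect to the final $\epsilon L^2$ window ``by the $\Theta(L^2)$ mixing of the position chain,'' and that conditioning on that window ``alters each packet path only on an initial interval of length $O(\epsilon L^2)$.'' The first claim is trivially (indeed exactly) true, since $A$ is a function of the terminal positions alone, and mixing plays no role; to the extent you intend the converse --- that the pre-window trajectory is undisturbed by the conditioning --- an appeal to mixing over a window of length $\epsilon L^2$ goes the wrong way, because that window is far below the mixing time. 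The second claim is precisely what must be proved, and it is asserted rather than derived: conditioning on an event about the terminal configuration a priori biases the entire trajectory. What makes it true is stationarity plus reversibility of the particle-position process --- given the terminal configuration, the time-reversed particle trajectories are again independent walks started from that configuration --- and this is exactly what the paper's pathwise duality encodes: $\mathbb P[\bm\sigma]=\mathbb P[\bm\sigma^{-1}]$ converts the indicator of the terminal configuration of $\bm X_t$ into a restriction on the initial particle configuration $\bar Y$ of the dual (Lemma \ref{lemmaduality2} applied to $F'=F\cdot 1_Q$, leading to Lemma \ref{lemma:last}). With that identity the conditioning is handled exactly, and no $\epsilon$-window or interchange of limits is needed; without it, your washing-out step has no proof.

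A second, smaller omission: after the transfer, the dual packet process starts from a particle configuration that is no longer uniform (restricted to $Q$ in the paper's version, or to whatever state prevails at your window boundary), so you must verify that Proposition \ref{lemma:main} still holds with the expectation taken over this conditioned initial configuration, uniformly over the admissible boundary data in your formulation. For $d\ge 2$ this is immediate, since the proof of Proposition \ref{lemma:main} never uses the law of $\bar Y$; but for $d=1$ the exchangeability/switching argument has to be re-examined under the restriction. The paper explicitly flags and checks this point, while your proposal does not address it.
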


\medskip
The notion of LTE considered so far describes marginal energy distributions in regions of 
 microscopic sizes. These results can be extended to a version of LTE at {\it mesoscopic} scales:

\medskip
\begin{theorem}[\bf LTE at mesoscopic scales]
\label{thm3}
For any $\vartheta \in (0,1)$, the results in Theorems 2 and 3 remain valid
if we replace 

\quad  `` $(\langle xL\rangle + v)_{v \in S}$ where $S \subset \mathbb Z^d$
is any finite set " 

\noindent
by \quad  \ `` $({\langle xL + L^{\vartheta}v \rangle })_{v \in S} $
where  $S \subset \mathbb R^d$  is any finite set  "  .
\end{theorem}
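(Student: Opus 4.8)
The plan is to re-run the proofs of Theorems~\ref{thm2} and~\ref{thm:dense} with the microscopic observation sites $\langle xL\rangle + v$, $v\in S\subset\mathbb Z^d$, replaced by the mesoscopic sites collected in $\Sigma_L := \{\langle xL+L^{\vartheta}v\rangle : v\in S\}$, $S\subset\mathbb R^d$ finite, and to check that the two analytic inputs on which those proofs rest survive the replacement; everything else in them — the passage from dual hitting distributions back to the energy marginals, the indistinguishable-particle bookkeeping, and, in the positive-density regime, the Poisson limit for particle numbers — is purely formal and transcribes verbatim. Recall that duality converts the joint law under $\mu^{(L)}$ of the energies (and, for Theorem~\ref{thm:dense}, particle energies and numbers) at a finite collection of sites into a statement about the dual packet process launched from those sites, and that the conclusions of Theorems~\ref{thm2}--\ref{thm:dense} use only: \emph{(i)} the hitting distribution on $\mathcal B_L$ of a single packet, pushed forward to $\partial\mathcal D$ by $v\mapsto v/L$, converges to the harmonic measure $\omega_x$ of $\mathcal D$ seen from $x$ — uniformly over starting points within $o(L)$ of $\langle xL\rangle$ — together with the KMP-type mechanism by which the uniform energy splittings met along a single packet's trajectory produce, in the limit, the exponential law $\mathcal E_x$; and \emph{(ii)} packets launched from distinct sites have asymptotically independent joint hitting distributions. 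I claim both \emph{(i)} and \emph{(ii)} hold for $\Sigma_L$.

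For \emph{(i)}: since $\vartheta<1$ we have $\dist(\Sigma_L,\langle xL\rangle)=O(L^{\vartheta})=o(L)$, so every $y=y(L)\in\Sigma_L$ satisfies $y/L\to x$, an interior point of $\mathcal D$; and the convergence of the discrete harmonic measure of $\mathcal D_L$ from $y$ to $\omega_x$, uniformly over $y$ with $y/L$ in a compact neighbourhood of $x$, is essentially what underpins the proof of Theorem~\ref{thm1}. The single-packet splitting mechanism is unaffected by a shift of the start by $o(L)$: it is governed by the walk before absorption, whose transition rule is translation invariant away from $\mathcal B_L$ and whose relevant features live on the macroscopic scale $L$, independently of where the walk starts. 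Finally, for $L$ large the points of $\Sigma_L$ are pairwise distinct, lie in $\mathcal D_L$, and any two of them are separated by a distance $\asymp L^{\vartheta}\|v_i-v_j\|\to\infty$.

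Input \emph{(ii)} is the substance, and here the mesoscopic separation helps rather than hurts. I would couple the genuine (interacting) dual dynamics launched from $\Sigma_L$ to $|S|$ independent copies of the single-packet walk issued from the same points, the two evolutions agreeing until the first time two packets share a site (or come within the fixed interaction range of the splitting rule); it then suffices to bound the probability $q_L$ that some two packets ever meet before all are absorbed at $\mathcal B_L$. For $d\ge 3$, the killed-walk Green's function estimate gives $q_L\lesssim (L^{\vartheta})^{2-d}\to 0$, so asymptotic independence is immediate — indeed the mesoscopic case is strictly easier than the microscopic one, where only $q_L<1$ is available. For $d=1$ and $d=2$ meeting cannot be excluded, and one invokes, exactly as in the proof of Theorem~\ref{thm2}, the low-dimensional device — exchangeability of the packet labels for $d=1$, and the corresponding argument for $d=2$ — which expresses a symmetry of the labelling and of the energy-splitting rule and is insensitive to how far apart the packets are launched, so it applies verbatim to $\Sigma_L$. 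The main obstacle I anticipate lies not in this structure but in making two of its assertions fully rigorous: the stated uniformity of harmonic-measure convergence in \emph{(i)}, and — for $d=2$ — the claim that the low-dimensional decoupling step in the proof of Theorem~\ref{thm2} nowhere used the boundedness of the inter-site distances. Both should follow from estimates already in hand, but they are the points to be verified rather than merely copied.
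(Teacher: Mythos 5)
Your proposal is correct and follows essentially the paper's own route: the paper likewise simply re-runs the proofs of Theorems \ref{thm2} and \ref{thm:dense}, using Proposition \ref{lem12} (which already allows starting points within $\delta L$ of $xL$) for the single-packet hitting distributions, and checking that the dimension-dependent decoupling arguments tolerate initial separations of order $L^{\vartheta}$. Two side assertions should be corrected, though: in $d=2$ the paper's device is not a labelling symmetry but the two-length-scale comparison with independent walks, and the verification needed there is just that a larger initial separation cannot increase the probability of an encounter (after a first meeting the analysis is independent of the initial separation); and in $d=1$ exchangeability is not literally insensitive to the launch separation --- one needs the gambler's-ruin estimate showing that packets started $O(L^{\vartheta})$ apart with $\vartheta<1$ still meet, so that a viable switching occurs with probability tending to $1$, which is exactly the point the paper makes.
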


\medskip
\section{Preliminaries from probability theory}
\label{sec:prob}

We collect in this section some facts from probability theory that will be used.
All the results cited are known, possibly with the
exception of Proposition \ref{lem12}.

\subsection{Random walks}
\label{subsect:rw}

Here, we formulate some basic lemmas about random walks. We will use the terminology 
{\it random walk} for any Markov chain of the form
$$S_n = \sum_{k=0}^n \xi_k,$$
where the $\xi_k$'s are independent, identically distributed random variables with values in $\mathbb Z^d$ ($d$ is the dimension
of the random walk). The special case where $\xi_1$ is 
supported on the origin and its $2d$ nearest neighbors is called 
a {\it nearest neighbor random walk}, and  the case where 
$\xi_1$ is uniformly distributed on the $2d$ nearest neighbors of the origin is called
a {\it simple symmetric random walk} (SSRW).  The following statement is arguably the most important property of random walks with finite variance.

\begin{lemma}[Invariance principle]
\label{lemma:invpr}
 Consider a $d$-dimensional random walk 
$S_n = \sum_{k=0}^n \xi_k$ where $\xi_1$ has zero expectation and 
finite covariance matrix $\Sigma$, and let $W_n =(W_n(t))_{t \in [0,1]}$
be the random process 
defined by $W_n(\frac{k}{n}) =\frac{S_k}{ \sqrt k}$ and linear interpolations 
between $\frac{k}{n}$ and $\frac{k+1}{n}$. Then as $n \rightarrow \infty$,
$W_n$ converges weakly to the $d$-dimensional 
Brownian motion on $[0,1]$ with covariance matrix $\Sigma$. 
\end{lemma}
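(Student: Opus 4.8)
This statement is the classical functional central limit theorem (Donsker's invariance principle), and in a finished write-up the most economical thing is to cite a standard source such as Billingsley, \emph{Convergence of Probability Measures}. Were one to prove it from scratch, the plan is to establish two things and then invoke Prokhorov's theorem: (i) the finite-dimensional distributions of $W_n$ converge to those of the Brownian motion $W$ with covariance matrix $\Sigma$, and (ii) the laws of $\{W_n\}$ are tight in $C([0,1],\mathbb R^d)$. Together these force $W_n \Rightarrow W$, the limit being uniquely pinned down since Brownian motion is determined by its finite-dimensional marginals.

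For (i), fix $0 = t_0 < t_1 < \dots < t_m \le 1$. The interpolation corrections and integer-part roundings contribute only $O(n^{-1/2})$, so up to such errors $W_n(t_j) - W_n(t_{j-1})$ equals $n^{-1/2}\big(S_{\lfloor n t_j\rfloor} - S_{\lfloor n t_{j-1}\rfloor}\big)$, a normalized sum of roughly $n(t_j - t_{j-1})$ of the i.i.d.\ $\xi_k$. These increments are independent across $j$ because they use disjoint blocks of the $\xi_k$, and by the multivariate Lindeberg--L\'evy CLT each converges in law to an independent Gaussian vector with mean $0$ and covariance $(t_j - t_{j-1})\Sigma$. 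Hence the joint law of $\big(W_n(t_1),\dots,W_n(t_m)\big)$ converges to that of $\big(W(t_1),\dots,W(t_m)\big)$.

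For (ii), since a subset of $C([0,1],\mathbb R^d)$ is relatively compact precisely when each of its coordinate projections is, it suffices to prove tightness coordinate by coordinate, i.e.\ for a one-dimensional walk of variance $\sigma^2$. The criterion to verify is that for every $\varepsilon>0$,
$$
\lim_{\delta \to 0}\ \limsup_{n\to\infty}\ \mathbb P\Big(\sup_{|s-t|\le \delta}\ |W_n(s)-W_n(t)| > \varepsilon \Big) = 0 .
$$
Partitioning $[0,1]$ into $O(1/\delta)$ blocks of length $\delta$, on each block the oscillation of $W_n$ is comparable to $n^{-1/2}\max_{k \le n\delta}|S_k|$ for a walk of that length; Chebyshev together with $\mathrm{Var}(S_{\lfloor n\delta\rfloor}) = \lfloor n\delta\rfloor\,\sigma^2$ and a maximal inequality (Ottaviani's, or the second-moment bound used in Billingsley's tightness proof) control each such probability uniformly in $n$. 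I expect this to be the one genuinely delicate point: a naive union bound over the $1/\delta$ blocks with a per-block estimate of order $\delta$ is not yet small, so one must use the refined tightness criterion (control per block with a small constant, as in Billingsley's argument) to get a bound that actually vanishes as $\delta \to 0$.

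An alternative that bypasses the tightness estimate is Skorokhod embedding: in $d=1$ realize $S_n = B(\tau_n)$ with $\tau_n = \sum_{i\le n}\varsigma_i$, the $\varsigma_i$ i.i.d.\ with $\mathbb E\varsigma_i = \sigma^2$; the strong law gives $\tau_{\lfloor nt\rfloor}/n \to \sigma^2 t$ uniformly on $[0,1]$, and the uniform continuity of $B$ on compacts, combined with Brownian scaling, then yields $\sup_{t\in[0,1]}|n^{-1/2}S_{\lfloor nt\rfloor} - W(t)| \to 0$ in probability for a suitably scaled Brownian motion, which is even stronger. For $d\ge 2$ one embeds the vector-valued walk (or argues coordinatewise after reducing via the Cram\'er--Wold device for the finite-dimensional convergence) and otherwise cites the result. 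Either route gives the statement, and the covariance of the limit is read off from the CLT in step (i).
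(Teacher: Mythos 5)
Your proposal is correct: this is just the classical Donsker invariance principle, and the paper itself gives no proof — it records the lemma as a known fact (Billingsley's book is already in its bibliography), exactly as you suggest doing. Your backup sketch (finite-dimensional convergence via Lindeberg--L\'evy plus Cram\'er--Wold, tightness via Billingsley's refined block criterion, or alternatively Skorokhod embedding) is the standard textbook argument and is sound, so nothing further is needed.
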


\begin{lemma}[Harmonic measure]  \label{lemma:harmonic}
 Let $x \in \mathcal D \subset \mathbb R^d$ 
where $\mathcal D$ is as above, and let $\tau$ be the first hitting time of $\partial 
\mathcal D$ for a Brownian motion $B^x_t$ starting from $x$. Then
$$
\mathbb E (T(B^x_\tau)) = u(x)
$$
where $u$ is given by (\ref{eq:defu}).
\end{lemma}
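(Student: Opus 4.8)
\emph{Proof proposal.} This is the classical stochastic representation of the solution of the Dirichlet problem, so the plan is to identify $v(x):=\mathbb E\big(T(B^x_\tau)\big)$ with the harmonic function $u$ of $(\ref{eq:defu})$ by an optional-stopping argument. First I would record that $(\ref{eq:defu})$ has a unique solution $u\in C^2(\mathcal D)\cap C(\overline{\mathcal D})$: uniqueness is immediate from the maximum principle, and existence follows from Perron's method once one knows that every point of $\partial\mathcal D$ is a regular boundary point. When $\partial\mathcal D$ is a $C^2$ submanifold this is the exterior ball condition; when $\mathcal D$ is a rectangle the only non-smooth points are the corners, which satisfy an exterior cone (Zaremba) condition and are therefore also regular. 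In both cases $\mathcal D$ is a Lipschitz domain, so $u$ extends continuously to $\overline{\mathcal D}$ with $u|_{\partial\mathcal D}=T$, and moreover $\lim_{y\to z,\ y\in\mathcal D}u(y)=T(z)$ at every $z\in\partial\mathcal D$.

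Next, since $\mathcal D$ is bounded, a $d$-dimensional Brownian motion started at $x\in\mathcal D$ leaves $\mathcal D$ in finite time almost surely (e.g.\ its first coordinate is an unbounded $1$-D Brownian motion), so $\tau<\infty$ a.s.\ and $B^x_\tau\in\partial\mathcal D$. Because $u$ is harmonic and $C^2$ inside $\mathcal D$, Itô's formula applied on the subdomains $\mathcal D_\varepsilon=\{y:\dist(y,\partial\mathcal D)>\varepsilon\}$ and a passage to the limit $\varepsilon\downarrow 0$ show that $\big(u(B^x_{t\wedge\tau})\big)_{t\ge 0}$ is a martingale; it is bounded since $u$ is bounded on the compact set $\overline{\mathcal D}$. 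Hence $u(x)=\mathbb E\big(u(B^x_{t\wedge\tau})\big)$ for every $t$, and letting $t\to\infty$, using $B^x_{t\wedge\tau}\to B^x_\tau$ a.s., the continuity of $u$ on $\overline{\mathcal D}$, and bounded convergence, we obtain $u(x)=\mathbb E\big(u(B^x_\tau)\big)=\mathbb E\big(T(B^x_\tau)\big)=v(x)$, which is the claim.

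The only genuinely delicate point is the boundary regularity invoked in the first step: that $u$ attains its boundary data continuously and, correspondingly, that $B^x_\tau$ realizes the boundary value $T$ in the limit. For $C^2$ domains this is standard via barriers built from exterior balls; for the rectangle it requires the separate observation that corners are regular, which follows from the cone condition. Alternatively, one may avoid the PDE-existence discussion by instead \emph{defining} $u:=v$, verifying directly that $v$ satisfies the mean-value property in $\mathcal D$ (via the strong Markov property at the exit time of a small ball), hence is harmonic, and that $v$ has the correct boundary limits at regular points, and then invoking uniqueness. I would present whichever of these two routes is shorter given the citations the paper already uses.
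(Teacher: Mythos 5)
Your argument is correct: the It\^o/optional-stopping derivation (with the boundary-regularity remarks for the $C^2$ case and for the rectangle's corners via the exterior cone condition) is the standard proof of this classical representation, and the alternative route of defining $u$ as the exit expectation and verifying the mean-value property would work equally well. Note that the paper does not prove this lemma at all --- it is listed among the known facts in Section 3 and used as a citation --- so your proposal simply supplies the standard argument the paper implicitly relies on, and it contains no gaps.
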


The next result is a combination of the last two, together with a small perturbation
in starting location. A version of this result likely exists in the literature, but we are 
unable to locate a reference. Since our proofs rely heavily on 
Proposition \ref{lem12}, we have included its proof in the Appendix.

\begin{proposition} \label{lem12}
Let $x \in \mathcal D \subset \mathbb R^d$ be as above, and let
$\varepsilon>0$ be given. Then there exist
$\delta>0$ and $L_0$ such that the following holds true for all $L>L_0$: 
 Let $S_n$ be a SSRW on $\mathbb Z^d$ with
$\| S_0 - xL\| < \delta L$, and let $\tau$ be the smallest
$n$ such that $S_n \in \mathcal B_L$. Then 
\begin{equation} \label{approx}
 \left| \mathbb E \left(T \left( \frac{S_{\tau}}{L} \right) \right) -u(x)\right| < \varepsilon.
 \end{equation}
\end{proposition}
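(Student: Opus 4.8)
The plan is to chain together Lemma \ref{lemma:invpr} and Lemma \ref{lemma:harmonic}, handling three sources of error: (i) replacing the discrete walk $S_n$ on $\mathcal D_L$ by a Brownian motion in $\mathcal D$ via rescaling, (ii) replacing the stopping time $\tau$ (first entry into $\mathcal B_L$) by the first exit time of $L\mathcal D$, i.e. dealing with the discrepancy between the lattice boundary strip $\mathcal B_L$ and $\partial(L\mathcal D)$, and (iii) absorbing the perturbation in the starting point: $S_0$ is only within $\delta L$ of $xL$, not exactly at $xL$. For (iii) I would note that $u$, being harmonic on $\mathcal D$ with continuous boundary data $T$, is continuous on $\overline{\mathcal D}$ (and in fact uniformly continuous), so $\mathbb E(T(B^y_\tau)) = u(y)$ by Lemma \ref{lemma:harmonic} is within $\varepsilon/4$ of $u(x)$ provided $\|y - x\|$ is small; this is what fixes the choice of $\delta$, together with the requirement in (i)--(ii) below. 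One subtlety: if $x$ is close to $\partial \mathcal D$ the ball of radius $\delta L$ around $xL$ may stick out of $L\mathcal D$; I would simply intersect with $\mathcal D_L$, and since we only care about walks started inside $\mathcal D_L$ this costs nothing, but it does mean $\delta$ should be taken after $x$ is fixed (which the statement allows).

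For the core estimate I would fix a starting point $y$ with $\|y - x\| < \delta$ and couple the rescaled walk $\frac{1}{L}S_{\lfloor L^2 t\rfloor}$ started at $Ly$ with a Brownian motion $B_t$ started at $y$ (covariance $\frac{1}{d}I$ for the SSRW, which is irrelevant since $u$ does not depend on the diffusion constant — time-change invariance of harmonic functions). By the invariance principle the rescaled walk run for time $O(1)$ converges weakly, in $C[0,\mathrm{const}]$, to $B$; then I would transfer this to convergence of the stopped position $\frac{1}{L}S_\tau$ to $B_{\tau_{\mathcal D}}$ using the standard fact that, for a domain with $\mathcal C^2$ boundary (or a rectangle), the exit time and exit position are a.s. continuous functionals of the path — no part of $\partial\mathcal D$ is ``invisible'' to Brownian motion, since such domains are regular for the Dirichlet problem. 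Because $T$ (extended continuously to a neighbourhood of $\partial\mathcal D$, as in the model definition) is bounded and continuous, $\mathbb E(T(\frac{1}{L}S_\tau)) \to \mathbb E(T(B_{\tau_{\mathcal D}})) = u(y)$. Combining with $|u(y) - u(x)| < \varepsilon/4$ and choosing $L_0$ large enough that the invariance-principle error is $< \varepsilon/4$ uniformly over the (compact set of) admissible starting points, we get \eqref{approx}. A mild point to check is uniformity in $y$: since the set of admissible starting directions $\frac{y - x}{\|y-x\|}$ together with the closed ball is compact and the limit $u$ is continuous, a standard equicontinuity/compactness argument upgrades the pointwise weak convergence to uniformity; alternatively one re-runs the argument with $x$ replaced by a worst-case $y$ and shrinks $\delta$.

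I also need to rule out the walk wandering for an anomalously long time before hitting $\mathcal B_L$ in a way that breaks the coupling on a finite time window $[0,\mathrm{Tmax}]$. This is controlled by choosing $\mathrm{Tmax}$ large: the probability that Brownian motion started in $\mathcal D$ has not exited $\mathcal D$ by time $\mathrm{Tmax}$ is small (exponentially in $\mathrm{Tmax}$, since $\mathcal D$ is bounded), and the corresponding statement for the rescaled walk follows again from the invariance principle plus a Markov/restart argument; on the rare event that the walk has not exited, $T(\frac{1}{L}S_\tau)$ is still bounded by $\sup T < \infty$, so its contribution to the expectation is negligible.

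The main obstacle I anticipate is item (ii): matching the combinatorial lattice boundary $\mathcal B_L = \partial\mathcal D_L \setminus \mathcal D_L$ with the smooth boundary $\partial(L\mathcal D)$ and arguing that the hitting \emph{position} of the walk, not just its law in the bulk, converges. This requires (a) Skorokhod-type coupling so that $\sup_{t \le \mathrm{Tmax}} |\frac{1}{L}S_{\lfloor L^2 t\rfloor} - B_t|$ is small with high probability, and (b) the regularity of $\partial\mathcal D$ so that exit time and exit point are continuous functions of the Brownian path at $B$ — equivalently, that Brownian motion does not hit $\partial\mathcal D$ tangentially with positive probability and that $\partial\mathcal D$ has no thin spikes. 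For a $\mathcal C^2$ hypersurface or a rectangle these are classical, but writing the argument carefully — in particular that the $O(1/L)$-thick lattice strip $\mathcal B_L$ is entered at essentially the same place the Brownian motion exits $L\mathcal D$ — is where most of the work sits, and is presumably why the authors relegate this Proposition's proof to an appendix.
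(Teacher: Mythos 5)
Your proposal is correct and follows essentially the same route as the paper's appendix: rescale, apply the invariance principle together with the continuous mapping theorem, truncate at a finite time horizon $\mathfrak T$ chosen so that Brownian motion has exited (a slight enlargement of) $\mathcal D$ with probability close to $1$ and use boundedness of $T$ on the complement, and absorb the $\delta L$-perturbation of the starting point by a compactness/translation argument (the paper does this via a subsequence contradiction, noting $W^{x_k}_{L_k^2}=(x_k-x)+W^x_{L_k^2}\Rightarrow B^x$, rather than via continuity of $u$, but the two are interchangeable). The one ingredient you cite as classical --- that for a $\mathcal C^2$ or rectangular boundary the exit time and exit position are continuous functionals of the path at almost every Brownian trajectory, i.e.\ the path crosses $\partial\mathcal D$ immediately upon first touching it --- is precisely what the appendix proves in detail (Lemma \ref{discontinuity}), via a paraboloid bound at the hitting point and a Borel--Cantelli/excursion argument.
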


\smallskip

We will also use the following estimate on moderate deviations. It is a consequence of e.g. Theorem 1 in \cite{P75}, Section VIII.2:

\begin{lemma}
\label{lemma2}
 Let $\xi_n$ be bounded i.i.d real-valued random variables with variance $\sigma^2>0$. 
 There is a constant $c_1$ such that for any $n \ge 1$ and
for any $R$ with $1 < R < n^{1/6}$,
$$ \mathbb P \left( \big| \sum_{k=1}^n \xi_k - n \mathbb E(\xi) \big|
  > R \sqrt n \right) < c_1 e^{- \frac{R^2}{2 \sigma^2}}.$$
\end{lemma}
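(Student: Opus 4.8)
The plan is to prove the estimate by the classical exponential-tilting (Chernoff) method, with the moderate-deviation range $R<n^{1/6}$ entering precisely to control the cubic term in the cumulant expansion. First I would reduce to the centered case: replacing each $\xi_k$ by $\xi_k-\mathbb E(\xi)$ leaves both the variance $\sigma^2$ and the boundedness intact, so I may assume $\mathbb E(\xi)=0$ and fix $b$ with $|\xi|\le b$ almost surely. Writing $S_n=\sum_{k=1}^n\xi_k$, the first step is the one-sided Markov bound: for every $\lambda>0$,
\[
\mathbb P\big(S_n>R\sqrt n\big)\ \le\ e^{-\lambda R\sqrt n}\,\big(\mathbb E(e^{\lambda\xi})\big)^n .
\]

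The heart of the argument is a uniform estimate of the log--moment generating function. Since $\xi$ is bounded with mean zero and variance $\sigma^2$, a Taylor expansion of $e^{\lambda\xi}$ furnishes a threshold $\lambda_0>0$ and a constant $C_3$, both depending only on $b$, such that
\[
\log \mathbb E(e^{\lambda\xi})\ \le\ \tfrac{1}{2}\sigma^2\lambda^2 + C_3|\lambda|^3
\qquad\text{for all } |\lambda|\le\lambda_0 .
\]
Here boundedness is used twice: to dominate the third and higher absolute moments by powers of $b$, and to pass from the estimate on $\mathbb E(e^{\lambda\xi})-1$ to one on its logarithm. I would then make the Gaussian-optimal choice $\lambda=R/(\sigma^2\sqrt n)$, which minimizes the leading exponent $-\lambda R\sqrt n+\tfrac12 n\sigma^2\lambda^2$ and yields the value $-R^2/(2\sigma^2)$.

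The role of the constraint $R<n^{1/6}$ is exactly to keep both requirements of the previous step in force. With the chosen $\lambda$ one has $\lambda=R/(\sigma^2\sqrt n)\le n^{1/6}/(\sigma^2\sqrt n)=1/(\sigma^2 n^{1/3})$, so $\lambda\le\lambda_0$ holds for all $n$ beyond a fixed $n_1$; and the cubic correction is
\[
n\,C_3\,\lambda^3\ =\ \frac{C_3\,R^3}{\sigma^6\sqrt n}\ <\ \frac{C_3}{\sigma^6},
\]
where the last inequality uses $R^3<n^{1/2}$, that is $R<n^{1/6}$. Substituting into the Markov bound gives
\[
\mathbb P\big(S_n>R\sqrt n\big)\ \le\ \exp\!\Big(-\tfrac{R^2}{2\sigma^2}+\tfrac{C_3}{\sigma^6}\Big)
\ =\ e^{C_3/\sigma^6}\,e^{-R^2/(2\sigma^2)} .
\]

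Applying the same argument with $-\lambda$ to the lower tail and adding the two bounds produces the two-sided estimate with $c_1=2e^{C_3/\sigma^6}$ for $n\ge n_1$. For the finitely many $n<n_1$, the condition $1<R<n^{1/6}$ confines $R$ to a bounded range, so $e^{-R^2/(2\sigma^2)}$ is bounded below by a positive constant; since the probability never exceeds $1$, the inequality holds after enlarging $c_1$. The only delicate point is the uniformity of the cumulant estimate: I expect the main obstacle to be verifying that the $O(\lambda^3)$ remainder is dominated by $C_3|\lambda|^3$ with a constant independent of $n$ over the full admissible range of $\lambda$, and confirming that $R<n^{1/6}$---rather than any weaker growth bound---is what renders the resulting correction $n\lambda^3$ uniformly bounded and hence harmless.
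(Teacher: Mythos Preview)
Your proof is correct. The Chernoff/exponential-tilting argument you outline is the standard elementary route to this moderate-deviation bound: the Taylor expansion of the cumulant generating function is justified by boundedness, the optimal choice $\lambda=R/(\sigma^2\sqrt n)$ yields the Gaussian exponent, and the constraint $R<n^{1/6}$ is precisely what makes $n\lambda^3=O(1)$ so the cubic correction is absorbed into the constant $c_1$. Your handling of the finitely many small $n$ by enlarging $c_1$ is also fine.

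The paper, however, does not prove this lemma at all: it is simply stated as a consequence of Theorem~1 in Section~VIII.2 of Petrov's book \cite{P75}, which gives a full Cram\'er-type asymptotic expansion for moderate deviations. So your approach is genuinely different in that it is self-contained and more elementary, trading the sharper asymptotic information in Petrov for a direct Chernoff bound that already delivers exactly the inequality needed here. For the purposes of this paper --- where the lemma is only used to show that certain tail probabilities are $O(L^{-100})$ --- your argument is entirely adequate and arguably preferable for its transparency.
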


The reason for the present review is that instead of
studying $\bm X^{(L)}_t$ directly, we will transform the problem into one involving
certain stochastic processes in which a finite number of walkers perform SSRW on
$\mathbb Z^d$. These walks are independent when the walkers are at distinct sites,
but when they meet, there is a tendency for them to stick together for some random time.
We will need to show that in terms of hitting distributions such as those in Lemma 
\ref{lemma:harmonic}, the situation in the $L \to \infty$ limit is as though the walks were 
independent. This is clearly related to the question of how often two walkers 
meet, a property well known to be dimension dependent. 
\begin{enumerate}
 \item $d=1$. The local time of a SSRW in dimension $1$ at the origin up to time $n$ 
 is $\sim \sqrt n$. More precisely, the local time up to $n$, rescaled by $\sqrt n$
converges weakly to the absolute value of the standard normal distribution (see \cite{CH49}).
Since the typical time needed to leave the interval 
 $[-L,L]$ is $O(L^2)$ by the invariance principle, two independent walkers 
will meet $\sim L$ times before leaving the interval $[-L,L]$. 
%
 \item $d=2$. The local time of  a  SSRW in dimension $2$ at the origin up to time $n$ 
 is $\sim \log n$, so that two independent walkers meet $\sim \log L$ times before 
 leaving the domain $\mathcal D_L$. 
 
 \item $d>2$.  SSRWs  in dimensions $d>2$ are transient, meaning two independent 
 walkers only meet finitely many times.
\end{enumerate}
These observations have prompted us to proceed as follows: We will first treat the
$d=2$ case, by comparing the process in question 
to independent walkers. Once that is done, we will 
observe that a simplified version of the argument gives immediately results for $d>2$. 
Dimension one is treated differently: The large number of encounters 
makes it difficult to compare the stochastic processes above to independent 
walkers. Instead, we make use of the meetings of the walkers to show 
that their identities can be switched; see Sect. 3.3.

We collect below two other dimension-dependent facts 
about random walks that will be used in the sequel.

\begin{lemma}[\bf $d=2$] \label{lemma:mesoscop}
Let $\mathcal D \subset \mathbb R^2$ be above. We fix $\beta \in (0,1)$,
$C<\infty$ 
and $\varepsilon >0$, and let $W_n$ be a SSRW on $\mathbb Z^d$. Then 
\begin{equation}
\label{eq:annuli}
 P \left( \min_n \{ \| W_n \| \gtrsim  L \} < \min_n \{  W_n = 0 \}\ | \ 
\| W_0\| \in [ L^{\beta} -C, L^{\beta}]  \right)  > \beta -\varepsilon
\end{equation}
for all $L$ large enough.
\end{lemma}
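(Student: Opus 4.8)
The plan is to reduce the statement to the classical gambler's-ruin estimate for planar random walk, exploiting the fact that the potential kernel of the SSRW is \emph{exactly} harmonic off the origin and grows logarithmically. Recall that the potential kernel $a:\mathbb{Z}^2\to\mathbb{R}$ of the two-dimensional SSRW satisfies $a(0)=0$, is harmonic on $\mathbb{Z}^2\setminus\{0\}$ for the walk's generator, and obeys $a(x)=\frac{2}{\pi}\log\|x\|+\kappa+O(\|x\|^{-2})$ as $\|x\|\to\infty$ for an explicit constant $\kappa$ (see e.g.\ standard references on planar random walk, such as Spitzer or Lawler--Limic). Let $c>0$ be the implied constant in ``$\gtrsim L$'', put $A_L=\{x\in\mathbb{Z}^2:\|x\|\ge cL\}$, and let $\tau_0$, $\tau_A$ be the hitting times of $0$ and of $A_L$, with $\tau=\tau_0\wedge\tau_A$. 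Since the walk a.s.\ leaves the finite ball $\{\|x\|<cL\}$, we have $\tau<\infty$ a.s., and the event in \eqref{eq:annuli} is precisely $\{\tau_A<\tau_0\}$.

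The key step is that $M_n=a(W_{n\wedge\tau})$ is a bounded martingale: it is a martingale because $a$ is harmonic at every site of $\mathbb{Z}^2\setminus\{0\}$ and the walk is stopped before reaching $0$, and it is bounded because up to time $\tau$ the walk stays in the ball of radius $cL+1$, on which $a$ is bounded. Optional stopping then gives $a(W_0)=\mathbb{E}[a(W_\tau)]$. Splitting according to which set is hit first, using $a(W_{\tau_0})=a(0)=0$, and using that at $\tau_A$ the overshoot is at most one step so $\|W_{\tau_A}\|\in[cL,cL+1]$, the asymptotic expansion of $a$ yields
$$a(W_0)=\mathbb{E}\!\left[a(W_{\tau_A})\,\mathbf{1}_{\{\tau_A<\tau_0\}}\right]=\mathbb{P}(\tau_A<\tau_0)\left(\tfrac{2}{\pi}\log(cL)+\kappa+O(L^{-1})\right),$$
and hence, applying the expansion to $a(W_0)$ as well,
$$\mathbb{P}(\tau_A<\tau_0)=\frac{\tfrac{2}{\pi}\log\|W_0\|+\kappa+O(\|W_0\|^{-2})}{\tfrac{2}{\pi}\log(cL)+\kappa+O(L^{-1})}.$$

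To conclude, feed in the hypothesis $\|W_0\|\in[L^{\beta}-C,L^{\beta}]$: then $\log\|W_0\|=\beta\log L+o(1)$ uniformly over all such starting points, while $\log(cL)=\log L+O(1)$, so the right-hand side equals $\beta+O(1/\log L)$ and in particular exceeds $\beta-\varepsilon$ for all $L$ large enough. (Only the lower bound $\|W_0\|\ge L^{\beta}-C$ is actually used.) An alternative route would be to invoke Lemma \ref{lemma:invpr}, compare $\|W_n\|$ with a time-changed two-dimensional Brownian motion, and use that $\log\|\cdot\|$ is harmonic for the latter; but that argument has to control the behaviour near the origin, where the diffusive coupling degrades, so the potential-kernel route is cleaner. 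The only genuinely delicate point here is the error bookkeeping: one must use the sharp $O(\|x\|^{-2})$ remainder in the potential-kernel expansion together with the one-step overshoot bound at $A_L$ to be sure these corrections are negligible against $\log L$; everything else is routine.
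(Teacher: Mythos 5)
Your argument is correct. You have in effect reproved the fact that the paper simply cites: the text of the paper does not give a real proof of Lemma \ref{lemma:mesoscop} at all, but refers to Proposition 1.6.7 in \cite{L91} and then sketches a heuristic multiscale argument (the dyadic annuli $\mathcal C_{2^k}$, the observation that $\log_2\|W_{t_j}\|$ behaves like a one-dimensional SSRW, and gambler's ruin). Your route --- optional stopping for the bounded martingale $a(W_{n\wedge\tau})$ built from the potential kernel, with $a(0)=0$, the expansion $a(x)=\tfrac{2}{\pi}\log\|x\|+\kappa+O(\|x\|^{-2})$, and the one-step overshoot bound $\|W_{\tau_A}\|\in[cL,cL+1]$ --- is exactly the standard proof of that cited asymptotic, so it is a self-contained and rigorous replacement for the citation rather than a variant of the paper's sketch. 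Two small remarks: the sharp $O(\|x\|^{-2})$ remainder is a luxury here, since all error terms are being compared against $\log L$, so $a(x)=\tfrac{2}{\pi}\log\|x\|+\kappa+o(1)$ would do; and your parenthetical observation that only the lower bound $\|W_0\|\ge L^{\beta}-C$ is used is right, because increasing $\|W_0\|$ only increases the probability in \eqref{eq:annuli}, and the lemma asks only for a lower bound (the cited reference gives the two-sided convergence to $\beta$, which your identity also yields if one keeps the upper bound on $\|W_0\|$). What your approach buys is rigor and uniformity over the starting annulus with an explicit $O(1/\log L)$ error; what the paper's presentation buys is brevity, at the cost of resting entirely on the reference.
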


It is a well known fact in the probabilistic literature that 
the left hand side of (\ref{eq:annuli}) converges to $\beta$ as $L \rightarrow \infty$ 
(see for instance Proposition 1.6.7 in \cite{L91}). This implies Lemma \ref{lemma:mesoscop}.

For completeness, we provide a heuristic justification for this cited result:
Note that $W_n$ converges to a Brownian motion after rescaling. 
Since the logarithm of a planar Brownian motion is a martingale, it is not difficult to
deduce the statement that if $B_t$ is a planar Brownian motion with $B_0 \neq 0$, 
then the probability that 
$\|B_t\|$ reaches $2\|B_0\|$ before reaching $\|B_0\|/2$ is $1/2$. Consequently,
\begin{equation}
\label{eq:half}
 P \left( \text{ $\|W_n\|$ reaches $2L^{\beta}$ before reaching $L^{\beta}/2$} \right) \sim 1/2.
\end{equation}
For $k \in \mathbb Z$, let
$$\mathcal C_{2^k} = \{ v \in \mathbb Z^2 : \| v\| <2^k\}\ , $$
and define the random variables $t_j, k_j$ in the following way:
$$ t_1= \min_{n > 0} \{ \exists k: W_n \in \partial \mathcal C_{2^k}\} $$
and $k_1$ is such that $W_{t_1} \in \partial \mathcal C_{2^{k_1}}$. Inductively, we define
$$ t_j = \min_{n > t_{j-1}} \{ W_n \in \partial \mathcal C_{2^{k_{j-1}-1}}\text{ or }W_n \in \partial \mathcal C_{2^{k_{j-1}+1}}\} $$
and $k_j$ such that $W_{t_j} \in \partial \mathcal C_{2^{k_j}}$.
Similarly to (\ref{eq:half}), we see that $k_j \approx \log_2 \| W_{t_j}\|$ is approximately a one dimensional SSRW for $L$ large
with starting position $\beta \log_2 L$.
A simple computation (often referred to as gambler's ruin) gives that 
$$ P \left( \min_n \{ \log_2 \| W_{t_n}\| = \log_2 L \} < \min_n \{ \log_2 \| W_{t_n}\| = 0 \} \right) \xrightarrow{L\rightarrow \infty} \beta\ .$$

The following is a well known property of high dimensional random walks.

\begin{lemma}[$d>2$]
 \label{lemma:transience}
In dimensions $d>2$, any non-degenerate random walk is transient,
i.e.,  $\|S_n\| \rightarrow \infty$ as $n \rightarrow \infty$ with probability $1$.
\end{lemma}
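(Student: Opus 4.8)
The plan is the classical Fourier (characteristic-function) argument: show that the expected number of returns of $S_n$ to the origin is finite, deduce that every lattice point is visited only finitely often, and then upgrade this to $\|S_n\|\to\infty$ by exhausting $\mathbb Z^d$ with finite balls. First I would dispose of the drift case: if $\mathbb E(\xi_1)\neq 0$, the strong law of large numbers gives $S_n/n\to\mathbb E(\xi_1)\neq 0$ a.s., hence $\|S_n\|\to\infty$ a.s.; so we may assume $\mathbb E(\xi_1)=0$, and I will also assume $\mathbb E(\|\xi_1\|^2)<\infty$, which holds in every application of this lemma. Here ``non-degenerate'' is read as \emph{genuinely $d$-dimensional}: the support of $\xi_1$ is not contained in a coset of a proper subgroup of $\mathbb Z^d$ --- equivalently, the differences $x-x'$ with $x,x'\in\mathrm{supp}\,\xi_1$ generate a finite-index subgroup. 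In particular the support of $\xi_1$ linearly spans $\mathbb R^d$, so (with $\mathbb E(\xi_1)=0$) the covariance matrix $\Sigma=\mathbb E(\xi_1\xi_1^{\mathsf T})$ is positive definite.

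Write $\phi(\theta)=\mathbb E\big(e^{i\langle\theta,\xi_1\rangle}\big)$ for $\theta\in[-\pi,\pi]^d$, so that $P(S_n=0)=(2\pi)^{-d}\int_{[-\pi,\pi]^d}\phi(\theta)^n\,d\theta$. Summing the geometric series $\sum_n r^n\phi^n=(1-r\phi)^{-1}$ for $r<1$ (Fubini is justified since $|\phi|\le 1$), taking real parts, and letting $r\uparrow 1$ with monotone convergence on the left,
$$ G(0):=\sum_{n\ge 0}P(S_n=0)=\lim_{r\uparrow 1}\frac{1}{(2\pi)^d}\int_{[-\pi,\pi]^d}\mathrm{Re}\,\frac{d\theta}{1-r\phi(\theta)}\ . $$
Using $|1-r\phi|^2\ge(1-r\,\mathrm{Re}\,\phi)^2$ together with $1-r\,\mathrm{Re}\,\phi\ge\tfrac12(1-\mathrm{Re}\,\phi)$ for $r\in(\tfrac12,1)$, one gets $\mathrm{Re}\,(1-r\phi)^{-1}\le 2(1-\mathrm{Re}\,\phi)^{-1}$ pointwise, so it suffices to prove $\int_{[-\pi,\pi]^d}(1-\mathrm{Re}\,\phi(\theta))^{-1}\,d\theta<\infty$. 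Now $1-\mathrm{Re}\,\phi(\theta)=\mathbb E\big(1-\cos\langle\theta,\xi_1\rangle\big)\ge 0$, and it vanishes precisely when $\langle\theta,\xi_1\rangle\in 2\pi\mathbb Z$ almost surely, which by non-degeneracy happens on a finite set $F\subset[-\pi,\pi]^d$ (otherwise $\mathrm{supp}\,\xi_1$ would lie in a coset of a proper subgroup). Near any $\theta_0\in F$, using $\langle\theta_0,\xi_1\rangle\in 2\pi\mathbb Z$ a.s.\ and the Taylor expansion of $\cos$ (with remainder controlled by $\mathbb E\|\xi_1\|^2<\infty$),
$$ 1-\mathrm{Re}\,\phi(\theta_0+h)=\mathbb E\big(1-\cos\langle h,\xi_1\rangle\big)=\tfrac12\langle h,\Sigma h\rangle+o(\|h\|^2)\asymp\|h\|^2 $$
as $h\to 0$, the last step using $\Sigma\succ 0$. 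Since $\int_{\{\|h\|<\varepsilon\}}\|h\|^{-2}\,dh$ equals a constant times $\int_0^\varepsilon r^{d-3}\,dr$, which is finite exactly when $d>2$, and since the integrand is bounded away from the finite set $F$, the integral converges; hence $G(0)<\infty$.

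Finally, $G(0)<\infty$ forces $P(S_n=0\text{ infinitely often})=0$, and by translation invariance $P(S_n=x\text{ i.o.})=0$ for every $x\in\mathbb Z^d$. For each integer $R\ge 1$ the ball $B_R=\{x\in\mathbb Z^d:\|x\|\le R\}$ is finite, so $P(S_n\in B_R\text{ i.o.})\le\sum_{x\in B_R}P(S_n=x\text{ i.o.})=0$; that is, a.s.\ $\|S_n\|>R$ for all large $n$. Intersecting over $R\in\mathbb Z^+$, a countable union of null sets, yields $\|S_n\|\to\infty$ a.s. I expect the only substantive point to be the finiteness of $G(0)$ in dimension $d>2$: the crux is the quadratic lower bound $1-\mathrm{Re}\,\phi(\theta)\gtrsim\mathrm{dist}(\theta,F)^2$ near the zeros of $1-\mathrm{Re}\,\phi$, which is immediate from a finite second moment as above but genuinely delicate without one --- in that case one replaces $\xi_1$ by a truncation and runs the same estimate, which is the heart of the Chung--Fuchs theorem --- together with the correct use of ``non-degenerate'' to guarantee that $F$ is finite and $\Sigma$ nonsingular. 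The reduction via the strong law and the Borel--Cantelli upgrade at the end are routine.
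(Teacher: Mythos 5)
The paper offers no proof of this lemma at all: it is quoted as a well-known fact about high-dimensional random walks (Pólya/Chung--Fuchs type results), and it is only ever applied to simple symmetric random walks and differences of such walks, whose steps are bounded. Your characteristic-function argument is the standard textbook proof and is correct as written: the reduction of $G(0)=\sum_n P(S_n=0)$ to the integrability of $(1-\mathrm{Re}\,\phi)^{-1}$, the identification of the zero set $F$ as a finite set under the ``genuinely $d$-dimensional'' reading of non-degeneracy, the quadratic lower bound $1-\mathrm{Re}\,\phi(\theta_0+h)\asymp\|h\|^2$ from $\Sigma\succ 0$, and the $\int_0^\varepsilon r^{d-3}\,dr<\infty$ criterion isolating $d>2$ are all sound, as is the exhaustion by finite balls at the end. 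Two small points deserve mention. First, in upgrading $G(0)<\infty$ to $P(S_n=x\ \text{i.o.})=0$ for $x\neq 0$, ``translation invariance'' should be invoked via the strong Markov property at the first hitting time of $x$ (after which the walk returns to $x$ i.o.\ with probability zero, by the case $x=0$); this is a one-line fix. Second, your proof establishes the lemma under the extra hypotheses that the mean exists (for the drift reduction via the strong law) and that $\mathbb E\|\xi_1\|^2<\infty$; the lemma as stated has no moment assumptions, and the fully general statement requires the truncation argument you allude to (Chung--Fuchs). Since the walks actually used in the paper have bounded increments, your restricted version covers every application, so the proposal is adequate for the paper's purposes, and you correctly flag where the general case would require more work.
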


\subsection{Moments of exponential random variables}
\label{sec:moments}

Let $\lambda \in \mathbb R$ and  $s \in \mathbb Z^+$. 
The moments of $s$ independent exponential random variables $X_1, \dots, X_s$
with parameter $\lambda$ are given by
\begin{equation} \label{moments}
m(n_1, \dots, n_s) = \mathbb E (X_1^{n_1} \dots X_s^{n_s}) = \prod_{i=1}^s \frac{n_i !}{\lambda^{n_i}}\ .
\end{equation}
Conversely, if $X_1, \dots, X_s$ are such that their joint moments are given
by (\ref{moments}) for all $(n_1, \dots, n_s) \in \{0,1,2, \dots\}^s$, then they are
independent exponential random variables with parameter $\lambda$. See for instance \cite{Z83}.



\subsection{Exchangeability}

We also review very briefly the notion of exchangeable random variables, which
will be used in the sequel.
\begin{definition}
 The infinite sequence of random variables $X_1, X_2, ...$ is called {\rm exchangeable}
  if for any finite $N$ and any permutation $\sigma$ of $\{1,2,\dots, N\}$, the random 
vectors $(X_1, ..., X_N)$ and $(X_{\sigma (1)}, ..., X_{\sigma (N)})$ have the same distributions.
\end{definition}

The following result is well known; it was proved by de Finetti in \cite{D31}:

\begin{theorem} [\bf de Finetti's Theorem]
\label{thm:definetti}
If $X_1, X_2, ...$ is a sequence of 
$\{0,1\}$-valued exchangeable random variables, 
then there exists a distribution function $F$ on $[0,1]$
such that for all $n$ and all $x_i \in \{0,1\}$,
$$ P(X_1=x_1, ..., X_n=x_n) = \int_0^1 \theta^{\sum_{i=1}^n x_i} (1- \theta)^{n-\sum_{i=1}^n x_i} \mathrm{d}F(\theta).$$ 
\end{theorem}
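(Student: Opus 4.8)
The plan is to run the classical finite-to-infinite embedding argument for exchangeable sequences, using only tightness on the compact interval $[0,1]$ and weak convergence of probability measures. First I would use exchangeability to reduce the claim: $P(X_1 = x_1, \dots, X_n = x_n)$ depends only on $n$ and $k := \sum_{i=1}^n x_i$, so it suffices to compute
$$a_{n,k} \ := \ P\big(X_1 = \cdots = X_k = 1,\ X_{k+1} = \cdots = X_n = 0\big)$$
and to exhibit a \emph{single} distribution function $F$ on $[0,1]$, not depending on $n$ or $k$, with $a_{n,k} = \int_0^1 \theta^{k}(1-\theta)^{n-k}\,dF(\theta)$.

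The key step is to condition on partial sums of a longer initial segment of the sequence. Fix $m \ge n$ and let $S_m = X_1 + \cdots + X_m$. By exchangeability, conditionally on $\{S_m = j\}$ the vector $(X_1, \dots, X_m)$ is uniformly distributed over the $\binom{m}{j}$ binary strings with exactly $j$ ones, so a direct count gives
$$P\big(X_1 = \cdots = X_k = 1,\ X_{k+1} = \cdots = X_n = 0 \ \big|\ S_m = j\big) \ = \ \frac{(j)_k\,(m-j)_{n-k}}{(m)_n}\ ,$$
where $(a)_b = a(a-1)\cdots(a-b+1)$ is the falling factorial. Setting $\theta_m = S_m/m \in [0,1]$ and averaging over $j$ against the law of $S_m$, and using that $(j)_k = j^k + O(m^{k-1})$, $(m-j)_{n-k} = (m-j)^{n-k} + O(m^{n-k-1})$ and $(m)_n = m^n(1 + O(1/m))$ uniformly for $0 \le j \le m$, one obtains
$$a_{n,k} \ = \ \mathbb E\!\left[\frac{(S_m)_k\,(m-S_m)_{n-k}}{(m)_n}\right] \ = \ \mathbb E\big[\theta_m^{\,k}(1-\theta_m)^{\,n-k}\big] \ + \ O(1/m)\ ,$$
with the error uniform in the realization.

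Finally I would extract the limit. The laws of the $\theta_m$ all live on $[0,1]$, hence form a tight family; I would pick a subsequence $\theta_{m_i}$ whose laws converge weakly, and let $F$ be the distribution function of the limit. Since $\theta \mapsto \theta^k(1-\theta)^{n-k}$ is bounded and continuous on $[0,1]$, passing to the limit along $m_i \to \infty$ in the previous display yields $a_{n,k} = \int_0^1 \theta^k(1-\theta)^{n-k}\,dF(\theta)$. The crucial point is that $F$ is chosen \emph{once}, before specializing $n$ and $k$, so the same $F$ serves for all $n$ and $k$; combined with the first-step reduction this is the full statement. As a remark, one may instead observe that $(\theta_m)_m$ is a backward martingale for the exchangeable filtration $\mathcal G_m = \sigma(S_m, S_{m+1}, \dots)$, hence converges a.s.\ and in $L^1$ to a limit $\Theta$; taking $F$ to be the law of $\Theta$ reaches the same conclusion more directly, at the cost of invoking backward martingale convergence.

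The main obstacle is organizational rather than technical: one must produce a single $F$ valid for every pair $(n,k)$ simultaneously — which is why the weakly convergent subsequence is extracted before anything is specialized — and one must verify that the falling-factorial approximation is genuinely uniform over the randomness in $S_m$. Both points are routine once set up correctly, and no probabilistic input beyond exchangeability, tightness on $[0,1]$, and testing against bounded continuous functions is required.
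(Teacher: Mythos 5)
Your proposal is correct, but there is nothing in the paper to compare it against: the paper states de Finetti's Theorem as a known result and cites de Finetti's original 1931 article for it, giving no proof. What you have written is the classical finite-form (urn/hypergeometric) proof, and it is sound as stated. The reduction to $a_{n,k}$ is immediate from exchangeability; the conditional uniformity given $\{S_m=j\}$ follows because any two length-$m$ binary strings with $j$ ones are related by a permutation (strings $j$ with $P(S_m=j)=0$ are irrelevant, and indeed your unconditional identity $a_{n,k}=\mathbb E\left[(S_m)_k\,(m-S_m)_{n-k}/(m)_n\right]$ for $m\ge n$ sidesteps any conditioning issue); the falling-factorial estimates are uniform in $0\le j\le m$ with constants depending only on $n$; and extracting one weakly convergent subsequence of the laws of $\theta_m=S_m/m$ on the compact interval $[0,1]$ \emph{before} specializing $(n,k)$ does yield a single $F$ valid for all moments simultaneously, since weak convergence handles every bounded continuous test function at once. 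Your closing remark is also accurate: the backward-martingale route ($\theta_m\to\Theta$ a.s.\ and in $L^1$ with respect to the exchangeable/tail filtration) gives the same $F$ as the law of $\Theta$ and even identifies the limit without passing to a subsequence, at the cost of a heavier convergence theorem; for the way the theorem is used in this paper (only the representation formula is needed, in the $d=1$ argument of Section 6.2), either version suffices.
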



\medskip
\section{Duality}
\label{sec:duality}

In this section we introduce another process ${\bm Y_t}$ and a function $F$, and show
that ${\bm X_t}$ and ${\bm Y_t}$ are dual with respect to the function $F$ in a sense
to be made precise. We explain also how to leverage duality to prove some of 
the asserted results in Sect. 2.2.

\subsection{Motivation and definition of a ``dual process"}

The idea is as follows:
For the process $\bm X_t=\bm X^{(L)}_t$, consider the marginal energy 
distributions of $\mu^{(L)}$ at $v \in \mathcal D_L$.
These distributions reflect what the particles  bring to 
site $v$ from the bath, and that in turn is reflected in which parts of 
$\mathcal B_L$ a particle visits prior to its arrival 
at site $v$ --- though in reality things are a bit more complicated: particles interact 
with all the sites they pass through on their way to $v$. Nevertheless, accepting
this simplified picture for the moment, we can {\it reverse the 
trajectories of the particles}, and think of them as carrying certain ``packets" from
site $v$ to the bath. The locations in $\mathcal B_L$ at which these packets are deposited
will then tell us from which parts of $\mathcal B_L$ energies were drawn
in the original process, thereby revealing the composition of the steady state distribution
$\mu^{(L)}$ of ${\bm X_t}$ at $v$. An advantage of studying the reverse process is that it is
reminiscent of hitting distributions of Brownian motion, or 
harmonic measures on $\partial D$. These are the ideas behind the duality
formulation discussed in this section.

\bigskip
For $\mathcal D \subset \mathbb R^d$ and $L$ as in Sect. 2.1, we now introduce
a Markov process ${\bm Y^{(L)}_t}$ designed to carry packets from the
sites in $\mathcal D_L$
to $\mathcal B_L$. This process also involves $M=M(L)$ particles. Let $\mathbb N$ 
denote the set of non-negative integers. The variables
in ${\bm Y^{(L)}_t}$ are
$$
\underline n = ((n_{v})_{v \in \mathcal D_L}, (\hat n_{v})_{v \in \mathcal B_L}, \tilde n_1, ..., \tilde n_M, Y_1, ..., Y_M) \in \mathbb N^{\mathcal D_L} \times \mathbb N^{ \mathcal B_L} \times \mathbb N^{M} \times \mathcal D_L^M\ .
$$
Here, $n_{v}$ is to be interpreted as the number of packets at site $v \in \mathcal D_L$, 
$\hat n_{v}$ the number of packets at $v \in \mathcal B_L$, $\tilde n_{j}$ the number of 
packets carried by particle $j$, and $Y_j$ the position of particle $j$.
(We distinguish between packets that have been dropped off at a site
 and packets that are carried by particles: 
$n_v$ counts packets that have been dropped off at site $v$, and does not
include packets carried by particle $j$ even when $Y_j=v$.)
The generator of the process ${\bm Y_t}$ is given by 
$$ (Af)(\underline n) = (A_1f)(\underline n) + (A_2f)(\underline n),$$
where $A_1$ corresponds to movements inside $\mathcal D_L$ and $A_2$ is the part describing the interaction with the boundary. Formally, we have
$$
 (A_1f)(\underline n) \ = \ 
\sum_{k=1}^M  \frac{1}{2d} \sum_{z \in \mathcal D_L: |Y_k -z| =1}
\frac{1}{\tilde n_k + n_{z} +1}
 \sum_{q=0}^{\tilde n_k + n_{z} } [f(\cdots) - f(\underline n) ]
$$
where the quantity  inside the parenthesis is
$$
(n'_{v})_{v \in \mathcal D_L}, (\hat n_{v})_{v \in \mathcal B_L},
\tilde n_1, ..., \tilde n_{k-1}, (\tilde n_k + n_{z} -q) , \tilde n_{k+1}, ..., \tilde n_{M},
Y_1, ..., Y_{k-1}, z, Y_{k+1}, ..., Y_M
$$
with
\begin{equation}
\label{eq:n'}
n'_v = \left\{ \begin{array}{rl}
 n_v &\mbox{ if $v \neq z$} \\
  q &\mbox{ if $v = z$.}
       \end{array} \right.
\end{equation}
Note the difference between (\ref{eq:xihat}) and (\ref{eq:n'}): In ${\bm X_t}$,
when a particle's clock rings, it pools its energy together with that at the site at
which it is located, and carries a random fraction of it as it jumps to a new site.
In ${\bm Y_t}$, the order is reversed: when a particle's clock rings, it jumps
to a new site, then pools together the packets it is carrying with those at the new
site, and takes a random fraction, assuming this new site $z$ is inside $\mathcal D_L$.  
The second half of the generator treats the case where $z \in \mathcal B_L$:
$$
(A_2f)(\underline n) =
\sum_{k=1}^M  \frac{1}{2d}  \sum_{z \in \mathcal B_L: |Y_k -z| =1}
\frac{1}{n_k +1} \sum_{q=0}^{n_k } [f(\cdots) - f(\underline n)]
$$
where the quantity inside the parenthesis is
$$
(n'_{v})_{v \in \mathcal D_L}, (\hat n'_{v})_{v \in \mathcal B_L},
\tilde n_1, ..., \tilde n_{k-1}, n_{k} -q , \tilde n_{k+1}, ..., \tilde n_{M}, Y_1, ..., Y_M\ ,
$$
with
\begin{equation*}
\hat n'_v = \left\{ \begin{array}{rl}
 \hat n_v &\mbox{ if $v \neq z$} \\
  \hat n_v+\tilde n_k &\mbox{ if $v = z$}
       \end{array} \right. \quad \text{ and  }\quad
n'_v = \left\{ \begin{array}{rl}
 n_v &\mbox{ if $v \neq Y_k$} \\
  q &\mbox{ if $v = Y_k$.}
       \end{array} \right.
\end{equation*}
That is to say, if particle $k$ jumps from site $v \in \mathcal D_L$ to site 
$z \in \mathcal B_L$, then the following occurs instantaneously: it drops off all 
of the packets it is carrying at site $z$, returns to site $v$, and takes a random fraction 
of the packets located at site $v$. Once a packet is dropped off in $\mathcal B_L$, it will
remain there permanently,
so that as time tends to infinity there will be no packets left in $\mathcal D_L$.

\medskip
This completes the definition of the process ${\bm Y^{(L)}_t}$.

\subsection{Proof of pathwise duality}

The function with respect to which duality will be proved is 
\[ F(\underline n, \underline x) = 
\prod_{v \in \mathcal D_L} \frac{\xi_v^{n_v}}{n_v !}
\prod_{j=1}^{M} \frac{\eta_{j}^{\tilde n_{j}} }{\tilde n_{ j} !}
\prod_{v \in \mathcal B_L} \left[ T\left( \frac{v}{L} \right) \right]^{\hat n_v}
\]
where $\underline x,  \xi_v$ and $\eta_j$ are as in the definition of ${\bm X_t}$
and the rest are from the definition of $\bm Y_t$.
Notice that $F$ does not depend on the positions of the particles in either 
${\bm X_t}$ or ${\bm Y_t}$. For reasons to become clear, it is convenient
to write
$$
\underline x = (\underline{\check{x}}, \bar X) \quad \mbox{ and } \quad
\underline n = (\underline{\check{n}}, \bar Y)
$$
where $\underline{\check{x}} \in \check{\mathcal X} := \mathbb R_+^{\mathcal D_{L}} \times \mathbb R_+^{M}$ denotes the energy coordinates and $\bar X = (X_1, \dots, X_M)$
the positions of the particles. Writing 
$\bm X_t = (\underline{\check{x}}_t, \bar X_t)$, we observe that $\bar X_t$ consists of
$M$ independent continuous-time random walks on $\mathcal D_L$, 
with ``reflection" at $\partial \mathcal D_L$
as defined earlier. Likewise, $\underline{\check{n}} \in 
\check{\mathcal Y} :=
\mathbb N^{\mathcal D_L} \times \mathbb N^{ \mathcal B_L} \times \mathbb N^{M}$
gives the number of packets at the various sites or carried by particles, while
$\bar Y = (Y_1, \dots, Y_M)$ denotes the positions of the particles; and if 
$\bm Y_t = (\underline{\check{n}}_t, \bar Y_t)$, then $\bar Y_t$
has the same description as $\bar X_t$. Notice that $F$ is really a function of 
$(\underline{\check{x}}, \underline{\check{n}})$.

We now formulate a version of {\it pathwise duality}, counting sample paths of
$\bar X_t$ (or $\bar Y_t$) in the following way: We say a  ``move" in $\bar X_t$ 
occurs when either (a) a particle jumps from 
$z \in \mathcal D_L$ to $w \in \mathcal D_L$ or (b) it jumps -- all in one instant
-- from  
$z \in \mathcal D_L$ to $w \in \mathcal B_L$ and back, and we regard different
$w \in \mathcal B_L$ as corresponding to distinct sample paths.
Now fix a time interval $(0, \tau)$, and let $t_1< t_2 < \dots < t_n$ be the times in
$(0,\tau)$ at which $\bar X_t$ moves. To avoid discussing $t_i^\pm$ (i.e. just before 
or after $t_i$), we let $s_0=0, s_n=\tau$, fix arbitrarily
$s_i \in (t_i, t_{i+1})$ for  $i=1,\dots, n-1$, and agree to abbreviate this sample
path as $\bm \sigma = (\sigma_0, \sigma_1, \dots, \sigma_n)$ where 
$\sigma_i = \bar X_{s_i}$, with the understanding that in case (b),
both the particle and the bath location involved are specified.
We also use the notation $\bm \sigma^{-1} = (\sigma_n, \dots, \sigma_0)$ to
denote the sample path corresponding to $\bm \sigma$ parameterized backwards
in time, 
i.e., for $\bm \sigma^{-1}$, 
moves are made at times $\hat t_1 < \dots < \hat t_n$ where $\hat t_i = \tau - t_{n+1-i}$;
and at times $\hat t_1, \hat t_2, \dots$, the moves of $\bm \sigma^{-1}$ are the reverse of 
those in $\bm \sigma$ at times $t_n, t_{n-1}, \dots$.

Our one-step duality lemma reads as follows:

\medskip
\begin{lemma}
\label{lemmaduality}
 For any fixed $\underline{\check{x}} \in \check{\mathcal X}$,  $\underline{\check{n}} \in \check{\mathcal Y}$, and any sample path 
 $\bm \sigma = (\sigma_0, \sigma_1)$ on $[0,\tau]$, we have
\[ \mathbb E( F(\underline{\check{n}}, \underline{\check{x}}_\tau) | 
\underline{\check{x}}_0 = \underline{\check{x}}, \bm \sigma)
=
\mathbb E( F(\underline{\check{n}}_\tau, \underline{\check{x}}) | \underline{\check{n}}_0 = \underline{\check{n}}, \bm \sigma^{-1})\ . \]
\end{lemma}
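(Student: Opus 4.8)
Since the statement concerns a single move on the time interval $[0,\tau]$, the proof should be a direct calculation: I would condition on the one move happening at time $t_1 \in (0,\tau)$ (the case of no move is trivial, as then both sides equal $F(\underline{\check n},\underline{\check x})$ because neither energy configuration changes), and then compare the effect of that move on the energy coordinates $\underline{\check x}_\tau$ in $\bm X_t$ against the effect of the reversed move on $\underline{\check n}_\tau$ in $\bm Y_t$. Because $F$ does not depend on particle positions, only the energy-update rules matter. There are two cases according to whether the move is of type (a), a jump from $z\in\mathcal D_L$ to $w\in\mathcal D_L$, or of type (b), a jump from $z\in\mathcal D_L$ to $w\in\mathcal B_L$ and back.

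First I would treat case (a). In $\bm X_t$, a particle $k$ at $z=X_k$ jumps to $w$: it replaces the pair $(\xi_z,\eta_k)$ by $(p(\xi_z+\eta_k),\,(1-p)(\xi_z+\eta_k))$ with $p$ uniform on $[0,1]$, via \eqref{eq:xihat}. So the relevant factor of $F(\underline{\check n},\underline{\check x}_\tau)$ is $\frac{(p(\xi_z+\eta_k))^{n_z}}{n_z!}\cdot\frac{((1-p)(\xi_z+\eta_k))^{\tilde n_k}}{\tilde n_k!}$, and I integrate $\int_0^1 p^{n_z}(1-p)^{\tilde n_k}\,dp = \frac{n_z!\,\tilde n_k!}{(n_z+\tilde n_k+1)!}$ (the Beta integral), leaving $\frac{(\xi_z+\eta_k)^{n_z+\tilde n_k}}{(n_z+\tilde n_k+1)!}$ times the untouched factors. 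In $\bm Y_t$ run along $\bm\sigma^{-1}$, the reversed move sends the particle from $w$ to $z$, pools $\tilde n_k$ (the packets it carries) with $n_z$, and keeps a random integer fraction $q$ uniform on $\{0,1,\dots,\tilde n_k+n_z\}$ at $z$ while carrying $\tilde n_k+n_z-q$: the relevant factor of $F(\underline{\check n}_\tau,\underline{\check x})$ is $\frac{1}{\tilde n_k+n_z+1}\sum_{q=0}^{\tilde n_k+n_z}\frac{\xi_z^{q}}{q!}\cdot\frac{\eta_k^{\tilde n_k+n_z-q}}{(\tilde n_k+n_z-q)!}$, which by the binomial theorem equals $\frac{1}{\tilde n_k+n_z+1}\cdot\frac{(\xi_z+\eta_k)^{\tilde n_k+n_z}}{(\tilde n_k+n_z)!}$. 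The two expressions agree, so the two conditional expectations coincide. Case (b) is analogous but simpler: in $\bm X_t$, particle $k$ draws a fresh energy $\eta'$ with density $\beta(w/L)e^{-\eta'\beta(w/L)}$ and leaves $\xi_z$ split as in \eqref{eq:xihat} — wait, re-reading $G_2$, the site energy $\xi_z$ is still updated via \eqref{eq:xihat} with a uniform $p$, so I get the same Beta integral in the $\xi_z,\eta_k$ (now $\eta'$) coordinates, and the factor $\mathbb E[(\eta')^{\tilde n_k}] = \tilde n_k!/\beta(w/L)^{\tilde n_k} = \tilde n_k!\,T(w/L)^{\tilde n_k}$; in $\bm Y_t$ along $\bm\sigma^{-1}$, the reversed move drops $\tilde n_k$ packets permanently at $w\in\mathcal B_L$, contributing the factor $T(w/L)^{\tilde n_k}$ exactly, and splits $n_z$ at $z$ with the uniform-$q$ rule giving the same binomial-theorem identity. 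Matching factors term by term closes case (b).

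The main obstacle — really a bookkeeping hazard rather than a genuine difficulty — is making sure that in $\bm Y_t$ the packets \emph{carried} by the particle ($\tilde n_k$) and the packets \emph{dropped off} at a site ($n_z$ or $n_w$) play exactly the roles of $\eta_k$ (or $\eta'$) and $\xi_z$ respectively in the $\bm X_t$ update, and that the ``jump then pool'' order in $\bm Y_t$ (versus ``pool then jump'' in $\bm X_t$) is precisely what makes the reversed sample path $\bm\sigma^{-1}$ line up with the forward one. I would state clearly which particle and which bath site are specified in case (b) (as the excerpt already stipulates), verify that all coordinates other than those at $z$, $w$, and carried by particle $k$ are left unchanged on both sides so that the remaining factors of $F$ cancel identically, and then the identity reduces to the two elementary identities $\int_0^1 p^{a}(1-p)^{b}\,dp=\frac{a!\,b!}{(a+b+1)!}$ and $\sum_{q=0}^{m}\binom{m}{q}x^q y^{m-q}=(x+y)^m$. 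No deeper input is needed for this one-step lemma; the substance of the paper lies in iterating it.
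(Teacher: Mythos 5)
Your proposal is correct and takes essentially the same route as the paper: condition on the single move, note that only the coordinates at $z$, $w$ and of particle $k$ change, and match the affected factors of $F$ via the Beta integral $\int_0^1 p^{a}(1-p)^{b}\,dp$ and the binomial theorem, in the two cases of an interior jump and a bath visit (the paper merely packages the same computation as an identity between the corresponding terms of $G_1F,G_2F$ and $A_1F,A_2F$). Your mid-sentence correction in case (b) lands on the right computation: there the $p$-integral degenerates to $\int_0^1 p^{n_z}dp=\frac{1}{n_z+1}$ (the $\tilde n_k=0$ version of the Beta integral, exactly as the paper notes), with the particle's factor supplied by $\mathbb E\bigl[(\eta')^{\tilde n_k}\bigr]/\tilde n_k!=T(w/L)^{\tilde n_k}$.
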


\begin{proof} The sample path $\bm \sigma = (\sigma_0, \sigma_1)$ describes
exactly one move on the time interval $(0,\tau)$.
We consider separately the two cases corresponding to the two terms in the
generator $G$ of $\bm X_t$ (see Sect. 2.1).

\medskip
\noindent {\bf Case 1.} Particle $k$ jumps from site 
$z \in \mathcal D_L$ to site 
$w \in \mathcal D_L$. The term corresponding to $k$ and $w$ in $G_1F$ 
can be written as $I \cdot II$, where
$$ I =\frac{1}{2d} \prod_{v \neq z} \frac{\xi_v^{n_v}}{n_v !}
\prod_{j \neq k} \frac{\eta_{j}^{\tilde n_{j}} }{\tilde n_{ j} !}
\prod_{v \in \mathcal B_L} \left[ T\left( \frac{v}{L} \right) \right]^{\hat n_v}.
 $$
and
$$
 II = \frac{1}{n_{z}!} \frac{1}{\tilde n_k !}
\left( \int_0^1 \left( \xi_{z} + \eta_k \right)^{n_{z} + \tilde n_k} p^{n_{z}}
(1-p)^{\tilde n_k} \mathrm{d}p - \xi_{z}^{n_{z}} \eta_k^{\tilde n_k} \right).
$$
From this we deduce that
\begin{eqnarray*}
 II +  \frac{1}{n_{z}!} \frac{1}{\tilde n_k !} \xi_{z}^{n_{z}} \eta_k^{\tilde n_k}  &=& \frac{1}{n_{z}!} \frac{1}{\tilde n_k !}
\left( \int_0^1 \left( \xi_{z} + \eta_k \right)^{n_{z} + \tilde n_k} p^{n_{z}}
(1-p)^{\tilde n_k} \mathrm{d}p \right)\\
&=& \frac{1}{n_{z}!} \frac{1}{\tilde n_k !} \left( \xi_{z} + \eta_k \right)^{n_{z} + \tilde n_k}
 \int_0^1 p^{n_{z}}(1-p)^{\tilde n_k} \mathrm{d}p \\
&=& \frac{1}{n_{z}!} \frac{1}{\tilde n_k !}
\left[ \sum_{m=0}^{n_z+\tilde n_k} \frac{(n_z+\tilde n_k)!}{m!(n_z+\tilde n_k-m)!}
\xi_z^m \eta_k^{n_z+\tilde n_k-m} \right]
\frac{n_{z}! \tilde n_k !}{(n_{z} + \tilde n_k +1)!} \\
&=& \frac{1}{n_{z} + \tilde n_k +1} \left[ \sum_{m=0}^{n_z+\tilde n_k} \frac{\xi_z^m}{m!} \frac{\eta_k^{n_z+\tilde n_k-m}}{(n_z+\tilde n_k-m)!}
\right]
\end{eqnarray*}

Thus $I \cdot II$ is the term corresponding to $A_1F$ in the generator of $\bm Y_t$
with indices $k$ and $z$.

\medskip
\noindent {\bf Case 2.} Particle $k$ jumps from $z \in \mathcal D_L$ to  
$w \in \mathcal B_L$ and back. The term corresponding to $k$ in $G_2F$ and $w \in
\mathcal B_L$ can again be written as $I \cdot II$, where $I$ is as above and 
$$ II = \frac{1}{n_z!} \frac{1}{\tilde n_k !} \left[ \int_0^1 \mathrm{d}p  [p\left( \xi_{z} + \eta_k \right)]^{n_{z}}
\int_0^{\infty} \mathrm{d}\theta \theta^{\tilde n_k} \beta \left( \frac{w}{L}\right) e^{- \theta 
\beta \left( \frac{w}{L}\right)}
- \xi_{z}^{n_{z}} \eta_k^{\tilde n_k} \right]\ .$$
From this we obtain
\begin{eqnarray*}
&& II +  \frac{1}{n_{z}!} \frac{1}{\tilde n_k !} \xi_{z}^{n_{z}} \eta_k^{\tilde n_k} \\
 &=& 
\left( \frac{1}{n_z!}\int_0^1 \mathrm{d}p  [p\left( \xi_{z} + \eta_k \right)]^{n_{z}} \right) \cdot
\left(  \frac{1}{\tilde n_k !} \int_0^{\infty} \mathrm{d}\theta \theta^{\tilde n_k} \beta \left( \frac{w}{L}\right) e^{- \theta \beta \left( \frac{w}{L}\right)} \right)\\
&=& \frac{1}{n_{z} +1} \left[ \sum_{m=0}^{n_z} \frac{\xi_z^m}{m!} \frac{\eta_k^{n_z-m}}{(n_z-m)!} \cdot
\left[ T\left( \frac{w}{L}\right)\right]^{\tilde n_k}
\right].
\end{eqnarray*}
In the last equality, we used the simplified version of the computation of $II$ in Case 1 (by letting $\tilde n_k =0$) and computed an elementary integral.
We conclude that $I \cdot II$ is the term corresponding to $A_2F$ for the indices $k$ and $z$.
\end{proof}

Next we extend Lemma \ref{lemmaduality} to sample paths involving arbitrary
numbers of moves.

\begin{lemma}
\label{lemmaduality2}
 For any fixed $\underline{\check{x}} \in \check{\mathcal X}$, $\underline{\check{n}} \in \check{\mathcal Y}$, and any sample path $\bm \sigma = 
 (\sigma_0, \sigma_1, ..., \sigma_m)$ on $[0, \tau]$, we have
\[ \mathbb E( F(\underline{\check{n}}, \underline{\check{x}}_\tau) | 
\underline{\check{x}}_0 = \underline{\check{x}}, \bm \sigma)
=
\mathbb E( F(\underline{\check{n}}_\tau, \underline{\check{x}}) | \underline{\check{n}}_0 = \underline{\check{n}}, \bm \sigma^{-1})\ . \]
\end{lemma}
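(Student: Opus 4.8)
The plan is to induct on the number of moves $m$ in the sample path $\bm\sigma$, using Lemma \ref{lemmaduality} (the one-step case) as both the base case and the engine of the inductive step. The key conceptual point is that between consecutive move-times $t_i$ and $t_{i+1}$, neither $\underline{\check x}_t$ nor $\underline{\check n}_t$ changes: all energy/packet updates happen exactly at the move-times, so conditioning on $\bm\sigma$ really just means conditioning on the finite data $(\sigma_0,\dots,\sigma_m)$ together with the move-times, and the process $\underline{\check x}_t$ is, conditioned on this data, a deterministic composition of $m$ independent random maps (the random fraction $p$ drawn at each move), one for each move. The same is true for $\underline{\check n}_t$ along $\bm\sigma^{-1}$, in the reverse order.

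First I would set up notation: write $\bm\sigma = (\sigma_0,\dots,\sigma_m)$ with moves at $t_1<\dots<t_m$, let $\bm\sigma' = (\sigma_0,\dots,\sigma_{m-1})$ be the truncation at time $s_{m-1}\in(t_{m-1},t_m)$, and let $\bm\tau = (\sigma_{m-1},\sigma_m)$ be the single-move sample path on $[s_{m-1},\tau]$. By the Markov property of $\bm X_t$ applied at time $s_{m-1}$ (at which no move is in progress), for any function $g$ of the energy coordinates,
\[
\mathbb E\big(g(\underline{\check x}_\tau)\,\big|\,\underline{\check x}_0=\underline{\check x},\bm\sigma\big)
=\mathbb E\Big(\,\mathbb E\big(g(\underline{\check x}_\tau)\,\big|\,\underline{\check x}_{s_{m-1}},\bm\tau\big)\,\Big|\,\underline{\check x}_0=\underline{\check x},\bm\sigma'\Big).
\]
Apply this with $g(\cdot)=F(\underline{\check n},\cdot)$. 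The inner expectation is handled by Lemma \ref{lemmaduality} applied on $[s_{m-1},\tau]$ with the single move $\bm\tau$: it equals $\mathbb E\big(F(\underline{\check n}_\tau,\underline{\check x}_{s_{m-1}})\,\big|\,\underline{\check n}_0=\underline{\check n},\bm\tau^{-1}\big)$, i.e. running one step of $\bm Y_t$ backwards from $\underline{\check n}$. Call the resulting (random) value $F(N_1,\underline{\check x}_{s_{m-1}})$ where $N_1$ is the one-step-evolved packet configuration; note $N_1$ depends only on the move $\bm\tau^{-1}$ and the single fresh randomness used there, and is independent of everything feeding into $\underline{\check x}_{s_{m-1}}$.

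Then I would fix a realization of $N_1$ and apply the inductive hypothesis to the $(m-1)$-move sample path $\bm\sigma'$:
\[
\mathbb E\big(F(N_1,\underline{\check x}_{s_{m-1}})\,\big|\,\underline{\check x}_0=\underline{\check x},\bm\sigma'\big)
=\mathbb E\big(F(\underline{\check n}_{s_{m-1}},\underline{\check x})\,\big|\,\underline{\check n}_0=N_1,(\bm\sigma')^{-1}\big),
\]
and finally recombine: running $\bm Y_t$ backwards under $\bm\sigma^{-1}$ from $\underline{\check n}$ is exactly doing the single move $\bm\tau^{-1}$ first (producing $N_1$) and then the $m-1$ moves $(\bm\sigma')^{-1}$, because $\bm\sigma^{-1}$ has its first move at $\hat t_1=\tau-t_m$ equal to the reverse of $\sigma$'s last move. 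So the right-hand side is $\mathbb E\big(F(\underline{\check n}_\tau,\underline{\check x})\,\big|\,\underline{\check n}_0=\underline{\check n},\bm\sigma^{-1}\big)$, completing the induction.

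The main obstacle is bookkeeping rather than mathematics: one must be careful that the freshly-drawn randomness at the last move of $\bm X_t$ (the fraction $p$ at time $t_m$) and at the first move of $\bm Y_t$ along $\bm\sigma^{-1}$ (the fraction $q$ at time $\hat t_1$) are genuinely independent of the remaining randomness, so that the nested conditional expectations factor as claimed, and that the order of composition of the $m$ elementary maps is correctly reversed when passing from $\bm\sigma$ to $\bm\sigma^{-1}$. It is also worth stating explicitly that $\underline{\check x}_t$ is piecewise constant in $t$ between move-times (hence $\underline{\check x}_{s_{m-1}}$ is the configuration just before the last move and $\underline{\check x}_\tau$ just after), which is what licenses applying the Markov property at the intermediate times $s_i$ and reduces the continuous-time statement to a finite composition; once this is in place the computation is purely formal and Lemma \ref{lemmaduality} does all the real work at each step.
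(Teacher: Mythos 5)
Your proof is correct and follows essentially the same route as the paper: induction on the number of moves, with the one-step Lemma \ref{lemmaduality} supplying the base case and the Markov property plus independence of the fresh randomness (the paper writes this as an explicit double integral of the product of conditional laws) doing the gluing. The only difference is cosmetic: you peel off the last move of $\bm\sigma$ (the first move of $\bm\sigma^{-1}$), while the paper peels off the first move of $\bm\sigma$ and applies the inductive hypothesis to the tail $(\sigma_1,\dots,\sigma_m)$; the two decompositions are symmetric and equally valid.
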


\begin{proof}
We prove by induction on $m$, the number of moves. The case $m=1$ is Lemma \ref{lemmaduality}. Assume that we have proved the statement for $\leq m-1$ moves. 
Letting $s_1$ be as defined above, we have the following:
\begin{eqnarray*}
&& \mathbb E( F(\underline{\check{n}}, \underline{\check{x}}_\tau) | 
\underline{\check{x}}_0 = \underline{\check{x}}, (\sigma_0, \sigma_1, \dots, \sigma_m))\\
&=& \int \mathbb E(F(\underline{\check{n}},\underline{\check{x}}_\tau) | \underline{\check{x}}_{s_1} = \underline{\check{x}}', (\sigma_1, ..., \sigma_m))
\cdot P(\underline{\check{x}}_{s_1} = \underline{\check{x}}' |  \underline{\check{x}}_0 = \underline{\check{x}}, (\sigma_0, \sigma_1)) \mathrm{d}\underline{\check{x}}'\\
&=& \int \mathbb E(F(\underline{\check{n}}_{\tau-s_1},\underline{\check{x}}') | \underline{\check{n}}_0 = \underline{\check{n}}, (\sigma_m, ..., \sigma_1))
\cdot P(\underline{\check{x}}_{s_1} = \underline{\check{x}}' |  \underline{\check{x}}_0 = \underline{\check{x}}, (\sigma_0, \sigma_1)) \mathrm{d}\underline{\check{x}}'\\
&=& \int \int F(\underline{\check{ n}}',\underline{\check{x}}') \cdot
P( \underline{\check{n}}_{\tau-s_1} = \underline{\check{ n}}' | \underline{\check{n}}_0 = \underline{\check{n}}, (\sigma_m, ..., \sigma_1))
\cdot P(\underline{\check{x}}_{s_1} = \underline{\check{x}}' |
\underline{\check{x}}_0 = \underline{\check{x}}, (\sigma_0, \sigma_1))
\mathrm{d}\underline{\check{x}}' \mathrm{d}
  \underline{\check{ n}}' \\
&=& \int \mathbb E( F(\underline{\check{ n}}',\underline{\check{x}}_{s_1}) | \underline{\check{x}}_0 
= \underline{\check{x}}, (\sigma_0, \sigma_1) ) \cdot
P( \underline{\check{n}}_{\tau-s_1} = \underline{\check{ n}}' | \underline{\check{n}}_0 = \underline{\check{n}}, (\sigma_m, ..., \sigma_1))
\mathrm{d}\underline{\check{n}}'\\
&=& \int \mathbb E( F(\underline{\check{ n}}_\tau, \underline{\check{ x}} ) | 
\underline{\check{n}}_{\tau-s_1} = \underline{\check{ n}}', \sigma_1, \sigma_0 ) \cdot
P( \underline{\check{n}}_{\tau-s_1} = \underline{\check{ n}}' | \underline{\check{n}}_0 = \underline{\check{n}}, (\sigma_m, ..., \sigma_1))
\mathrm{d}\underline{\check{n}}'\\
&=&
\mathbb E( F(\underline{\check{n}}_\tau,\underline{\check{x}}) | \underline{\check{n}}_0 = \underline{\check{n}}, (\sigma_m, \sigma_{m-1}, ..., \sigma_0)).
\end{eqnarray*}
\end{proof}

\noindent
{\it Remarks.}  The duality statement above is a little more involved than that
for Markov processes with disjoint phase spaces 
(see e.g. Proposition 1.2  in \cite{JK14}).
Here, the phase spaces intersect in the set of particle configurations, and duality
is proved one sample path in particle movements at a time; that is why we call it
{\it pathwise duality}.
Note that it is necessary to use reversed 
paths in the dual process to guarantee 
that Lemma \ref{lemmaduality} can be applied from one step to the next. 
Also note that the expression "pathwise duality" has been used in
in Chapter 4 of \cite{JK14} in a different context:  
the "strong pathwise duality" and the "conditional pathwise duality"
as defined in \cite{JK14} imply the usual duality while our definition
is a weakening of that. Finally, we mention that we will have
to further weaken our concept of duality for the proof of the
case of systems with positive density of particles, 
see Section \ref{sec:dense}.

Duality has been used to prove LTE for a number of situations;
see \cite{CGGR13} and the references therein.
 Many of the ideas above have their origins in \cite{KMP82}, 
though modifications are needed as energy is not carried by particles in the KMP model.
A similar set of ideas was also used in \cite{RY07}, which considers
the same model as ours in one space dimension and with only one particle.

\subsection{Consequences of duality}

Let $\underline{\check{x}}_* \in \check{\mathcal X}$ and $\underline{\check{n}}_* \in \check{\mathcal Y}$ be fixed.  Integrating over all sample paths $\bm \sigma$ of
$\bar X_t$ on $[0,t]$, Lemma \ref{lemmaduality2} together with the
fact that $\mathbb P[\bm \sigma] = \mathbb P[\bm \sigma^{-1}]$ imply that
\begin{equation} \label{finite-t}
\int \mathbb E( F(\underline{\check{n}}_*, \underline{\check{x}}_t) | \underline{\check{x}}_0=\underline{\check{x}}_*, \bm \sigma) \mathbb \ \mathbb P(\mathrm{d}\bm \sigma)
= \int \mathbb E( F(\underline{\check{n}}_t, \underline{\check{x}}_*) | \underline{\check{n}}_0=\underline{\check{n}}_*, \bm \sigma^{-1}) \ \mathbb P(\mathrm{d}\bm \sigma^{-1})\ .
\end{equation}
Letting $t \to \infty$, the left and right sides of the equation above tend to the two
sides of the formula below:

\begin{lemma}
\label{lemma:dualitycons}
For any fixed 
$\underline{\check{n}}_* \in \check{\mathcal Y}$, we have
\begin{equation}
\label{eq:lemmadc}
 \int  F(\check{\underline{n}}_*, \check{\underline{x}}) \mathrm{d}\mu^{(L)}(\check{\underline{x}}, \bar X)= 
\int \prod_{v \in \mathcal B_L} \left[ T\left( \frac{v}{L} \right) \right]^{\hat n_v} \mathrm{d}\rho_{\underline{\check n}_*},
\end{equation}
where $\rho_{\underline{\check n}_*}$ is the asymptotic distribution of ${\bm Y_t}$ 
as $t\rightarrow \infty$ averaged over all $\bar Y$ with uniform distribution, 
assuming that ${\bm Y_0} = (\underline{\check{n}}_*, \bar Y)$. 
\end{lemma}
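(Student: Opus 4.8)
The plan is to pass to the limit $t\to\infty$ in the finite-time identity (\ref{finite-t}). Throughout I would take the particle positions $\bar X_0$ (equivalently, after reversal, $\bar Y_0$) to be uniformly distributed on $\mathcal D_L^M$: since the reflected simple random walk on $\mathcal D_L$ is reversible with respect to the uniform measure, this is exactly the choice that makes $\mathbb P[\bm\sigma]=\mathbb P[\bm\sigma^{-1}]$ hold as path measures, so that (\ref{finite-t}) is valid. With this choice, the left-hand side of (\ref{finite-t}) equals $\mathbb E\big(F(\underline{\check{n}}_*,\underline{\check{x}}_t)\mid \underline{\check{x}}_0=\underline{\check{x}}_*\big)$, the expectation being over the law of $\bm X_t$ started from energy coordinates $\underline{\check{x}}_*$ and uniform positions, while the right-hand side equals $\mathbb E\big(F(\underline{\check{n}}_t,\underline{\check{x}}_*)\mid \underline{\check{n}}_0=\underline{\check{n}}_*\big)$, over the law of $\bm Y_t$ started from $(\underline{\check{n}}_*,\bar Y)$ with $\bar Y$ uniform. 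Identifying the $t\to\infty$ limits of these two expectations with the two sides of (\ref{eq:lemmadc}) is then all that remains, and these limits are of two quite different flavors.

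For the left-hand side, Proposition \ref{prop:invmeasexists} gives that the distribution of $\bm X_t$ converges weakly to $\mu^{(L)}$, so the task is to upgrade this to convergence of $\mathbb E\big(F(\underline{\check{n}}_*,\underline{\check{x}}_t)\big)$, where $F(\underline{\check{n}}_*,\cdot)$ is a fixed polynomial in the energy coordinates $(\xi_v)_{v\in\mathcal D_L}$ and $(\eta_j)_{j=1}^M$. Here I would invoke uniform integrability, using exactly the coupling described after Proposition \ref{prop:invmeasexists}: comparing $\bm X_t$ with the equilibrium process at temperature $T^{\rm max}$ shows that every energy coordinate of $\bm X_t$ is stochastically dominated, uniformly in $t$, by an exponential variable of mean $T^{\rm max}$. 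Hence all joint moments of the energy coordinates of $\bm X_t$ are bounded uniformly in $t$, the family $\{F(\underline{\check{n}}_*,\underline{\check{x}}_t)\}_{t\ge0}$ is uniformly integrable, and therefore $\mathbb E\big(F(\underline{\check{n}}_*,\underline{\check{x}}_t)\big)\to \int F(\underline{\check{n}}_*,\underline{\check{x}})\,\mathrm d\mu^{(L)}(\underline{\check{x}},\bar X)$.

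For the right-hand side, recall that the total number of packets, $N:=\sum_{v\in\mathcal D_L}n_v+\sum_j\tilde n_j+\sum_{v\in\mathcal B_L}\hat n_v$, is conserved by $\bm Y_t$, and that (as built into the construction of $\bm Y_t$) almost surely every packet is eventually deposited in $\mathcal B_L$ and stays there. Consequently the coordinates $(n_v(t))_{v\in\mathcal D_L}$ and $(\tilde n_j(t))_j$ all vanish for $t$ large, the nondecreasing integer variables $(\hat n_v(t))_{v\in\mathcal B_L}$ converge almost surely to limits $(\hat n_v(\infty))_v$ whose joint law (more precisely, the law of the whole limiting configuration, with $\bar Y$ uniform) is $\rho_{\underline{\check{n}}_*}$, and hence $F(\underline{\check{n}}_t,\underline{\check{x}}_*)\to\prod_{v\in\mathcal B_L}[T(v/L)]^{\hat n_v(\infty)}$ almost surely. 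Since each factor $a^{n}/n!$ occurring in $F$ satisfies $a^{n}/n!\le \max(1,a)^{N}$ for $0\le n\le N$, the variables $F(\underline{\check{n}}_t,\underline{\check{x}}_*)$ are bounded by a deterministic constant depending only on $\underline{\check{x}}_*$, on $N$, and on $\max_v T(v/L)$; dominated convergence then yields $\mathbb E\big(F(\underline{\check{n}}_t,\underline{\check{x}}_*)\big)\to\int\prod_{v\in\mathcal B_L}[T(v/L)]^{\hat n_v}\,\mathrm d\rho_{\underline{\check{n}}_*}$. Combining the two limits with (\ref{finite-t}) proves (\ref{eq:lemmadc}).

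I expect the main obstacle to be the uniform integrability on the left-hand side: $F$ is unbounded in the energies, so weak convergence of $\bm X_t$ alone does not deliver convergence of $\mathbb E(F)$, and one genuinely needs the a priori tail control supplied by the equilibrium comparison. The almost sure absorption of all packets into $\mathcal B_L$ — used on the right — also deserves a line of justification; it follows from recurrence of the reflected walks on the finite graph $\mathcal D_L$ together with the fact that each pooling event has positive probability of transferring a packet toward the boundary, but this is essentially already asserted in the definition of $\bm Y_t$.
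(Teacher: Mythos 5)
Your argument is correct and is essentially the paper's own proof: both pass to the $t\to\infty$ limit in (\ref{finite-t}), using convergence of the law of $\bm X_t$ to $\mu^{(L)}$ on the left and the eventual absorption of all packets in $\mathcal B_L$ on the right, the paper simply asserting these two limits without the reversibility, uniform-integrability and dominated-convergence details you supply. The one claim to adjust is that each energy coordinate of $\bm X_t$ is stochastically dominated, uniformly in $t$, by an exponential of mean $T^{\rm max}$: starting from a fixed configuration $\underline{\check{x}}_*$ this fails for small $t$ when some coordinate of $\underline{\check{x}}_*$ is large, but the same coupling (run the $T^{\rm max}$-process from the coordinatewise maximum of $\underline{\check{x}}_*$ and an equilibrium configuration) dominates each coordinate by a deterministic constant plus an exponential, which still gives the uniform-in-$t$ moment bounds needed for uniform integrability.
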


That the left side of (\ref{finite-t}) converges to the left side of (\ref{eq:lemmadc})
as $t \to \infty$ follows from the fact that the distribution of $\bm X_t$ converges 
to $\mu^{(L)}$ (Proposition 2).
The limit on the right side clearly exists, since all  packets
are eventually deposited in $\mathcal B_L$, resulting in the simplified form of
$F(\check{\underline{n}}, \check{\underline{x}}_*)$ in the integrand. 
Notice also that the right side
does not depend on $\underline{\check{x}}_*$, consistent with the fact
that the convergence to $\mu^{(L)}$ on the left is independent of initial condition. 

We now identify the relevant choices of $\check{\underline{n}}_*$:
In the proof of LTE for site energies, for example, we 
fix $x \in \mathcal D$, $S \subset \mathbb Z^d$ 
and nonnegative integers $(n^*_v)_{v \in S}$.
If $\check{\underline{n}}_*$ is chosen so that 
\begin{equation}
\label{eq:n*choice}
n_{\langle xL \rangle + v} = \left\{ \begin{array}{rl}
 n^*_v &\mbox{ if $v \in S$} \\
  0 &\mbox{ if $v \not \in S$}
       \end{array} \right. \quad 
\hat n_v = 0 \ \ \forall v, \quad \text{ and  }\quad \tilde n_j = 0  \ \ \forall j\ ,
\end{equation} 
then the left side of (\ref{eq:lemmadc}) is equal to 
\begin{equation}
 \label{eq:moments}
\int \prod_{v \in S} \frac{ \xi_{\langle xL \rangle +v}^{n^*_v} }{n^*_v !} \mathrm{d}\mu^{(L)}\ ,
\end{equation}
a constant times the $(n^*_v)$-moments of the distribution $\mu^{(L)}_{x,S}$ 
defined in Sect. 2.2.

Thus the key to understanding $\mu^{(L)}$ is $\rho_{\underline{\check n}_*}$.
To get a handle on this distribution, we find that instead of working with $\bm Y_t$,
which describes the evolution of the density (or distribution) of packets, it is
productive to switch to an equivalent model that focuses directly on the movements
of individual packets. Moreover, since only asymptotic distributions matter,
we may work with a discrete time model, as long as the order of the steps are
preserved. 

\bigskip \noindent
{\bf Discrete-time version of $\bm Y_t$ focusing on movements of packets}

\medskip

Consider a Markov chain ${\bm Z_k}=
{\bm Z_k}^{(L)}, \ k=0,1,2,\dots$, the
variables of which are
$$
\bm Z_k = ({\bm Z}_{1,k}, \dots, {\bm Z}_{N,k}; Y_{1,k}, \dots, Y_{M,k})
\in  (\mathcal D_L \uplus \mathcal B_L \uplus \{ 1,2,...,M\})^N \times 
\mathcal D_L^M   \ ,
$$
with some fixed positive integer $N$ (to be specified later)
and the notation $\uplus$ for disjoint union.
The first $N$ coordinates of $\bm Z_k$ describe the positions of the $N$ (named) packets 
in the system, the position of packet $i$ at step $k$ being ${\bm Z}_{i,k}$,
and the final $M$ coordinates give the positions of the $M$ particles (abusing notation
slightly by using $Y$ in both the continuous and discrete time models).
The meaning of ${\bm Z}_{i,k} \in \mathcal D_L \uplus \mathcal B_L$
is obvious, and ${\bm Z}_{i,k} =j$ means packet $i$ is carried by particle $j$
 at time $k$. The transition probabilities of $\bm Z_k$ are as follows: 
Given $\bm Z_{k}$, we choose with equal probability one of the $M$ particles,
say particle $j$, and choose with equal probability one of particle $j$'s neighboring sites,
say $w$. If $w \in \mathcal D_L$, then we set $Y_{j,k+1}=w$, and mix the 
packets carried by particle $j$ with those at site $w$ by pooling them together 
and designating a random fraction of them to be carried by particle $j$ and 
the rest to be left at site $w$.
If $w \in \mathcal B_L$, then all the packets carried by particle $j$ are dropped off at site
$w$, and $Y_{j,k+1}=Y_{j,k}$. Since all the packets are eventually
dropped off in $\mathcal B_L$, $\bm Z_{i, \infty}:= \lim_{k \to \infty} \bm Z_{i,k}$ exists 
for $1 \leq i \leq N$ almost surely. 

Let $x \in \mathcal D, S \subset \mathbb Z^d$, and $L \gg 1$ be fixed. 
Associated with each 
$$\check{\underline{n}}_* = ((n^*_v)_{v \in S}, (\hat n^*_v)_{v \in \mathcal B_L},
\tilde n^*_1, \dots, \tilde n^*_M)$$ is the Markov chain $\bm Z_k$
whose initial condition $\bm Z_0=(\check{\underline{n}}_0, \bar Y_0)$ is given
by the following:
$\check{\underline{n}}_0$ is
prescribed by $\check{\underline{n}}_*$, i.e. at time $0$, there are $n^*_v$ 
packets at site $\langle xL \rangle + v$ and $\hat n^*_v$ packets at site 
$v \in \mathcal B_L$,
$\tilde n^*_j$ packets are carried by particle $j$, and $\bar Y_0$ is uniformly 
distributed among all particle configurations in $\mathcal D_L$. 
We claim -- and leave it to the reader to check -- that $\bm Y_t$ 
and $\bm Z_k$ differ only in the identity of individual
packets and time changes that preserve the order of the moves, so they have the
same asymptotic distribution, i.e. 
\begin{eqnarray}
\label{eq:dualityend}
 \int \prod_{v \in \mathcal B_L} \left[ T\left( \frac{v}{L} \right) \right]^{\hat n_v} d\rho_{\underline{\check n}_*}
=\mathbb E\left( \prod_{i=1}^N T \left( \frac{\bm Z_{i, \infty}}{L}\right) \right)
\end{eqnarray}
where $\mathbb E$  is with respect to the evolution of the
process $\bm Z_k$ and
$$N= \sum_{v \in S}n^*_v + \sum_{v \in \mathcal B_L} \hat n^*_v + 
\sum_{j=1}^M \tilde n^*_j\ .$$ 
For future reference, 
\begin{equation} \label{exp-exp}
\mathbb E(\cdot) = \int \mathbb E_{(\check{\underline{n}}_*,\bar Y)}(\cdot) \ U(\mathrm{d}\bar Y)
\end{equation}
where $U$ is the uniform distribution 
over all particle configurations $\bar Y$.

We are now ready to prove Theorem 1.

\begin{proof}[Proof of Theorem 1] Let $\check{\underline{n}}_*$ be such that
\begin{equation*}
n_v = \left\{ \begin{array}{rl}
 1 &\mbox{ if $v = \langle xL \rangle$} \\
  0 &\mbox{ otherwise }
       \end{array} \right. \quad \text{ and  }\quad
\hat n_v = 0 \mbox{ for all } v, \quad \tilde n_j = 0  \mbox{ for all } j\ .
\end{equation*} 
Then the left side of (\ref{eq:lemmadc}) is equal to $\mathbb E^{(L)}(\xi_{\langle xL \rangle})$, and the right side is given by 
$\int T\left( \frac{v}{L} \right) \mathrm{d}\rho_{\underline{\check n}_*}$, where
$\rho_{\underline{\check n}_*}$ is the asymptotic distribution of 
$\bm Z_{1,\infty}$ in the Markov chain above with $N=1$ and $\bm Z_{1,0} = \langle xL \rangle$. From the transition probabilities of $\bm Z_k$, it is clear that if we (i) disregard 
waiting times, i.e. times at which $\bm Z_{1,k}$ does not change, and (ii) view
the location of the packet when it is carried by particle $j$ as $Y_{j,k}$, then
the trajectories of $\bm Z_{1,k}$ are those of a SSRW on $\mathcal D_L$. 
By Proposition \ref{lem12}, as $L \to \infty$ the distribution of
$\bm Z^{(L)}_{1,\infty}$ rescaled back to $\partial \mathcal D$ is the hitting probability
of Brownian motion starting from $x \in \mathcal D$.
Hence $\int T\left( \frac{v}{L} \right) \mathrm{d}\rho_{\underline{\check n}_*} \to u(x)$ 
where $u$ is the solution of
Laplace's Equation with boundary condition $T$. 
\end{proof}

Next we observe that Theorem 2 is reduced to the following proposition,
the proof of which will occupy the next two sections.

\begin{proposition}
 \label{lemma:main}
Let $d \ge 1$, $\mathcal D \subset \mathbb R^d$, $T$ on $\partial \mathcal D,\ 
x \in \mathcal D$ and $S \subset \mathbb Z^d$ be prescribed. 
We fix also $\check{\underline{n}}_* =((n^*_v)_{ v \in S}, 0, 0, \dots, 0)$, and let 
$\bm Z^{(L)}_k$ be the Markov chain associated with 
$\check{\underline{n}}_*$.
Then letting $N= \sum_{v\in S} n^*_v$, we have
%
%
$$ \lim_{L \rightarrow \infty}  \mathbb E\left( \prod_{i=1}^N T \left( \frac{\bm Z^{(L)}_{i, \infty}}{L}\right) \right) = [u(x)]^N\ .$$
\end{proposition}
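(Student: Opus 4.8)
The plan is to reduce Proposition \ref{lemma:main} to a statement about a finite system of walkers performing SSRW on $\mathcal D_L$ that are \emph{independent except when they share a site}, and to show that in the $L\to\infty$ limit the joint hitting distribution on $\partial\mathcal D$ factors. Concretely, the $N=\sum_{v\in S}n^*_v$ packets move in $\bm Z^{(L)}_k$ by being shuffled among particles and sites; if we forget waiting times and track each packet by the position of whichever particle carries it (or by its resting site), each packet trajectory is a SSRW on $\mathcal D_L$ with ``reflection'' at $\partial\mathcal D_L$, started within $O(1)$ of $\langle xL\rangle$. Two packets evolve independently while they are at distinct sites; when they occupy the same site (either the same resting site or the same carrying particle) they may be picked up together by a particle and thereafter stay correlated for a while, but crucially a move never forces two packets that are apart to come together except through the particle that carries one of them stepping onto the other's site. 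So the first step is to set up a coupling of $(\bm Z^{(L)}_{1,\cdot},\dots,\bm Z^{(L)}_{N,\cdot})$ with $N$ genuinely independent SSRWs $(\widehat W^{(1)},\dots,\widehat W^{(N)})$ started from the same points, agreeing until the first time two packet-trajectories meet, and to control the discrepancy introduced after meetings.

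The main work is the dimension-dependent estimate that meetings do not affect the hitting distribution in the limit. By Proposition \ref{lem12}, for a single packet started within $\delta L$ of $xL$ we have $|\mathbb E(T(S_\tau/L))-u(x)|<\varepsilon$; the target $[u(x)]^N$ is exactly what $N$ \emph{independent} such walkers would give in the limit by the multiplicativity of expectation. The strategy is: (i) for $d\ge 3$, use transience (Lemma \ref{lemma:transience}) — with high probability any two of the $N$ independent walkers, once they have separated to distance $\gtrsim L^{1/3}$ say, never meet again before exiting $\mathcal D_L$, and by Lemma \ref{lemma2} (moderate deviations) they separate quickly; hence with probability $\to 1$ no two packet trajectories ever coincide after an initial transient, the coupling with independent walkers is exact on that event, and the factorization follows. (ii) For $d=2$, meetings occur $\sim\log L$ times, but each meeting only ``restarts'' a pair of walkers from a common location, and by Lemma \ref{lemma:mesoscop} / the annulus estimate a pair that has met will, with probability bounded below, escape to macroscopic distance before meeting again; iterating, after $O(\log\log L)$ rounds the walkers are macroscopically separated and Proposition \ref{lem12} applies to each marginal independently, with the common starting region of each re-coupled walker still within $\delta L$ of some point whose harmonic value one controls — one has to check (again via Proposition \ref{lem12}, whose whole point is the robustness of the hitting expectation under $\delta L$ perturbations of the start) that this does not move the limit. (iii) For $d=1$, the walkers meet $\sim L$ times and the coupling-to-independence strategy fails; here instead one invokes exchangeability — the packet labels are symmetric, the $N$-tuple $(\bm Z_{1,\infty},\dots,\bm Z_{N,\infty})$ is exchangeable, and (in the spirit of \cite{KMP82} and de Finetti, Theorem \ref{thm:definetti}) one shows the limiting joint law of the (two boundary-point) hitting positions is that of i.i.d.\ draws from the harmonic measure, so again $\mathbb E(\prod_i T(\bm Z_{i,\infty}/L))\to[u(x)]^N$. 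This last dimension is handled in detail in Section 3.3 of the paper as flagged.

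I would organize the proof as: first the coupling construction and the observation that $\mathbb E(\prod_i T(\bm Z_{i,\infty}/L)) = \mathbb E(\prod_i T(\widehat W^{(i)}_{\tau_i}/L)) + (\text{error})$ on the good event; second, a lemma bounding $\mathbb P(\text{two packet trajectories ever coincide after time } t_0)$ in terms of $d$ using Lemmas \ref{lemma:invpr}, \ref{lemma2}, \ref{lemma:mesoscop}, \ref{lemma:transience}; third, factorization of the independent-walker expectation via Proposition \ref{lem12} applied marginally (using boundedness of $T$ to convert the $\varepsilon$-estimates on each factor into an $N\varepsilon\|T\|^{N-1}$-estimate on the product); and fourth, the $d=1$ exchangeability argument as a separate case. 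The hard part will be step two in dimension $d=2$: quantifying that $\sim\log L$ meetings still leave the hitting distribution asymptotically factorized. The key technical point is that a meeting is harmless as long as, with overwhelming probability, the pair separates to a scale $L^\vartheta$ (some $\vartheta<1$) before the exit time and before the next meeting, so that the number of ``effective'' meetings is only $O(1)$ or $O(\log\log L)$; Lemma \ref{lemma:mesoscop}, with $\beta$ close to $1$, is exactly the tool for this, and Proposition \ref{lem12}'s tolerance of $\delta L$-sized perturbations of the starting point is what lets one absorb the residual uncertainty in where each re-coupled walker begins. One also has to make sure the particle positions $\bar Y$, which are averaged over the uniform distribution $U(\mathrm{d}\bar Y)$ per \eqref{exp-exp}, do not disrupt the single-packet SSRW description — but since a lone packet's trajectory is a SSRW regardless of where the other particles are, and the interaction of two packets is only through shared sites/particles, the averaging over $\bar Y$ only enters through the meeting structure already being controlled.
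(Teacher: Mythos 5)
Your outline reproduces the paper's strategy in broad strokes: packets move as SSRWs that interact only when they share a site or a carrying particle, the analysis splits by dimension (transience for $d\ge 3$, annulus estimates for $d=2$, exchangeability and de Finetti for $d=1$), and Proposition \ref{lem12} identifies each marginal hitting expectation. But your assembly of the $d=2$ case has a genuine gap. You propose to iterate over meetings until the walkers are \emph{macroscopically} separated and then apply Proposition \ref{lem12} to each marginal from wherever the re-coupled walkers then sit, invoking its tolerance of $\delta L$-perturbations of the starting point. That tolerance is only for starting points within $\delta L$ of $xL$; with probability bounded away from zero the last meeting of a pair occurs only after the packets have wandered a distance of order $L$ from $\langle xL\rangle$, and at that moment the conditional hitting expectation of each walker is $u$ evaluated at its current, macroscopically dispersed position, not $u(x)$. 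Showing that $\mathbb E\bigl(\prod_i u(Y_i/L)\bigr)\to u(x)^N$ for those random decoupling positions is exactly what would remain to be proved (it would require, say, a martingale argument along the coupled paths), and you do not supply it. The paper does not iterate at all: fixing $\varepsilon$, it takes $\beta=1-2\varepsilon$, $\delta=\varepsilon$ and proves (Lemma \ref{lemma3}) that with probability $1-O(L^{-100})$ each pair separates to distance $L^{\beta}$ \emph{before either packet has moved} $L^{\beta+\delta}=o(L)$; this race, which you never state (your condition ``before the exit time'' is far too weak), is the technical heart of the case and needs the Erd\H{o}s--Taylor bound on the number of meetings before separation, the geometric tail on sticking times (Lemma \ref{lemma:sticking}), and the moderate-deviation estimate (Lemma \ref{lemma2}) over $O(L^{1+\beta-\delta})$ steps. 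Only then does Lemma \ref{lemma:mesoscop}, with $\beta$ close to $1$, give ``never meet again'' with probability $1-O(\varepsilon)$ in a single shot, while all packets are still within $o(L)$ of $xL$; the $O(\varepsilon)$ failure event is absorbed by boundedness of $T$, not retried.

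Two further points. Your coupling is declared to agree with independent walkers ``until the first time two packet-trajectories meet,'' yet the packets start $O(1)$ apart and meet early with probability bounded away from zero (tending to one for $d=2$), so the claim that for $d\ge 3$ ``the coupling with independent walkers is exact'' on a high-probability event is false as stated; the replacement by independent walkers must be made only after a de-correlation stopping time, as the paper does with the auxiliary process $\hat{\bm Z}^*_{N,n}$ in an induction on $N$, using Theorem \ref{thm1} for each marginal (each packet individually is still an SSRW despite the early interactions). This is fixable, but the fix is precisely the displacement control discussed above. Finally, for $d=1$ you invoke exchangeability and de Finetti without any mechanism for identifying the mixing measure: the paper needs both the switching argument (Lemma \ref{lemma:exch}) to get asymptotic exchangeability and the $N=2$ sub/supermartingale computation giving $P(\mathcal R,\mathcal R)=x^2$, which together with Jensen's inequality force the de Finetti measure to be $\delta_x$; without that second-moment step, de Finetti yields only a mixture of i.i.d.\ laws, not the asserted factorization.
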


\begin{proof}[Proof of Theorem 2 assuming Proposition \ref{lemma:main}] 
We first prove the result assuming tightness of the sequence 
$\left( \mu^{(L)}_{x,S} \right)_{L=1,2,...}$. Let 
$\mu_\infty$ be a weak limit point. Then putting together
 (\ref{eq:lemmadc}), (\ref{eq:dualityend}) and 
Proposition \ref{lemma:main}, we see that the moments of
$\mu_\infty$ are those of a product of exponential distributions with parameter
$\beta(x)=u(x)^{-1}$. Hence  $\mu_\infty$ is such a product; see Sect. 3.2. 
Since this is true for all limit points of $\mu^{(L)}_{x,S}$, we conclude 
that $\mu^{(L)}_{x,S}$ converges weakly to the measure claimed.

It suffices to prove tightness one coordinate at a time, so we may
assume $S=\{v\}$. Then the same reasoning (with $n^*_v=2$ in Proposition \ref{lemma:main}) implies
that $\sup_L \mathbb E^{(L)}(\xi_{\langle  xL \rangle + v}^2) \le C$ for some 
$C< \infty$.
Chebyshev's inequality then gives
$\mathbb P^{(L)}(\xi_{\langle  xL \rangle + v} >n) < \frac{C}{n^2}$ for all $L$, proving
tightness.
%
%
\end{proof}

We close this section with the following lemma.

\begin{lemma}
\label{lemma:subsystem} Let ${\bm Z_k}$ be a system with $N$
packets, ${\bm Z^+_k}$ be another system with $N+1$ packets, and suppose
both have the same number of particles. Assume further that 
$\bm Z^+_{i,0} = \bm Z_{i,0}$ for $i=1,\dots, N$, 
and $Y^+_{j,0}=Y_{j,0}$ for all $j$.
Then with $\bm Z_k$ coupled to the corresponding coordinates in $\bm Z^+_k$
 in the natural way, 
we have $\bm Z^+_{i,k}=\bm Z_{i,k}$ for all $i=1, \dots, N$ and  $k \ge 1$.
\end{lemma}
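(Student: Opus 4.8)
The plan is to couple the two systems step by step and track the first $N$ packets. The key observation is that the dynamics of $\bm Z^+_k$, restricted to its first $N$ packets, does not depend in any way on the presence or location of the $(N+1)$st packet. To make this precise, I would set up the coupling so that both chains use the same source of randomness at each step: the same choice of particle $j$ (out of $M$, which is the same for both systems), the same choice of neighboring site $w$, and — crucially — the same rule for splitting packets at a site. This last point needs a moment's care: when particle $j$ pools the packets it carries with those at site $w$ and retains a random fraction, the random fraction must be implemented in a way that commutes with restricting attention to a subset of the packets. The natural device is to assign each packet an independent ``coin'' (say a uniform $[0,1]$ random variable, redrawn at each such mixing event), and declare that a packet goes with particle $j$ iff its coin is below a threshold chosen so that the retained count has the prescribed distribution; equivalently, one can independently and uniformly decide for each individual packet whether it stays or goes. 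With that implementation, whether packet $i$ (for $i \le N$) stays at $w$ or is carried off by particle $j$ depends only on packet $i$'s own coin, not on how many other packets are present.

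Given this coupling, I would argue by induction on $k$. The base case $k=0$ is the hypothesis $\bm Z^+_{i,0}=\bm Z_{i,0}$ for $i \le N$ and $Y^+_{j,0}=Y_{j,0}$ for all $j$. For the inductive step, suppose $\bm Z^+_{i,k}=\bm Z_{i,k}$ for all $i\le N$ and $Y^+_{j,k}=Y_{j,k}$ for all $j$. At step $k+1$ both chains pick the same particle $j$ and the same target site $w$. If $w \in \mathcal B_L$, every packet currently carried by particle $j$ is deposited at $w$ in both chains; since the set of first-$N$ packets carried by particle $j$ is the same in both systems by the inductive hypothesis, their new positions agree, and $Y^+_{j,k+1}=Y^+_{j,k}=Y_{j,k}=Y_{j,k+1}$. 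If $w\in\mathcal D_L$, then $Y^+_{j,k+1}=w=Y_{j,k+1}$, and for each $i\le N$ the new location of packet $i$ (either $w$ or ``carried by $j$'') is determined by packet $i$'s coin, which is shared across the coupling, together with the inductive equality of the old configurations; hence $\bm Z^+_{i,k+1}=\bm Z_{i,k+1}$. Particles other than $j$ do not move, and packets other than the relevant ones do not move, so the remaining coordinates we are tracking are unchanged in both chains. This closes the induction, and letting $k\to\infty$ (or simply reading off the statement for each finite $k$) gives the claim.

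The only genuinely delicate point is the one flagged above: exhibiting a single coupling of the ``random fraction'' mechanism that is simultaneously faithful to the marginal law of $\bm Z^+_k$ and is ``packet-wise local'', i.e. under which the fate of each individual packet is a function of that packet's own auxiliary randomness. This is where I would spend essentially all the effort; once it is in place the induction is routine bookkeeping. I would remark that such a coupling does exist and is in fact forced by the description of the mixing step: pooling $m$ packets and keeping a $\mathrm{Binomial}(m,p)$-many of them with $p$ uniform on $[0,1]$ is exactly the same as tossing an independent fair-in-$p$ coin for each packet, so the two systems really are running the same per-packet dynamics on $\{$packets $1,\dots,N\}$, and the extra packet in $\bm Z^+_k$ is a passive passenger.
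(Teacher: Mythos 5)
Your proof is correct, but it takes a different route from the paper's. The paper argues by direct marginalization: it writes down the probability that particle $j$ carries exactly a given subset $\{j_1,\dots,j_l\}$ of the $n$ pooled packets, namely $\tfrac{1}{n+1}\tfrac{1}{\binom{n}{l}}=\tfrac{l!(n-l)!}{(n+1)!}$, and then checks by a two-term binomial identity that summing over the two possible fates of packet $N+1$ in the $(N+1)$-packet system reproduces exactly this quantity, so the one-step transition law of the first $N$ packets is unchanged. You instead realize the mixing rule (uniform count, uniform subset given the count) as a Beta--Bernoulli mixture --- draw a shared threshold $p\sim U[0,1]$ at each mixing event and give each packet an independent $p$-coin, which is legitimate since $\int_0^1 p^l(1-p)^{n-l}\,\mathrm{d}p=\tfrac{l!(n-l)!}{(n+1)!}$ --- and this makes each packet's fate a function of its own auxiliary randomness, so the extra packet is a passive passenger and the induction yields an explicit pathwise coupling. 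Your version is slightly stronger (it constructs the coupling the lemma refers to, rather than just matching marginal transition kernels) and is conceptually natural here, since the shared uniform $p$ echoes the uniform energy fraction in the original $\bm X_t$ dynamics; the paper's version is shorter. One caution: your parenthetical ``equivalently, one can independently and uniformly decide for each individual packet whether it stays or goes'' is wrong if read as unbiased independent coins (that would give a $\mathrm{Binomial}(n,\tfrac12)$ retained count, not a uniform one); the construction is only correct with the shared uniform threshold $p$, as you do state correctly in your final paragraph, so that sentence should be deleted or rephrased.
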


\begin{proof} Without loss of generality, suppose that at step $k$, particle $1$ jumps from
site $z$ to site $w$, and that the union of the packets carried by this particle
or at site $w$ prior to the mixing are labelled $\{1, \dots, n\}$. Then
the probability that after the mixing, the set of packets carried by particle $1$ is
exactly $\{j_1, ... , j_l\} \subset \{1,2,...,n\}$ is given by
$$ p(j_1, ..., j_l) = \frac{1}{n+1} \frac{1}{{n \choose l}} = \frac{l!(n-l)!}{(n+1)!}$$
If packet $N+1$ is not at site $w$, then clearly the situation is not disturbed.
If it is there, we compute
$$ p(j_1, ..., j_l) + p(j_1, ..., j_l,N+1) = \frac{1}{n+2} \frac{1}{{{n+1} \choose l}} +\frac{1}{n+2} \frac{1}{{{n+1} \choose {l+1}}} = \frac{l!(n-l)!}{(n+1)!}\ .$$
Hence the dynamics of the first $N$ packets are unaffected. They are also 
clearly unaffected if particle $1$ drops off its packets at the bath. 
\end{proof}

\medskip
\noindent {\it Remark.}
An implication of Lemma \ref{lemma:subsystem} is that 
when $N>1$, the motion of each individual packet, when seen in the light of (i) and (ii) 
in the proof of Theorem 1, is a SSRW. Thus  
Proposition \ref{lemma:main} is proved if these SSRW are {\it independent},
or close enough to being independent. This is what we will show.

\medskip
\section{LTE for site energies: $d\ge 2$}
\label{sec:d=2}

In Sect. 5.1, we introduce, mostly for convenience, a small modification 
of the process $\bm Z_n$. This modified process is used a great deal in 
the pages to follow.
Sect. 5.2 contains the proof of LTE for site energies (Theorem 2) for $d=2$.
Due to the transience of SSRW, proofs for $d>2$ are simpler and
 are given in Sect. 5.3, along with the proof of Theorem 4.

\subsection{A slightly modified process} 

We have seen in the proof of Theorem 1 that with a suitable modification of
$\bm Z_n$, the movement of the packet becomes a SSRW. We now carry out
the same type of modification systematically under more general conditions:
%

\medskip
The phase space of $\hat{\bm Z}_n$ is $(\mathcal D_L \cup \mathcal B_L)^N$,
and its dynamics are derived from those for $\bm Z_n$ in the following way:
First, let $\bm Z'_{i,n}= \pi(\bm Z_{i,n})$
where $\pi(\bm Z_{i,n})=\bm Z_{i,n}$  if $\bm Z_{i,n} \in \mathcal D_L 
\cup \mathcal B_L$, and $\pi(\bm Z_{i,n})=Y_{\ell,n}$ if $\bm Z_{i,n}=\ell$. 
We then let $t_0=0$, and for $j=1,2,\dots$, define
$$
t_j = \min_{n>t_{j-1}} \{ \exists i \in \{1,2, ..., N \}: \bm Z'_{i,n} \neq \bm Z'_{i,n-1}\}\ .
$$
Finally, set $\hat{ \bm Z}_{i,n}= \bm Z'_{i,t_n}$. That is to say, first we confuse 
being at a site and being carried by a particle at that site, and then we collapse
the times when there is no action according to this way of bookkeeping.

\bigskip \noindent
{\it Remark.} We recognize that $\hat{\bm Z}_n$ is not Markovian (and is not
especially nice as a stochastic process). However, 
the {\it order} in which the $N$ 
packets move about on $\mathcal D_L \cup \mathcal B_L$ is preserved as we
go from $\bm Z'_n$ to $\hat{\bm Z}_n$, even as time has been reparametrized. 
As a consequence, $\hat{\bm Z}_n$ has the following important properties:
\begin{itemize}
\item[1.] For each $L$, the joint asymptotic distribution of 
$(\hat{\bm Z}^{(L)}_{1, \infty}, \dots, \hat{\bm Z}^{(L)}_{N, \infty})$
is identical to that of $(\bm Z^{(L)}_{1, \infty}, \dots, \bm Z^{(L)}_{N, \infty})$.
\item[2.] Each packet individually performs a SSRW on 
$\mathcal D_L \cup \mathcal B_L$
modulo waiting times (during which it stands still).
\item[3.] The addition of new packets in the sense of Lemma 9 does not affect
the order of movements of packets already under consideration.
\end{itemize}

\smallskip
When two packets are at the same site, their next moves are not independent.
We prove a uniform bound on how long they are likely to stick together:

\begin{lemma}
\label{lemma:sticking} Assume $N=2$, and  
$\hat{\bm Z}_{1,k_0} = \hat{\bm Z}_{2, k_0} \not \in \mathcal B_L$. 
Let $\kappa$ be the smallest positive
integer such that $\hat{\bm Z}_{1,k_0+\kappa} \ne \hat{\bm Z}_{2, k_0+\kappa}$.
Then
$$
\mathbb P(\kappa > k) \le  (2/3)^{\frac{k-1}{2}} \qquad k=1,2, \dots.
$$
\end{lemma}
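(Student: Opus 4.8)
The statement is a geometric-type bound on the time two packets, currently at the same site $z \in \mathcal D_L$, remain co-located. The plan is to show that at each ``opportunity'' -- i.e. each time one of the two packets is selected to move -- there is a probability bounded below (uniformly in $L$, in $z$, and in the particle configuration) that the two packets separate, and then convert this into the claimed tail bound for $\kappa$. The main point to isolate is: starting from a configuration in which packets $1$ and $2$ sit at a common site $z$ (either both simply parked at $z$, or one or both carried by particles located at $z$), what is the chance that after the next relevant move they are at different sites?

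First I would unwind what a single step of $\hat{\bm Z}_n$ does when $\hat{\bm Z}_{1,\cdot} = \hat{\bm Z}_{2,\cdot} = z$. In the underlying chain $\bm Z_k$, a particle $j$ at $z$ (or adjacent, but the relevant case for motion of these packets is a particle at $z$) is chosen and jumps to a neighbor $w$. If $w \in \mathcal D_L$, the pool of packets carried-by-$j$-plus-sitting-at-$w$ is re-partitioned uniformly; if $w \in \mathcal B_L$ the carried packets are all deposited at $w$. In either case, consider the event that, among the (at most, after accounting for both packets) relevant pool, packets $1$ and $2$ end up split -- one carried on to $w$'s side, one left behind, or both sent to the bath versus not, etc. I would argue that whichever particle is chosen and whichever neighbor $w$ is picked, the conditional probability that $1$ and $2$ separate at this step is at least $1/3$: the worst case is when the pool containing both packets has size exactly $2$ (just packets $1$ and $2$), where a uniform split over subsets-of-a-2-element-set with the $\frac{1}{n+1}\binom{n}{l}^{-1}$ weights of Lemma~\ref{lemma:subsystem} gives probability $2/3$ of separating and $1/3$ of staying together; adding more packets to the pool, or the possibility that only one of $1,2$ is in the moving pool at all (in which case they separate with even higher probability unless the moved one returns), only helps. (I would double-check the boundary case $w \in \mathcal B_L$ separately: if exactly one of $1,2$ is carried by the chosen particle they separate; if both are carried, they both go to the bath together, but then $\kappa$ has already been realized in the sense that they are now both in $\mathcal B_L$ -- I need to make sure the lemma's hypothesis ``$\not\in \mathcal B_L$'' and the definition of $\kappa$ handle this, and if necessary phrase the per-step separation probability as ``$\ge 1/3$'' only for the in-domain steps and note the bath case ends the process favorably.)

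Second, I would set up the time bookkeeping. Each increment $k \mapsto k+1$ of $\hat{\bm Z}$ corresponds to \emph{some} packet moving, but not necessarily one of packets $1,2$. However, a cleaner route: let $r_1 < r_2 < \cdots$ be the successive steps (in the $\hat{\bm Z}$ clock) at which the move involves the site currently occupied by packets $1,2$ -- equivalently, at which at least one of the two packets has a chance to move. Between consecutive such steps packets $1$ and $2$ sit still together, so separation can only happen at the $r_i$'s; and at each $r_i$, by the previous paragraph, the conditional probability of separation given the past is $\ge 1/3$. A slight subtlety: after an $r_i$ at which they do \emph{not} separate, they are together at a (possibly new) common site, and the argument reapplies. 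So $\mathbb P(\text{not yet separated after the first } m \text{ opportunities}) \le (2/3)^m$. Finally I must relate the number of opportunities to $k = $ elapsed $\hat{\bm Z}$-time. In the worst case an opportunity might not arise at every step, so I cannot say $m = k$. But actually in $\hat{\bm Z}_n$ \emph{every} step is, by construction of the $t_j$, a step at which some packet's projected position changes -- and with $N=2$ that packet is $1$ or $2$. If it is packet $1$ or packet $2$ whose position changes, that step \emph{is} by definition a separation (their positions were equal and now at least one differs) unless the changed packet moved in a way that keeps $\hat{\bm Z}_{1} = \hat{\bm Z}_2$ -- which for distinct packets at a common site essentially cannot happen at the level of $\hat{\bm Z}$. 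This suggests the bound should in fact be nearly $(2/3)^{k}$; the stated $(2/3)^{(k-1)/2}$ is weaker and hence I should have room. To be safe I would argue: group the steps in pairs; in any two consecutive $\hat{\bm Z}$-steps there is at least one genuine opportunity for $1,2$ to separate (one cannot have two consecutive steps both being ``only packet $1$ moving but staying glued to packet $2$'', since in $\hat{\bm Z}$ a move of packet $1$ away from the shared site is a move in the recorded coordinate), so among $k$ steps there are $\ge (k-1)/2$ opportunities, each separating with probability $\ge 1/3$ conditionally, giving $\mathbb P(\kappa > k) \le (2/3)^{(k-1)/2}$.

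\textbf{Main obstacle.} The genuinely delicate part is the per-step separation estimate together with the accounting of ``opportunities,'' because the coordinate $\hat{\bm Z}_{i,n}$ identifies ``carried by a particle at $z$'' with ``sitting at $z$,'' so two packets can be at the same $\hat{\bm Z}$-site while being in different physical states in $\bm Z$ (one parked at $z$, the other riding particle $j$ which sits at $z$). I need to verify that in \emph{every} such sub-case the next move that touches site $z$ separates them with probability $\ge 1/3$: the cases are (a) both parked at $z$ and a particle at $z$ picks them up -- then a particle arriving, or a particle already at $z$ moving, re-pools; (b) one parked, one carried by particle $j$ at $z$, and $j$ (or another particle at $z$) moves; (c) both carried by the same particle $j$ at $z$; (d) carried by two different particles both at $z$. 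In cases (a)--(c) the relevant pool is a single re-partitioned set and the $\ge 1/3$ bound from Lemma~\ref{lemma:subsystem}-type combinatorics applies directly; in case (d), if one of the two particles moves it carries its packet away and they separate with probability close to $1$, which is fine. Enumerating these cases and confirming the uniform lower bound is the crux; the rest is a routine geometric-series argument.
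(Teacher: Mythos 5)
Your overall strategy (reduce to the uniform re-partition of a size-two pool, get a uniform constant per ``opportunity'', iterate geometrically over pairs of steps) is the same as the paper's, and your case list (a)--(d) in the last paragraph matches the paper's scenario analysis. But two concrete points in your write-up are wrong, and one of them is exactly the crux of why the exponent is $(k-1)/2$. First, the size-two pool computation is backwards: with the weights $\frac{1}{n+1}\binom{n}{l}^{-1}$ and $n=2$, the particle keeps both packets with probability $1/3$ and keeps neither with probability $1/3$, and in both of these cases the packets remain together; only the ``exactly one kept'' outcome (total probability $1/3$, not $2/3$) leads to separation. You only need ``$\ge 1/3$'' and that is what your final accounting uses, so the bound survives, but the claim ``probability $2/3$ of separating'' is false. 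Moreover, even the $1/3$ outcome does not separate them \emph{at that step}: immediately after the re-pooling both packets are still projected to the same site (the carrying particle is there); separation is only forced at the \emph{next} move of that particle.

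Second, and more seriously, your justification of the step-pairing is based on dismissing precisely the case that forces the pairing. When both packets are carried by the same particle, every jump of that particle is a $\hat{\bm Z}$-step in which \emph{both} coordinates change together; this is not a rare degeneracy, and it can recur on arbitrarily many consecutive steps (each re-pooling keeps both with probability $1/3$). So your parenthetical (``one cannot have two consecutive steps both being only packet $1$ moving but staying glued...'') rules out a vacuous event rather than the real one, and your suggestion that the bound should be nearly $(2/3)^k$ is false: starting from the both-carried-by-one-particle configuration, the very next step separates with probability $0$. The correct repair is the paper's scenario argument: if exactly one packet is carried, or the two are carried by different particles, the next step separates with probability $1$; if both ride the same particle, the next step is a joint move, but the re-pooling at the destination leaves exactly one packet on the particle with probability $\ge 1/3$, which forces separation at the step after; if both are parked, an arriving particle picks up one or both (each with probability $1/3$), reducing to the previous cases without consuming a $\hat{\bm Z}$-step. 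This gives conditional separation probability $\ge 1/3$ in every window of two consecutive steps, hence $(2/3)^{(k-1)/2}$. Finally, your reading of the bath case is also reversed: if the carrying particle jumps to $\mathcal B_L$, both packets are deposited at the same bath site and never separate, so this is not ``favorable''; the paper's proof tacitly ignores this boundary scenario as well (it is irrelevant where the lemma is applied, far from $\mathcal B_L$), but you should not cite it as helping the estimate.
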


\begin{proof}  The only way to find out what happens in the $\hat{\bm Z}$-process
is to go back to the corresponding step in the $\bm Z$-process. Below we 
enumerate all possible scenarios for $\bm Z_{i,t_{k_0}}, i=1,2$, that
correspond to $\hat{\bm Z}_{1,k_0} = \hat{\bm Z}_{2, k_0}$, and consider 
for each scenario the probability of the two packets staying together in the
next one or two steps:

\smallskip \noindent
{\it Scenario 1.} At time $k_0$, exactly one of the packets is carried by a particle, 
or the two packets
are carried by different particles. In both cases, $\kappa=1$, i.e., they will separate
in the next step. 

\smallskip \noindent
{\it Scenario 2.} Both packets are carried by the same particle. Then 
$\mathbb P(\kappa=1) = 0$ but $\mathbb P(\kappa=2) \ge \frac13$. 
Reason: This particle jumps, carrying both packets to the next site, where
with probability $\frac13$ it drops one packet and carries the other,
a scenario that is guaranteed to lead to $\kappa=2$.

\smallskip \noindent
{\it Scenario 3.} Neither packet is carried by a particle. Before the next move can
occur, a particle has to enter the site,
and with probability $(\frac13,\frac13,\frac13)$, picks up (i) neither, (ii) one, 
or (iii) both of the packets. If (i) occurs, Scenario 3 is repeated. 
(ii) and (iii) are followed by Scenarios 1 and 2 
respectively.  
\end{proof}

{\it Notation:} In this paper, we denote every universal constant by $C$, so that 
each occurrence of $C$ may stand for
a different number, even in the same line.


\subsection{Proof of Theorem 2: $d=2$}

We now focus on the planar case. Let $\bm Z_n=\bm Z_n^{(L)}$ be 
as in Proposition 4,  and
$\hat{\bm Z}_n$ the modification of $\bm Z_n$ as defined in Sect. 5.1.
For $\beta, \delta \in (0,1)$ and $i,j \in \{1,\dots, N\}$, we define
\begin{eqnarray*}
\tau_{i,j} & = &
\min \{ n: \dist (\hat{\bm Z}_{i,n}, \hat{\bm Z}_{j,n}) > 
L^{\beta}\} \\
\mbox{ and } \qquad  \mathcal T_i & = &
\min \{ n: \dist (\hat{\bm Z}_{i,n}, \langle xL \rangle  )>
L^{\beta + \delta}\}. 
\end{eqnarray*}
In general, the definitions of $\tau_{i,j}$ and $\mathcal T_i$ depend on packets
other than $i$ and $j$ (due to the way we collapse time when going from $\bm Z_n$
to $\hat{\bm Z}_n$), so let us first assume these are the only two packets present.

\begin{lemma}
\label{lemma3} Consider $\mathcal D \subset \mathbb R^2$, and assume $N=2$. 
Then for every $\beta \in (0,1)$ and $\delta \in (\frac{1-\beta}{3},1-\beta)$ there is a  constant $C= C(\beta, \delta)$ such that
$$ \mathbb P(\mathcal T_1 <\tau_{1,2}) <\frac{C}{L^{100}} \qquad
\mbox{ for all sufficiently large } L\ .
$$
\end{lemma}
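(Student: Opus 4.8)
The plan is to show that before packet $1$ wanders a distance $L^{\beta+\delta}$ from $\langle xL\rangle$, the two packets are overwhelmingly likely to have separated to distance $L^\beta$. The natural way to organize this is to break the time axis into \emph{epochs} delimited by the consecutive meeting times of the two packets, and to argue that each epoch either ends with separation to scale $L^\beta$ or contributes a genuine displacement of packet $1$. More precisely, let $0 = s_0 < s_1 < s_2 < \cdots$ be the successive times (in the $\hat{\bm Z}$ clock) at which $\hat{\bm Z}_{1,n} = \hat{\bm Z}_{2,n}$ after having been apart, and between consecutive meetings run the \emph{difference walk} $D_n = \hat{\bm Z}_{1,n} - \hat{\bm Z}_{2,n}$. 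While the two packets are not at a common site their moves are independent SSRW steps (property 2 of $\hat{\bm Z}_n$ from Sect.~5.1 together with Lemma \ref{lemma:subsystem}), so $D_n$ behaves like a (time-changed) SSRW on $\mathbb Z^2$ away from the origin; when they coincide, Lemma \ref{lemma:sticking} gives a uniform geometric bound $\mathbb P(\kappa > k) \le (2/3)^{(k-1)/2}$ on how long they stay stuck. The event $\{\mathcal T_1 < \tau_{1,2}\}$ requires that every one of these epochs fails to separate the packets to scale $L^\beta$, \emph{and} that the accumulated motion of packet $1$ nonetheless reaches scale $L^{\beta+\delta}$.

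The key quantitative input is Lemma \ref{lemma:mesoscop} (the $d=2$ annulus estimate): starting the difference walk from distance $\sim 1$ from the origin, the probability that $\|D_n\|$ reaches scale $L^\beta$ before returning to $0$ is $\gtrsim$ a constant (indeed one can get it close to, say, $\beta/2$, but any fixed positive lower bound suffices here). So the number $\mathcal N$ of epochs needed before separation is stochastically dominated by a geometric random variable with a fixed parameter $p_0 > 0$; in particular $\mathbb P(\mathcal N > m) \le (1-p_0)^m$. Now I would quantify the cost of $\mathcal N$ being large: during each completed epoch the two packets, viewed individually, each perform a SSRW on $\mathbb Z^2$, and the number of steps they spend near the origin within that epoch before being pushed to scale $L^\beta$ (conditioned on not yet reaching it) is, by standard planar SSRW estimates, of order at most $L^{2\beta}$ steps with overwhelming probability, while the probability of spending more than $R \cdot L^{2\beta}$ steps there decays like $e^{-cR}$ (here Lemma \ref{lemma2}, the moderate-deviations bound, and the invariance principle, Lemma \ref{lemma:invpr}, are the tools). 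The point is that $\mathcal T_1 < \tau_{1,2}$ forces packet $1$ to travel $L^{\beta+\delta}$, which for a SSRW requires $\gtrsim L^{2(\beta+\delta)}$ steps; combining this with the per-epoch step bound, one needs $\mathcal N \gtrsim L^{2\delta}$ epochs. Then
$$
\mathbb P(\mathcal T_1 < \tau_{1,2}) \ \le\ \mathbb P(\mathcal N \gtrsim L^{2\delta}) + (\text{moderate-deviation error}) \ \le\ (1-p_0)^{c L^{2\delta}} + C e^{-c L^{2\delta}},
$$
which is $\le C L^{-100}$ for all large $L$ since $\delta > 0$. The hypothesis $\delta \in (\tfrac{1-\beta}{3}, 1-\beta)$ enters at two points: the upper bound $\delta < 1-\beta$ keeps the relevant scale $L^{\beta+\delta}$ strictly inside the domain $\mathcal D_L$ (so that the walk really is an unobstructed SSRW on the scales we reason about, and Lemma \ref{lemma:mesoscop} applies with its ``$\gtrsim L$'' outer radius interpreted as the boundary of $\mathcal D_L$), while the lower bound $\delta > \tfrac{1-\beta}{3}$ is what is needed to make the sticking time from Lemma \ref{lemma:sticking} negligible against the per-epoch excursion length — each sticking episode lasts $O(\log L)$ with overwhelming probability, and $\mathcal N \cdot \log L$ must stay well below $L^{2(\beta+\delta)}$, which is automatic in this range.

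The main obstacle, and the step I expect to require the most care, is making rigorous the ``per-epoch'' decomposition: $\hat{\bm Z}_n$ is explicitly \emph{not} Markovian (noted in the Remark of Sect.~5.1), so one cannot blithely restart and multiply conditional probabilities. The cleanest fix is to do all the probabilistic accounting back in the genuinely Markovian chain $\bm Z_n$ (or the continuous-time $\bm Y_t$), where the particle positions $Y_{j,n}$ supply the missing state, and only \emph{then} push forward to $\hat{\bm Z}_n$; the strong Markov property of $\bm Z_n$ at the meeting times $t_{s_i}$ licenses the geometric-domination argument for $\mathcal N$, and the collapse of waiting times does not change which sites the packets visit or the order in which they do so. A secondary technical point is boundary effects: near $\partial\mathcal D_L$ the difference walk is no longer a free SSRW, but since $\delta < 1-\beta$ we only ever care about behavior within distance $L^{\beta+\delta} = o(L)$ of $\langle xL\rangle$, which is well inside $\mathcal D_L$ for $x \in \mathcal D$ and $L$ large, so the free-walk estimates of Lemmas \ref{lemma2}, \ref{lemma:invpr}, and \ref{lemma:mesoscop} all apply verbatim. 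Assembling these pieces yields the stated bound, with the power $100$ being arbitrary — any fixed polynomial rate is available since the true decay is stretched-exponential in $L^{2\delta}$.
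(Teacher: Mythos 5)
There is a genuine gap at the quantitative core of your argument: the two scales you compare coincide. You allow up to $\mathcal N\approx L^{2\delta}$ epochs, each lasting $\lesssim L^{2\beta}$ steps, so the total time available before $\tau_{1,2}$ on the complementary event is of order $L^{2\beta+2\delta}=L^{2(\beta+\delta)}$ --- which is exactly the diffusive time scale at which a SSRW typically attains displacement $L^{\beta+\delta}$. Hence the event ``packet $1$ reaches distance $L^{\beta+\delta}$ within the allotted time'' has probability of order one, not $Ce^{-cL^{2\delta}}$; Lemma \ref{lemma2} (or any deviation bound) gives decay only when the displacement exceeds the square root of the elapsed time by a growing factor. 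So your displayed chain of inequalities fails, and the epoch bookkeeping with threshold $L^{2\delta}$ proves nothing. What is needed is a \emph{polynomial gap} between a high-probability upper bound on the time to separate to distance $L^{\beta}$ and the diffusive time $L^{2(\beta+\delta)}$, and this is precisely the role of the hypothesis $\delta>(1-\beta)/3$ --- not the control of sticking times, which total only $O(\log^{3}L)$ (Lemma \ref{lemma:sticking} plus an $O(\log^{2}L)$ bound on the number of meetings) and are negligible at every polynomial scale. The paper's route is to compare both random times to the single intermediate scale $CL^{1+\beta-\delta}$: the difference walk exceeds $L^{\beta}$ within $kL^{2\beta}$ steps with probability $\ge 1-(1-p)^{k}$, applied with $k=L^{1-\beta-\delta}$ (this is where $\delta<1-\beta$ is actually used), the number of meetings up to that time is $\le\lceil 100/\pi\rceil\log^{2}L$ except with probability $C/L^{100}$ (Erd\H{o}s--Taylor), so $\mathbb P(\tau_{1,2}>CL^{1+\beta-\delta})<C/L^{100}$; and $\mathbb P(\mathcal T_1<CL^{1+\beta-\delta})$ is super-polynomially small by moderate deviations precisely because $\beta+\delta>\tfrac12(1+\beta-\delta)$, i.e.\ $\delta>(1-\beta)/3$.

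A second, independent error is the claimed constant per-epoch escape probability. Lemma \ref{lemma:mesoscop} concerns a walk started at distance about $L^{\beta}$ from the origin (the situation \emph{after} $\tau_{1,2}$, used later for (\ref{Ti})); it says nothing about a walk started adjacent to the origin. For a planar SSRW started one step from $0$, the probability of reaching radius $L^{\beta}$ before returning to $0$ is of order $1/\log L$, which tends to zero --- this is exactly why the number of meetings before separation is of order $\log L$ and why the paper needs the Erd\H{o}s--Taylor return-count estimate rather than a geometric bound with fixed parameter. This slip alone is not fatal (with the correct rate one still gets $\mathbb P(\mathcal N>m)\le e^{-cm/\log L}$, which is super-polynomially small for $m$ a large multiple of $\log^{2}L$, let alone $L^{2\delta}$), but together with the first point it shows the accounting must be redone: take an epoch-count threshold of order $\log^{2}L$ (or bypass epochs entirely, as the paper does, with the global time bound $CL^{1+\beta-\delta}$) so that the resulting total-time bound sits polynomially below $L^{2(\beta+\delta)}$, and only then can a deviation estimate for packet $1$'s displacement close the argument. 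Your identification of the non-Markovian issue and the remedy of doing the probabilistic work in $\bm Z_n$ is sound, but it does not repair the scale mismatch above.
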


\begin{proof} Our plan is to write
\begin{eqnarray*} \label{T1}
\mathbb P \left( \mathcal T_1 <\tau_{1,2} \right)
&<&  \mathbb P \left( \mathcal T_1 <\tau_{1,2},  \tau_{1,2} <  CL^{1 + \beta - \delta} \right)
 + \mathbb P \left(\tau_{1,2} >  CL^{1 + \beta - \delta} \right)\\
& < & \mathbb P \left( \mathcal T_1<  CL^{1 + \beta - \delta} \right) + 
\mathbb P \left(\tau_{1,2} >  CL^{1 + \beta - \delta} \right)\ ,
\end{eqnarray*}
and prove that each of the two terms above is $<\frac{C}{L^{100}}$ for large enough $L$.

Consider first the second term. Decomposing the steps before $\tau_{1,2}$ 
according to whether $\hat{\bm Z}_{1,n} = \hat{\bm Z}_{2,n}$, we claim that 
\begin{equation}
\label{eq:waspinequality}
 \mathbb P(|\{ n:  \hat{\bm Z}_{1,n} \neq \hat{\bm Z}_{2,n}, n \leq \tau_{1,2} \}| > L^{1+\beta - \delta}) < \frac{C}{L^{100}},
\end{equation}
where $|.|$ denotes the cardinality of a set. To see that,
we let $a_0 = 0$, and define
$$ a_n=
\min \{k > a_{n-1} \,|\, \mathbf{\hat{Z}}_{1,k} \neq
 \mathbf{\hat{Z}}_{2,k}\} \qquad \mbox{ and } \qquad 
U_{n} = \mathbf{\hat{Z}}_{1,a_{n}} -
\mathbf{\hat{Z}}_{2,a_{n}}\ .
$$
Then the process $U_{n}$ is a planar SSRW as long as it is away from the
origin. Whenever the SSRW would reach the origin, $U_n$ performs
two steps of the SSRW thus avoids the origin (more precisely,
for any $x \in \mathbb Z^2$ and any $e_i,e_j \in 
\{ (0,1),(0,-1),(1,0),(-1,0)\}$,
$\mathbb P(U_{n+1} = x+ e_i | U_n = x) = 1/4$ if $x + e_i \neq (0,0)$
and
$\mathbb P(U_{n+1} = e_j | U_n = e_i) = 1/16$). Observe that by construction
$$
  | \{n \,|\, \mathbf{\hat{Z}}_{1,n} \neq \mathbf{\hat{Z}}_{2,n}; n
  \leq \tau_{1,2}\} | = \min_{n} \{ \|U_{n}\| > L^{\beta}\}\ .
$$
It is easy to show by the invariance principle that there is a $p>0$ such that
$$
\mathbb  P( \min_{n} \{ \|U_{n} \| \geq L^{\beta}\} < L^{2\beta}) \geq p
$$
holds for any starting position $U_0$ with $\| U_0\| < L^{\beta}$. In fact, 
the left hand side converges to $e^{-1/2}$ as $L \rightarrow \infty$, but we will not need this.
By induction, $\| U_n\|$ reaches $L^{\beta}$ in time $kL^{2 \beta}$ with 
probability at least $1-(1-p)^k$. Since $1-\beta-\delta >0$, the choice $k = L^{1-\beta - \delta}$ gives (\ref{eq:waspinequality}).

Now it is well known that for two dimensional random walks, the number of returns to the origin up to time $L^{1+\beta-\delta}$ is $O(\log L)$ as $L \to \infty$. 
Furthermore, formula (3.11) in \cite{ET60} implies that 
the probability of the number of returns to the origin being bigger than 
$\lceil \frac{100}{\pi} \rceil \log^{2} L$ is bounded by $C/L^{100}$. Letting
$\varsigma_k$ be the duration that the two packets stick together
at their $k$th meeting, we note that these random variables are independent, 
and each is stochastically bounded by the geometric distribution in Lemma \ref{lemma:sticking}. Thus we have
\begin{eqnarray*}
& & \mathbb P \left(|\{n \leq \tau_{1,2} : \mathbf{\hat{Z}}_{1,n} =\mathbf{\hat{Z}}_{2,n}\} | >  CL^{1 + \beta - \delta} \right)\\
 &< & \mathbb P \left(\# \mbox{ meetings before } \tau_{1,2}  > \lceil \frac{100}{\pi} \rceil  \log^{2} L \right)
\ + \ 
 \mathbb P\left(\sum_{m=1}^{\lceil \frac{100}{\pi} \rceil \log^{2} L} \varsigma_m > C L^{1+ \beta - \delta}\right)\\
 & < & \frac{C}{L^{100}} +  \frac{C}{L^{100}}\ .
\end{eqnarray*}
This completes the proof of $\mathbb P \left(\tau_{1,2} >  CL^{1 + \beta - \delta} \right)
< \frac{C}{L^{100}}$.

To finish, we let $W_n$ denote the SSRW corresponding to $\hat{\bm Z}_{1,k}$ with
waiting times collapsed, and note that
\begin{eqnarray*}
\mathbb P \left( \mathcal T_1<  CL^{1 + \beta - \delta} \right) & < & 
\mathbb P \left(\sup_{n<CL^{1 + \beta - \delta}} \|\sum_{k=1}^n W_k \| > L^{\beta + \delta} \right)\\
& < & \sum_{n=1}^{CL^{1 + \beta - \delta}} \mathbb P \left(\| \sum_{k=1}^n W_k \| > L^{\beta + \delta} \right)\ .
\end{eqnarray*}
Since our assumption $(1-\beta)/3 < \delta$ implies $ \beta + \delta > 
\frac12 (1 + \beta - \delta)$, it follows by moderate deviation theorems such as
Lemma \ref{lemma2} (by projecting onto one coordinate, for example) that
each of the probabilities in the last line is bounded above by a term of the form
$< c_1 e^{-c_2 L^{c_3}}$.
Thus the sum is $<\frac{C}{L^{100}}$ for large $L$. 

The proof of the lemma is now complete.
\end{proof}

\smallskip
We now return to the case of $N$ packets for arbitrary $N$, and let 
$\mathcal T_i $ and $\tau_{i,j}$ be as defined at the beginning of
this subsection.

\begin{lemma} Let  $\mathcal D \subset \mathbb R^2$. Given any $\varepsilon>0$, let
$\beta = 1-2\varepsilon$ and $\delta=\varepsilon$. Then for any 
 $i \in \{1, \dots, N\}$, the following holds for all sufficiently large $L$: 
\begin{equation} \label{Ti}
 \mathbb P (\hat{\bm Z}_{i,n} = \hat{\bm Z}_{j,n} \in \mathcal{D}_{L} \mbox{ for some } n> 
\mathcal T_i \mbox{ and some } j \ne i) < 4(N-1)\varepsilon\ .
\end{equation}
\end{lemma}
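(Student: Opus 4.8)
The plan is to reduce the $N$-packet statement to the two-packet estimate in Lemma \ref{lemma3} by a union bound, being careful about the fact that the stopping times $\mathcal T_i$ and $\tau_{i,j}$ depend, through the time-collapsing in the definition of $\hat{\bm Z}_n$, on all $N$ packets. First I would fix $i$ and note that by property 3 of $\hat{\bm Z}_n$ (addition of packets in the sense of Lemma \ref{lemma:subsystem} does not alter the order of moves of the packets already present), the order of moves of the pair $(i,j)$ in the full $N$-packet system is the same as in the system containing only packets $i$ and $j$; hence $\tau_{i,j}$ computed in the $N$-packet system agrees with the corresponding quantity in the two-packet subsystem, and likewise for $\mathcal T_i$. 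This lets me transport Lemma \ref{lemma3} verbatim: for each $j\ne i$, $\mathbb P(\mathcal T_i < \tau_{i,j}) < C/L^{100}$, provided $\beta=1-2\varepsilon$ and $\delta=\varepsilon$ satisfy the hypotheses of Lemma \ref{lemma3}, i.e. $(1-\beta)/3 < \delta < 1-\beta$, which reads $2\varepsilon/3 < \varepsilon < 2\varepsilon$ — true.

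Next I would decompose the bad event in (\ref{Ti}) according to which packet $j$ collides with $i$ after time $\mathcal T_i$:
\begin{equation*}
\mathbb P\big(\hat{\bm Z}_{i,n}=\hat{\bm Z}_{j,n}\in\mathcal D_L \text{ for some } n>\mathcal T_i,\ j\ne i\big)
\le \sum_{j\ne i} \mathbb P\big(\hat{\bm Z}_{i,n}=\hat{\bm Z}_{j,n}\in\mathcal D_L \text{ for some } n>\mathcal T_i\big).
\end{equation*}
For a fixed $j$, I would split on whether $\tau_{i,j} > \mathcal T_i$ or not. On $\{\mathcal T_i < \tau_{i,j}\}$ we already have probability $< C/L^{100}$. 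On the complement, at time $\tau_{i,j}$ the two packets are at mutual distance $> L^{\beta}$, while by definition of $\mathcal T_i$ packet $i$ stays within distance $L^{\beta+\delta}$ of $\langle xL\rangle$ only up to time $\mathcal T_i$; so after time $\tau_{i,j}\wedge\mathcal T_i$ I want to say that the chance the two ever meet again in $\mathcal D_L$ is small. The natural way is: for them to meet again, the difference walk (which behaves like a planar SSRW away from the origin, cf. the $U_n$ process in the proof of Lemma \ref{lemma3}) must return from distance $\sim L^{\beta}$ to distance $0$ while it is still inside the relevant region; but Lemma \ref{lemma:mesoscop} (the annulus/gambler's-ruin estimate for $d=2$) says that, started at distance $L^{\beta}$ inside $\mathcal D_L$ of diameter $\sim L$, the walk exits the domain before hitting the origin with probability $> \beta - \varepsilon = 1-3\varepsilon$ for large $L$. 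Thus the probability that packets $i$ and $j$ meet again after separation, before leaving, is at most $3\varepsilon + o(1)$, hence $< 4\varepsilon$ for large $L$. Summing over the $N-1$ values of $j$ gives the bound $4(N-1)\varepsilon$ claimed (the polynomially small terms being absorbed).

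The main obstacle I anticipate is the bookkeeping needed to apply Lemma \ref{lemma:mesoscop} cleanly: that lemma is about a single SSRW $W_n$ and its distance from the origin versus from radius $\sim L$, whereas here the relevant object is the \emph{difference} $\hat{\bm Z}_{i,n}-\hat{\bm Z}_{j,n}$, which is only a SSRW while the two packets are apart and becomes ill-behaved when they coincide — but since we are asking for the probability of \emph{ever} returning to $0$, we may stop at the first return, so the difference walk is a genuine SSRW on the event of interest up to that stopping time. One must also translate "packet $i$ stays near $\langle xL\rangle$" into a statement that the difference walk stays in a region of diameter $O(L)$ inside which $0$ is "interior" in the sense required by Lemma \ref{lemma:mesoscop}; this uses that $\mathcal D_L$ has diameter $\sim L \gg L^{\beta+\delta}$ and that $\langle xL\rangle$ is at macroscopic distance from $\mathcal B_L$ — here I would shrink attention to $x$ in a compact subset of $\mathcal D$, or note the statement is for fixed $x\in\mathcal D$ so $\dist(xL,\mathcal B_L)\sim L$. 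A secondary nuisance is making precise that $\tau_{i,j}$ and $\mathcal T_i$, as defined via the global time-collapse, really do coincide with their two-packet analogues; this is exactly property 3 above and should be stated as such rather than re-proved.
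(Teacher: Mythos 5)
Your proposal is correct and follows essentially the same route as the paper: extend the two-packet estimate $\mathbb P(\mathcal T_i<\tau_{i,j})<C/L^{100}$ to the $N$-packet process via the order-preservation property of the time-collapse (Lemma \ref{lemma:subsystem} and the Remark in Sect.~5.1), apply Lemma \ref{lemma:mesoscop} to the difference walk $\hat{\bm Z}_{i,n}-\hat{\bm Z}_{j,n}$ to get the $1-3\varepsilon$ escape bound after separation, and sum over $j\ne i$ to obtain $4(N-1)\varepsilon$. Your accounting ($C/L^{100}+3\varepsilon+o(1)<4\varepsilon$ per pair) matches the paper's ($\varepsilon+3\varepsilon$ after taking $L$ large), and your check that $\delta=\varepsilon$, $\beta=1-2\varepsilon$ satisfy the hypotheses of Lemma \ref{lemma3} is exactly what is needed.
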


\begin{proof} First we claim that for each
$i,j \in \{1, \dots, N\}$ with $i \ne j$, we have
\begin{equation} \label{tauij}
\mathbb P(\mathcal T_i <\tau_{i,j}) <\frac{C}{L^{100}}\ ,
\end{equation}
i.e., Lemma 11 in fact holds for any pair $i,j$
in a process with $N$ packets. To go from $N=2$ to general $N$,
observe that while the definitions of $\mathcal T_i$ and $\tau_{i,j}$
depend on the packets present (more time steps are collapsed when there
are fewer packets), (\ref{tauij}) concerns only the {\it relation} between 
$\tau_{i,j}$ and $\mathcal T_i$, not the actual values of these random variables,
and relations of this type are not affected by the presence of other packets
as noted in the Remark in Sect. 5.1. 

Now for fixed $j \ne i$, Lemma \ref{lemma:mesoscop} applied to 
$W_n=\hat{\bm Z}_{i,n} - \hat{\bm Z}_{j,n}$ 
tells us that with probability $>1-3\varepsilon$, packets $i$ and $j$ do not meet 
after $\tau_{i,j}$ for $L$ large. (The result in Lemma 4 is not affected by waiting times.)
Thus picking $L$ large enough so the right side of (\ref{tauij}) is $<\varepsilon$, we have
\begin{eqnarray*}
& & \mathbb P (\hat{\bm Z}_{i,n} = \hat{\bm Z}_{j,n} \in \mathcal{D}_{L} \mbox{ for some } n> 
\mathcal T_i )\\
& < & \mathbb P(\mathcal T_i  < \tau_{i,j}) + 
 \mathbb P (\hat{\bm Z}_{i,n} = \hat{\bm Z}_{j,n} \mbox{ for some } n> 
\mathcal T_i | \mathcal T_i  > \tau_{i,j})\\
& < & \varepsilon + 3 \varepsilon = 4 \varepsilon\ .
\end{eqnarray*}
Summing over all $j \ne i$ gives (\ref{Ti}).
\end{proof}

For $d=2$, we  now prove Proposition 4, from which Theorem 2 follows as explained in
Sect. 4.3.

\begin{proof}[Proof of Proposition 4 ($d=2$):] We will prove 
\begin{equation}\label{indep}
\left|\mathbb E \left( \prod_{j=1}^N T \left( \frac{\hat{\bm Z}^{(L)}_{j, \infty}}{L}\right) \right) -  u(x)^N \right| \to 0 \qquad \mbox{ as } L \to \infty
\end{equation}
inducting on $N$. The case of $N=1$ is Theorem 1. We now assume (\ref{indep})
has been proved for a process with $N-1$ packets, and note that when embedded
in a process with $N$ packets, the same asymptotic distribution holds for packets
$1,2,\dots, N-1$. 

Let an arbitrarily small $\varepsilon>0$ be fixed. We consider $L$ large enough
for (\ref{Ti}) to hold, and for such $L$, we define the stopping time 
$\mathcal S$ to be the smallest 
$n > \mathcal T_N$ such that $\hat{\bm Z}_{N,n}=\hat{\bm Z}_{j,n} \in \mathcal{D}_{L} $ for
some $j \ne N$ (setting $\mathcal S=\infty$ otherwise),
and define another process $\hat{\bm Z}^*_{N,n}$ with the 
property that $\hat{\bm Z}^*_{N,n}=\hat{\bm Z}_{N,n}$
for $n \le \mathcal S$, and  it is a SSRW independent
of the movements of the other $N-1$ packets after time $\mathcal S$. 
This ensures (i) $\hat{\bm Z}^*_{N,\infty}$
is independent of the joint distribution of $\hat{\bm Z}_{j,\infty}, j=1,\dots, N-1$,
and (ii) $\mathbb P(\hat{\bm Z}^*_{N,\infty} \ne \hat{\bm Z}_{N,\infty}) < 
4(N-1) \varepsilon$ for all large enough $L$.

Simplifying notation by dropping the superscript$\ ^{(L)}$ and abbreviating
$T \left( \frac{\hat{\bm Z}_{j, \infty}}{L}\right)$ as $T(z_j)$ and
$T \left( \frac{\hat{\bm Z}^*_{N, \infty}}{L}\right)$ as $T(z^*_N)$, we have
\begin{eqnarray*}
 |\mathbb E (\Pi_{j=1}^N T (z_j)) - u(x)^N |
& \le & | \mathbb E ((\Pi_{j=1}^{N-1} T (z_j)) \cdot ( T( z_N) - T( z^*_N) )) |\\
& & \quad + \  |\mathbb E (\Pi_{j=1}^{N-1} T (z_j))  \cdot (\mathbb E(T(z_N^*)) - \mathbb E(T(z_N)) ) |\\
& &  \quad + \  |\mathbb E (\Pi_{j=1}^{N-1} T (z_j))  \cdot (\mathbb E(T(z_N)) - u(x)) |\\
& & \quad  + \ |\mathbb E (\Pi_{j=1}^{N-1} T (z_j))  \cdot u(x) - u(x)^N |\ .
\end{eqnarray*}
Of the 4 lines on the right side, the first and the second $= O(\varepsilon)$,
due to the fact that $T$ is bounded and 
$\mathbb P(T(z_N) \ne T(z^*_N)) = O(\varepsilon)$;
the third tends to $0$ as $L \to\infty$ by Theorem 1 and
  Proposition 3; and the fourth tends to $0$ as
$L \to \infty$ by our inductive hypothesis. This completes the inductive step and the proposition.
\end{proof}

\subsection{Related proofs}

Details aside, the proof of the $d=2$ case of Theorem 2 can be summarized
 as follows: 
We fix two distinct length scales,
$L^{\beta + \delta}$ and $L^\beta$, and consider $\mathcal T_i$, the time it takes
packet $i$ to attain a net displacement of $L^{\beta + \delta}$, and $\tau_{i,j}$, the time
it takes packets $i$ and $j$ to separate by a distance of $L^\beta$. Notice that
$\hat{\bf Z}_{i,n} - \hat{\bf Z}_{j,n}$ is a SSRW except for the fact that the packets 
tend to stick together for a random time with finite expectation when they meet.
We then showed that

(a) with high probability, $\mathcal T_i > \tau_{i,j}$, due to the difference in 
length scale and 

\quad also to the fact that $O(\log L)$, the number of encounters between 
packets 

\quad $i$ and $j$ before $\tau_{i,j}$, is insignificant; 

(b) two packets that are $L^\beta$ apart  are not likely to meet.

\noindent It follows from (a) and (b) that after time $\mathcal T_i$,
the trajectories of packets $i$ and $j$ are effectively independent, and
the desired result follows from the harmonic measure characteristization
of hitting probability starting from $x \in \mathcal D$. 

\medskip
The proofs below follow the same argument, with some simplifications.

\begin{proof}[Proof of Proposition 4 ($d>2$)]  Transience of SSRW in $d>2$
simplifies the estimates. Specifically, let $W_n$ be a SSRW in $\mathbb Z^d$, $d>2$. Then the number of encounters in (a) can be estimated by the fact that 
given $\varepsilon>0$, there exists $K$ such that
$$\mathbb P(\# \mbox{ encounters } > K ) < \varepsilon \ ,
$$
and (b) follows from the fact that given $\varepsilon>0$, there exists
$C_1$ such that 
$$
\mathbb P(W_k =0 \mbox{ for some } k |W_0 > C_1 ) < \varepsilon\ .
$$
Indeed the two length scales in the proof in Sect. 5.2 can, if one so chooses, be 
replaced by two suitably related constants $C_1 < C_2$.
\end{proof}

\smallskip
\begin{proof}[Proof of Theorem 4 for $d \ge 2$] Theorem 4 differs from Theorem 2 
in that the initial locations of the packets may be 
$O(L^\vartheta)$ apart. As can be seen from the sketch of proof above,
the following two places in the argument may be affected: (i) With regard to
the number of encounters before $\tau_{i,j}$ in (a), the probability of meeting 
at least once cannot be increased if the packets are farther apart, and 
once they meet, the probability of meeting again is independent
of their initial separation. (ii) For each $L$, 
when rescaled back to $\mathcal D$ the packets do not start from
$x$ but from $x_j$ with $|x_j-x| = O(L^{\vartheta-1})$. The convergence
of hitting probabilities starting from these slightly perturbed initial conditions 
is covered by Proposition \ref{lem12}.
\end{proof}


\medskip
\section{LTE for site energies: $d=1$}
\label{sec:d1}

In dimension $1$, independent random walkers meet too often for the type
of argument in Section 5 to work. On the other hand, when two packets meet,
it suggests the possibility of exchangeability, and we will make use of this in
our proof. 
We may assume without loss of generality that $\mathcal D = [0,1]$,
so that $\mathcal B_L = \{-1,L+1\}$, and write $\mathcal L = -1, \mathcal R = L+1$.

\subsection{The case of $N=2$}

The argument in this subsection is borrowed from \cite{RY07}; we need only
the $N=2$ case, which is considerably simpler.
Starting from usual initial conditions, we let $A_{i} = \mathcal{L}$ or $\mathcal{R}$, 
and define
\begin{eqnarray*}
P_L(A_{1}, A_{2}) & = & \mathbb{P}( \mathbf{Z}^{(L)}_{1, \infty} = A_{1}, 
\mathbf{Z}^{(L)}_{2, \infty} = A_{2}) ,\\
 P_L(A_{1}, *) & = & \mathbb{P}( \mathbf{Z}^{(L)}_{1, \infty} = A_{1}), \qquad 
P(\cdot, \cdot) = \lim_{L \to \infty} P_L(\cdot, \cdot) , \mbox{ etc. } 
\end{eqnarray*}

\begin{lemma}
It is sufficient to show that
$$
P(\mathcal{R}, \mathcal{R}) = x^{2} \,.
$$
\end{lemma}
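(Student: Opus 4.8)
The plan is to show that the single identity $P(\mathcal R,\mathcal R)=x^{2}$ forces Proposition \ref{lemma:main} in dimension $1$ for \emph{every} $N$, which by the reduction in Sect.~4.3 is exactly what is needed for Theorem 2 when $d=1$. Fix $N\ge 1$ and let $\bm Z^{(L)}_k$ be the chain associated with some $\check{\underline n}_*$ with $\sum_{v}n^*_v=N$, coupling the systems with different numbers of packets via Lemma \ref{lemma:subsystem} so that a single infinite sequence $\bm Z^{(L)}_{1,\infty},\bm Z^{(L)}_{2,\infty},\dots$ is defined. Since $\mathcal B_L=\{\mathcal L,\mathcal R\}$, each $\bm Z^{(L)}_{i,\infty}\in\{\mathcal L,\mathcal R\}$; encode it by the $\{0,1\}$-valued variable $\chi^{(L)}_i$ with $\chi^{(L)}_i=1$ iff $\bm Z^{(L)}_{i,\infty}=\mathcal R$. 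Because $T$ takes only the values $T(0)$ and $T(1)$ on $\partial\mathcal D=\{0,1\}$ and the solution of (\ref{eq:defu}) on $[0,1]$ is $u(x)=(1-x)T(0)+xT(1)$, we have
\[
\prod_{i=1}^{N} T\!\left(\tfrac{\bm Z^{(L)}_{i,\infty}}{L}\right)=T(1)^{\,\#\{i\le N:\,\chi^{(L)}_i=1\}}\,T(0)^{\,\#\{i\le N:\,\chi^{(L)}_i=0\}}\,,
\]
so everything is governed by the limiting joint law of $(\chi^{(L)}_1,\chi^{(L)}_2,\dots)$.

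The key structural input is that this limiting law is \emph{exchangeable}. In $d=1$ each packet performs SSRW on $\{\mathcal L,0,1,\dots,L,\mathcal R\}$ modulo waiting times; these walks are recurrent and the typical absorption time is $O(L^{2})$, so any two packets coincide before either is absorbed with probability tending to $1$. Once two packets share a site the transition rule is symmetric in their labels, so interchanging the identities of a coinciding pair preserves the law; combining this with Lemma \ref{lemma:subsystem} (adding packets does not change the order of the moves of the others), one gets that for every fixed permutation $\sigma$ the laws of $(\chi^{(L)}_1,\dots,\chi^{(L)}_N)$ and $(\chi^{(L)}_{\sigma(1)},\dots,\chi^{(L)}_{\sigma(N)})$ coincide in the $L\to\infty$ limit. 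By de Finetti's theorem (Theorem \ref{thm:definetti}) there is a distribution function $F$ on $[0,1]$ with
\[
\lim_{L\to\infty}\mathbb P\big(\chi^{(L)}_1=\dots=\chi^{(L)}_k=1,\ \chi^{(L)}_{k+1}=\dots=\chi^{(L)}_{k+m}=0\big)=\int_{0}^{1}\theta^{k}(1-\theta)^{m}\,dF(\theta)\,.
\]
Summing over the $\binom{N}{k}$ choices of which packets land at $\mathcal R$ and applying the binomial theorem yields
\[
\lim_{L\to\infty}\mathbb E\!\left(\prod_{i=1}^{N} T\!\left(\tfrac{\bm Z^{(L)}_{i,\infty}}{L}\right)\right)=\int_{0}^{1}\big(\theta T(1)+(1-\theta)T(0)\big)^{N}\,dF(\theta)\,.
\]

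It remains to pin down $F$. The one-packet case — one-dimensional gambler's ruin, equivalently Proposition \ref{lem12} — gives $\int_0^1\theta\,dF(\theta)=\lim_L\mathbb P(\chi^{(L)}_1=1)=x$, while the hypothesis $P(\mathcal R,\mathcal R)=x^{2}$ is precisely $\int_0^1\theta^{2}\,dF(\theta)=x^{2}$. Hence $F$ has zero variance, so $F=\delta_{x}$, and the last display equals $\big(xT(1)+(1-x)T(0)\big)^{N}=u(x)^{N}$. This is Proposition \ref{lemma:main} for $d=1$ and arbitrary $N$, which completes the deduction. The step I expect to be the main obstacle is the exchangeability claim: one must quantitatively control the probability that some pair of packets fails to meet before absorption (a recurrence/exit-time estimate for one-dimensional SSRW as sketched in Sect.~3.1), and must argue carefully, using Lemma \ref{lemma:subsystem}, that the identity-swapping operation is consistent across all values of $N$ so that de Finetti may be applied to the whole sequence rather than to one fixed $N$.
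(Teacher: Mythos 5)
You have proved a sufficiency statement, but not quite the one this lemma is making, and the route you chose has a hole at its crucial step. The lemma sits inside Sect.~6.1, ``The case of $N=2$'': the claim is that $P(\mathcal R,\mathcal R)=x^2$ suffices for the $N=2$ instance of Proposition \ref{lemma:main}, and the paper's proof is a two-line computation: by gambler's ruin the marginals satisfy $P(*,\mathcal R)=x$ and $P(\mathcal L,*)=1-x$, so $P(\mathcal R,\mathcal R)=x^2$ forces $P(\mathcal L,\mathcal R)=P(\mathcal R,\mathcal L)=x(1-x)$ and $P(\mathcal L,\mathcal L)=(1-x)^2$; that is, $\bm Z_{1,\infty}$ and $\bm Z_{2,\infty}$ become independent Bernoulli$(x)$ variables in the limit, whence $\lim_L\mathbb E\bigl(T(\bm Z_{1,\infty}/L)\,T(\bm Z_{2,\infty}/L)\bigr)=u(x)^2$. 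The passage from $N=2$ to general $N$ is deliberately postponed to Sect.~6.2. Your proposal instead tries to show that the identity suffices for all $N$ at once by running that later argument (exchangeability, de Finetti, the two moment identities and Jensen); this mirrors the paper's global strategy, so nothing is conceptually wasted, but as a proof of the present lemma it both duplicates Sect.~6.2 and leaves its hardest ingredient unproved.

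The gap is the exchangeability claim. Your justification --- ``once two packets share a site the transition rule is symmetric in their labels'' --- is not correct as stated for the $\bm Z$-process: a packet deposited at a site and a packet carried by a particle located at that site are not in symmetric positions, and two packets carried by distinct co-located particles can have their labels swapped only if one also swaps the future randomness of those two particles, which is legitimate only when they carry no other packets (otherwise the asymptotic law of the remaining packets is disturbed). This is exactly why the paper's Lemma \ref{lemma:exch} introduces ``viable switchings'', proves a uniform lower bound $b(N)>0$ on the probability of a viable switching at each meeting, and combines this with the fact that the number of meetings tends to infinity (using Lemma \ref{lemma:sticking} to control sticking times); only then is the limiting law exchangeable. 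You also need the compactness/diagonal-subsequence step to obtain a single limiting joint law of $(\chi_1,\chi_2,\dots)$ to which de Finetti applies, together with the remark that every limit point is pinned down by the same two moments. None of this is supplied, and you flag the first item as an anticipated obstacle rather than resolving it. So either carry out the Lemma \ref{lemma:exch} argument in full, or --- much simpler, and what the paper intends here --- prove only the $N=2$ sufficiency by the marginal computation above and leave the bootstrap to Sect.~6.2.
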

\begin{proof} We know from 1d SSRW  (or the gambler's ruin problem) that
\begin{equation}\label{N=1}
P(*, \mathcal{R}) = x
\qquad \mbox{ and } \qquad P( \mathcal{L},*)=1- x \,.
\end{equation}
If $P(\mathcal{R}, \mathcal{R}) = x^2$, then 
$P(\mathcal L, \mathcal{R}) = P(*, \mathcal{R}) - P(\mathcal{R}, \mathcal{R})
=x-x^2 = x(1-x)$, which is also equal to $P(\mathcal{R}, \mathcal L)$ by
symmetry. Thus as $L \to \infty$, $\bm Z_{i,\infty}$ and $\bm Z_{2,\infty}$ are
independent Bernoulli random variables with weights $(1-x, x)$, giving
$$
\lim_{L \rightarrow \infty} \mathbb{E}\left( T\left( \frac{\mathbf{{Z}}_{1, \infty}}{L} \right) T\left( \frac{
    \mathbf{{Z}}_{2, \infty}}{L} \right) \right)  =[u(x)]^2\,.
$$
\end{proof}

\begin{lemma}
\label{lem:lowerbd} \ 
$P(\mathcal{R}, \mathcal{R}) \geq x^{2 } $
\end{lemma}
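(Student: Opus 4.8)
The statement to be shown is the lower bound $P(\mathcal R,\mathcal R)\ge x^2$ for the $d=1$, $N=2$ dual chain started from the usual initial condition (both packets at $\langle xL\rangle$). The plan is to exploit the fact that in one dimension the two packets meet very often, and that at each meeting the mixing step makes their labels exchangeable. Concretely, I would first argue that, in the modified process $\hat{\bm Z}_n$, once the two packets occupy the same site their subsequent joint evolution is invariant under interchanging the two packet labels: this is exactly the content of the mixing rule together with Lemma~\ref{lemma:subsystem}, and it says the pair $(\hat{\bm Z}_{1,\infty},\hat{\bm Z}_{2,\infty})$ has an exchangeable law conditioned on the event that the packets meet at least once before being absorbed. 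Since the two packets start at the \emph{same} site, they trivially meet, so their terminal pair $(\hat{\bm Z}_{1,\infty},\hat{\bm Z}_{2,\infty})\in\{\mathcal L,\mathcal R\}^2$ is an exchangeable pair of $\{0,1\}$-valued random variables.

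The second step is to use positive correlation of exchangeable $\{0,1\}$ pairs. For any exchangeable pair $(\epsilon_1,\epsilon_2)$ of Bernoulli variables, writing $p=P(\epsilon_1=1)=P(\epsilon_2=1)$, one has by de Finetti-type reasoning (or directly, since $P(\epsilon_1=1,\epsilon_2=1)-p^2 = \operatorname{Var}(\text{mixing parameter})\ge 0$ in the two-point case this is just the Cauchy--Schwarz / FKG-type inequality $P(\epsilon_1=1,\epsilon_2=1)\ge P(\epsilon_1=1)P(\epsilon_2=1)$) that
\[
P_L(\mathcal R,\mathcal R)\ \ge\ P_L(*,\mathcal R)\,P_L(\mathcal R,*).
\]
More carefully: exchangeability gives $P_L(\mathcal L,\mathcal R)=P_L(\mathcal R,\mathcal L)$, and one wants $P_L(\mathcal R,\mathcal R)\ge P_L(*,\mathcal R)^2$. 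The cleanest route is to observe that for an exchangeable pair the two-by-two table is determined by two numbers $a=P_L(\mathcal R,\mathcal R)$ and $b=P_L(\mathcal R,\mathcal L)=P_L(\mathcal L,\mathcal R)$, so $P_L(*,\mathcal R)=a+b$, and nonnegativity of the "conditional variance" $a - (a+b)^2 \ge -(a+b)(1-a-b) + \cdots$ — here I would instead invoke de Finetti's theorem (Theorem~\ref{thm:definetti}) applied to an exchangeable \emph{extension} of the pair, or simply note that any exchangeable $\{0,1\}$ pair can be written as $\int_0^1(\theta^2,\theta(1-\theta),\theta(1-\theta),(1-\theta)^2)\,dF(\theta)$ for some distribution $F$, whence $P_L(\mathcal R,\mathcal R)=\int\theta^2\,dF\ge(\int\theta\,dF)^2=P_L(*,\mathcal R)^2$ by Jensen. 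This requires knowing the pair embeds in an infinite exchangeable sequence, which in this model it does: one can start with $N$ packets all at $\langle xL\rangle$ for any $N$, and Lemma~\ref{lemma:subsystem} guarantees the first two packets then have the same law as in the $N=2$ system while all $N$ are exchangeable among themselves.

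The final step is to pass to the limit: by \eqref{N=1} we have $P_L(*,\mathcal R)\to x$ and $P_L(\mathcal R,*)\to x$ as $L\to\infty$ (1d gambler's ruin, via Proposition~\ref{lem12} with $d=1$), so
\[
P(\mathcal R,\mathcal R)=\lim_{L\to\infty}P_L(\mathcal R,\mathcal R)\ \ge\ \lim_{L\to\infty}P_L(*,\mathcal R)^2 = x^2,
\]
provided the limit $P(\mathcal R,\mathcal R)$ exists — which one can arrange by passing to a subsequence along which all four entries $P_L(\cdot,\cdot)$ converge (they live in $[0,1]$), noting the subsequential limit still satisfies the inequality, and then observing the matching upper bound (proved separately) forces uniqueness of the limit.

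\textbf{Main obstacle.} The routine parts are the marginal limits and Jensen's inequality; the real content is establishing \emph{exchangeability} of $(\hat{\bm Z}_{1,\infty},\hat{\bm Z}_{2,\infty})$ for the two packets started at the same site. The subtlety is that the packets are carried by \emph{particles} whose motion is shared, and collapsing time to form $\hat{\bm Z}_n$ must not break the label symmetry. I expect the crux to be a careful check that the mixing rule (a uniformly random subset of the pooled packets is retained by the particle) together with the bookkeeping of Section~5.1 is genuinely symmetric in the two packet labels at every step where both packets sit at a common site, and that the packets do in fact coalesce-and-meet infinitely often (or at least meet before absorption with probability one), so that the symmetry has a chance to act. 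This is precisely where the one-dimensional recurrence/local-time heuristic from Section~\ref{subsect:rw} is used, and where the argument borrowed from \cite{RY07} does the work.
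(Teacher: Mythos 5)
Your proposal is correct in substance but follows a genuinely different route from the paper. The paper's proof is a two-line martingale argument: take $f(y_1,y_2)=y_1y_2$ evaluated along $(\hat{\bm Z}_{1,n},\hat{\bm Z}_{2,n})$, check that the conditional increment $\Delta f$ vanishes unless the packets sit together and equals $+1$ in expectation when they move together, so $f$ is a bounded submartingale; optional stopping then gives $\mathbb E\bigl(\hat{\bm Z}_{1,\infty}\hat{\bm Z}_{2,\infty}\bigr)\ge \hat{\bm Z}_{1,0}\hat{\bm Z}_{2,0}$, and dividing by $L^2$ yields $P(\mathcal R,\mathcal R)\ge x^2$. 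You instead get the bound from exact exchangeability of $N$ packets started at a common site (symmetry of the mixing rule plus the consistency supplied by Lemma \ref{lemma:subsystem}), Kolmogorov extension, de Finetti (Theorem \ref{thm:definetti}) and Jensen, then pass to the limit using $P_L(*,\mathcal R)\to x$. This works, and it has the merit of exposing that the lower bound is the ``soft'' half — indeed the paper's Sect.~6.2 proof of independence only genuinely needs the upper bound of Lemma \ref{upbound} to force equality in Jensen. What the paper's submartingale argument buys is robustness in the initial condition: Lemma \ref{lem:lowerbd} is invoked in Sect.~6.2 for packets started at $\langle xL\rangle+v_1$, $\langle xL\rangle+v_2$ with possibly distinct $v_1\neq v_2$ (and at $O(L^\vartheta)$ separation for Theorem 4), where your finite-$L$ law is \emph{not} exchangeable; to cover that case you would have to fall back on the asymptotic switching argument of Lemma \ref{lemma:exch}, i.e.\ essentially rebuild Sect.~6.2, whereas the paper's proof applies verbatim since $L^{-2}f(\hat{\bm Z}_{1,0},\hat{\bm Z}_{2,0})\to x^2$ for any $O(L^{\vartheta})$ offsets. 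Two smaller points: your parenthetical claim that an exchangeable Bernoulli \emph{pair} is automatically positively correlated is false (sampling without replacement gives negative correlation), but your argument does not rely on it since you correctly route through the embedding into exchangeable $N$-tuples for all $N$; and the exchangeability you need is cleanest stated for $\bm Z_k$ itself rather than the time-collapsed $\hat{\bm Z}_n$ — the terminal pair is the same, so the bookkeeping issue you worry about is harmless.
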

\begin{proof}
Let $f( x, y) = xy$. For each $\hat{\bf Z}^{(L)}_n$-process as defined in Sect. 5.1, define
$$
  \Delta f(\mathbf{\hat{Z}}_{1, n}, \mathbf{\hat{Z}}_{2,n}) : = \mathbb{E}( f(
  \mathbf{\hat{Z}}_{1, n+1}, \mathbf{\hat{Z}}_{2, n+1}) \,|\, \mathbf{\hat{Z}}_{1, n},
  \mathbf{\hat{Z}}_{2, n}) - f( \mathbf{\hat{Z}}_{1,n}, \mathbf{\hat{Z}}_{2, n}) \,.
$$
Then $\Delta f = 0$ for all situations except when
$\mathbf{\hat{Z}}_{1, n} = \mathbf{\hat{Z}}_{2, n}$, in which case we have 
$$
  \mathbb{E}( \Delta f \,| \, \mbox{ two packets move together }) =
  \frac{1}{2}((i-1)^{2} + (i+1)^{2}) - i^{2}  = 1 \,,
$$ 
where $i$ is the location of the packets; otherwise $\Delta f  = 0$.

Therefore, for any finite $L$, $f(\mathbf{\hat{Z}}_{1, n}, \mathbf{\hat{Z}}_{2, n})$ is a bounded
submartingale. As $n \rightarrow \infty$, $f(\mathbf{\hat{Z}}_{1,
  n}, \mathbf{\hat{Z}}_{2, n})$ converges to $f(\mathbf{\hat{Z}}_{1, \infty},
\mathbf{\hat{Z}}_{2, \infty})$ almost surely. Furthermore, by the submartingale property, we have
$E( f(\mathbf{\hat{Z}}_{1, \infty},
\mathbf{\hat{Z}}_{2, \infty})) \geq 
E( f( \mathbf{\hat{Z}}_{1, 0},
\mathbf{\hat{Z}}_{2, 0}))$.
Thus
$$ P(\mathcal{R}, \mathcal{R})  =  \lim_{L \rightarrow \infty} L^{-2}\mathbb{E}( f(\mathbf{\hat{Z}}_{1, \infty},
\mathbf{\hat{Z}}_{2, \infty})) 
\geq \lim_{L \rightarrow \infty} L^{-2} f( \mathbf{\hat{Z}}_{1, 0},
\mathbf{\hat{Z}}_{2, 0}) = x^{2} \,.
$$
\end{proof}

\begin{lemma} \label{upbound} \ 
$P(\mathcal{R}, \mathcal{R}) \leq x^{2} $
\end{lemma}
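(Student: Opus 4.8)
The plan is to combine the lower bound $P(\mathcal R,\mathcal R)\ge x^2$ from Lemma \ref{lem:lowerbd} with a symmetric argument that produces $P(\mathcal L,\mathcal L)\ge (1-x)^2$, and then close the inequality using the marginals \eqref{N=1}. Concretely, I would first observe that the function $g(a,b)=(L-a)(L-b)$ (measuring distance of the two packets from $\mathcal R=L+1$, up to an irrelevant shift) is, by exactly the computation in the proof of Lemma \ref{lem:lowerbd}, a bounded submartingale for the $\hat{\bm Z}^{(L)}_n$-process: $\Delta g=0$ except when the two packets sit at a common site $i$ and move together, in which case $\mathbb E(\Delta g)=\tfrac12((L-i+1)(L-i+1)+(L-i-1)(L-i-1))-(L-i)^2=1>0$. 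Passing to the limit and rescaling by $L^{-2}$ as before yields $P(\mathcal L,\mathcal L)\ge (1-x)^2$.

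Now I would assemble the pieces. From the two marginal identities in \eqref{N=1} and the obvious relations
\begin{eqnarray*}
P(\mathcal R,\mathcal R)+P(\mathcal R,\mathcal L) &=& P(*,\mathcal R)=x,\\
P(\mathcal L,\mathcal L)+P(\mathcal L,\mathcal R) &=& P(\mathcal L,*)=1-x,
\end{eqnarray*}
together with the symmetry $P(\mathcal R,\mathcal L)=P(\mathcal L,\mathcal R)$, we get $P(\mathcal R,\mathcal R)-P(\mathcal L,\mathcal L)=x-(1-x)=2x-1$. Combining this with $P(\mathcal L,\mathcal L)\ge (1-x)^2$ gives $P(\mathcal R,\mathcal R)=(2x-1)+P(\mathcal L,\mathcal L)\ge (2x-1)+(1-x)^2=x^2$, which only re-proves the lower bound; so instead I use it in the other direction: $P(\mathcal R,\mathcal R)=(2x-1)+P(\mathcal L,\mathcal L)$ and I need an \emph{upper} bound on $P(\mathcal L,\mathcal L)$. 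That is obtained by running the submartingale argument on $g(a,b)=(a+1)(b+1)$ (distance from $\mathcal L=-1$), which is \emph{also} a submartingale by the same convexity computation, giving $P(\mathcal R,\mathcal R)\ge x^2$ again — so the genuinely new input must be a matching \emph{supermartingale}.

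The key realization, then, is that $f(a,b)=ab$ fails to be a martingale only because of the positive drift $+1$ accumulated at each coincident-site joint step, and the total number of such steps is controlled: it equals the local time at $0$ of the difference walk $\hat{\bm Z}_{1,n}-\hat{\bm Z}_{2,n}$ before absorption, which for a one-dimensional SSRW on an interval of length $L$ is $O(L)$ in expectation — of lower order than $L^2$. Therefore I would instead work with the corrected process $f(\hat{\bm Z}_{1,n},\hat{\bm Z}_{2,n})-A_n$, where $A_n$ counts the accumulated drift (number of joint moves from a common site up to time $n$, with the appropriate weight), which is a genuine bounded martingale. Optional stopping at $n=\infty$ gives $L^{-2}\mathbb E(f(\hat{\bm Z}_{1,\infty},\hat{\bm Z}_{2,\infty}))=L^{-2}f(\hat{\bm Z}_{1,0},\hat{\bm Z}_{2,0})+L^{-2}\mathbb E(A_\infty)$, and since $\mathbb E(A_\infty)=O(L)=o(L^2)$, the correction term vanishes in the limit, yielding $P(\mathcal R,\mathcal R)=\lim_{L\to\infty}L^{-2}\mathbb E(f(\hat{\bm Z}_{1,\infty},\hat{\bm Z}_{2,\infty}))=\lim_{L\to\infty}L^{-2}f(\hat{\bm Z}_{1,0},\hat{\bm Z}_{2,0})=x^2$. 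The main obstacle is the uniform-in-$L$ estimate $\mathbb E(A_\infty)=O(L)$: I would get it by noting that between consecutive coincidences the difference walk is an unbiased SSRW on $\mathbb Z$ (with a harmless two-step perturbation near $0$, exactly as in the proof of Lemma \ref{lemma3}) absorbed at $\pm(L+2)$, so the number of returns to $0$ has expectation $O(L)$ by the classical gambler's-ruin/local-time computation, and each coincidence contributes $O(1)$ expected joint moves by Lemma \ref{lemma:sticking}; this simultaneously gives both the upper \emph{and} lower bounds and hence $P(\mathcal R,\mathcal R)=x^2$.
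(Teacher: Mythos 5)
Your argument is sound in substance but takes a genuinely different route from the paper. The paper keeps the perturbed function $f(x,y)=xy-c\,|x-y|$ and, crucially, replaces $\hat{\bm Z}_n$ by the half-step-shifted process $\tilde{\bm Z}_n$ (mix before jump), precisely so that the separation probability at a coincidence is at least $1/3$ (inequality (\ref{prob})); choosing $c>3$ then makes $f$ a supermartingale, and the upper bound follows with no quantitative control on how often the packets meet. You instead keep $f(x,y)=xy$ for $\hat{\bm Z}_n$, subtract the accumulated drift $A_n$ to obtain a martingale, and kill the correction in the limit by showing $\mathbb E(A_\infty)=O(L)=o(L^2)$. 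This buys you the equality $P(\mathcal R,\mathcal R)=x^2$ in one stroke (making Lemma \ref{lem:lowerbd} redundant), at the price of needing the expected-number-of-meetings bound, which the paper asserts in Sects.~3.1 and 6.2 but never proves or uses in expectation form.

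Two points in your sketch need care, both fixable. First, the picture of the difference walk ``absorbed at $\pm(L+2)$'' is not accurate: the difference process is not absorbed at a fixed level, it stops when either packet individually reaches the bath, so the gambler's-ruin count does not apply verbatim. The bound you need does hold, though; for instance, $|\hat{\bm Z}_{1,n}-\hat{\bm Z}_{2,n}|$ is itself a bounded submartingale whose compensator increases (by the separation probability) only at coincidence steps, so the expected number of separations, hence of meetings, is at most $L+O(1)$; combined with Lemma \ref{lemma:sticking} and a conditioning (Wald-type) argument this gives $\mathbb E(A_\infty)\le CL$. Second, the obstruction that forces the paper to pass to $\tilde{\bm Z}_n$ --- two packets carried by the same particle separate with probability zero at the next step of $\hat{\bm Z}_n$ --- is harmless for your scheme, since you need only uniformly bounded expected sticking times, which is exactly Lemma \ref{lemma:sticking}, proved for $\hat{\bm Z}_n$. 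Finally, $f-A_n$ is not literally bounded (only bounded above by $(L+2)^2$ and below by $-A_\infty$), but $\mathbb E(A_\infty)<\infty$ gives the uniform integrability needed to pass to the limit $\mathbb E(f_\infty)=f_0+\mathbb E(A_\infty)$, after which dividing by $L^2$ yields the claim, the contributions of the outcomes involving $\mathcal L$ to $\mathbb E(f_\infty)$ being $O(L)$.
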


The proof of this lemma is simpler if we work with a process that differs from
$\mathbf{\hat{Z}}_n$ by a half-step. More precisely, in $\bm Z_n$, a packet
first jumps before it picks up a random fraction of the packets at its destination.
We let $\bm Z^+_n$ be $\bm Z_n$ with the order of jumping and mixing reversed,
and apply the procedure at the beginning of Sect. 5.1 to $\bm Z^+_n$
to obtain a process we call $\mathbf{\tilde{Z}}_n$ (instead of $\mathbf{\hat{Z}}_n$). 
Clearly, as $L \to \infty$, $\mathbf{\hat{Z}}_n, \mathbf{\tilde{Z}}_n$ and $\bm Z_n$ all 
have the same asymptotic packet distributions. The notation
$P(\cdot, \cdot)$ below refers to $\mathbf{\tilde{Z}}_n$. 

\begin{proof}[Proof of Lemma 15]
We consider the function
$$
  f(x,y) = xy - c |x-y| 
$$
where $c$ is a constant to be determined. Let $\Delta f$ be as before. 
%
%
Then it is easy to see that $\Delta f = 0$ when $\mathbf{\tilde{Z}}_{1, n } \neq 
\mathbf{\tilde{Z}}_{2, n}$. If $\mathbf{\tilde{Z}}_{1, n} = \mathbf{\tilde{Z}}_{2, n}$, 
then  $\Delta f =1$ if the two packets move together in the next step, and
$\Delta f = -c$  if the next step involves exactly one of the packets.
Now the two packets can move together only if they are both available
to be picked up at step $n+1$, and even in that case, the probability that
exactly one of them is picked up by the particle in question is $=\frac13$.
Thus we conclude that independently of what is going on in the $\bm Z^+$-process, 
\begin{equation} \label{prob}
\mathbb{P}(\mathbf{\tilde{Z}}_{1, n+1 } \neq \mathbf{\tilde{Z}}_{2, n+1} | 
\mathbf{\tilde{Z}}_{1, n } = \mathbf{\tilde{Z}}_{2, n}) \ge \frac13\ .
\end{equation}
Choosing $c> 3$ therefore will ensure that $\Delta f \le 0$ at each step.
Arguing as above, we then obtain $P(\mathcal{R}, \mathcal{R}) \leq x^{2}$.
\end{proof}

Notice that the left side of (\ref{prob}) can be zero
if we use $\mathbf{\hat Z}_n$ instead of $\mathbf{\tilde{Z}}_n$, for two packets
carried by the same particle will necessarily move together if the particle
first jumps before it mixes. This is our reason for using $\mathbf{\tilde{Z}}_n$.

\subsection{Proofs of Theorems 2 and 4 in the case $d=1$}

Following \cite{KMP82}, we use the method of exchangeable random variables
to extend the results above to the case of $N$ packets for
arbitrary $N$. 

\medskip
Fix $x \in [0,1]$ and $S \subset \mathbb Z$. Let $v_1, v_2, \dots \in S$ (repeats
allowed). 
For each $N$ and $L$, we consider the process $\bm Z_n=\bm Z_n^{(L)}$ with 
$Z_{i,0}=\langle xL \rangle + v_i, i=1,2, \dots, N$, and 
for $A_j=\mathcal L$ or $\mathcal R$,
 we define
$$
p_{L,v_1, \dots, v_N}(A_1, \dots, A_N) = \mathbb P \left(\frac{\bm Z_{1,\infty}}{L} = A_1, 
\dots , \frac{{\bm Z}_{N,\infty}}{L} = A_N \right)\ ,
$$
where the probability is defined to be the average over all initial particle 
configurations $\bar Y$.
We begin with a lemma that sets the stage for exchangeability.

\begin{lemma}
\label{lemma:exch}
For every $N \in \mathbb Z^+$ and every permutation $\sigma$ 
of $\{1,\dots, N\}$, 
$$ 
\lim_{L \rightarrow \infty}\  \left(p_{L,v_1, \dots, v_N}(A_1, \dots, A_N) - p_{L,v_1, \dots, v_N}(A_{\sigma (1)}, \dots, A_{\sigma (N)}) \right) \ = \ 0.
$$
\end{lemma}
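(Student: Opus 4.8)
The goal is to show that as $L \to \infty$, the dependence of the limiting hitting distribution $p_{L,v_1,\dots,v_N}(A_1,\dots,A_N)$ on the \emph{labeling} of the target points $A_1,\dots,A_N$ disappears, i.e. the packets become exchangeable in the limit. Since a permutation is a composition of transpositions of adjacent indices, it suffices to prove the claim for a single transposition $\sigma = (i\; i{+}1)$, and in fact — because the other packets play no role in the argument — it is enough to treat $N=2$ and show that $p_{L,v_1,v_2}(A_1,A_2) - p_{L,v_1,v_2}(A_2,A_1) \to 0$. The mechanism is that two packets in $d=1$ meet $\Theta(L)$ times before being absorbed, and at each meeting there is a positive probability that they \emph{swap} the particles (or swap which one is picked up), so that the conditional law of the pair $(\bm Z_{1,\infty}, \bm Z_{2,\infty})$ given their past becomes symmetric under interchange.

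\textbf{Key steps.} First I would set up a coupling between the process started from $(v_1,v_2)$ and the process started from $(v_2,v_1)$: run both copies using the same particle clocks and the same jump choices, and couple them so that the \emph{unordered} pair of packet locations agrees in the two copies at all times; the only difference is the label each packet carries. Second, I would identify, in the $\hat{\bm Z}_n$ (or $\tilde{\bm Z}_n$) bookkeeping, the successive times $\rho_1 < \rho_2 < \dots$ at which the two packets occupy the same site with both available to be picked up — i.e. the meeting times at which a genuine ``coin flip'' for the labels occurs. By the argument already used in Lemma~\ref{lemma:sticking} and in the proof of Lemma~\ref{upbound} (inequality (\ref{prob})), at each such meeting there is a probability $\geq \tfrac13$ that the labels are effectively randomized (the particle picks up exactly one of the two packets, and which one is symmetric). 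Third, I would show that with probability tending to $1$ the number of such meetings before absorption goes to infinity: in $d=1$ two SSRW-type walkers confined to an interval of length $L$ meet $\Theta(L)$ times before leaving, as recalled in the discussion of dimension $d=1$ in Sect.~3.1 (local time $\sim \sqrt{n}$, exit time $\sim L^2$), and the finite-expectation sticking times of Lemma~\ref{lemma:sticking} do not change this order. Fourth, I would conclude: conditioned on the last meeting before absorption being a randomizing one — an event of probability $\geq 1 - (2/3)^{k} \to 1$ where $k \to \infty$ is the number of meetings — the conditional joint distribution of $(\bm Z_{1,\infty},\bm Z_{2,\infty})$ is symmetric in the two labels, because after that coin flip the two packets are at the same site and hence have identical future hitting laws regardless of label. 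Taking expectations, $|p_{L,v_1,v_2}(A_1,A_2) - p_{L,v_1,v_2}(A_2,A_1)| \leq \mathbb P(\text{fewer than } k \text{ randomizing meetings}) + (2/3)^{k} \to 0$.

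\textbf{Main obstacle.} The delicate point is that $\hat{\bm Z}_n$ is not Markov, the sticking times are not independent of the walk's position, and the ``meeting'' events depend on the full configuration of particles, so I cannot simply invoke a clean formula. The careful step is therefore Step~3: proving that the number of randomizing meetings before absorption tends to infinity in probability, uniformly enough, while controlling the total time lost to the two packets sticking together (each sticking episode has geometric tail by Lemma~\ref{lemma:sticking}, so $k$ episodes cost $O(k)$ steps in expectation, negligible against the $\Theta(L^2)$ absorption time). One must also make sure the initial offset $v_2 - v_1$, being $O(1)$, is negligible — the two packets meet within $O(1)$ expected steps and then the argument proceeds as if they started together. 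I would quote the one-dimensional local-time / gambler's-ruin facts from Sect.~3.1 and the geometric sticking bound from Lemma~\ref{lemma:sticking}, and the rest is routine.
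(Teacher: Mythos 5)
Your overall strategy is the same as the paper's (reduce to a transposition, switch/randomize the two labels at a meeting, use the abundance of meetings in $d=1$ together with the geometric sticking bound of Lemma \ref{lemma:sticking}), but there is a genuine gap in your Steps 2--3. Your only randomization mechanism is the configuration in which \emph{both} packets are simultaneously dropped off at the same site (``both available to be picked up''), yet you count occasions for it by the number of meetings of the two walkers, which is a different and much larger quantity. A meeting in the collapsed process $\hat{\bm Z}_n$ also occurs when a particle carries packet $j$ onto the site where packet $i$ is itself being carried by a \emph{different} particle; in that configuration no coin flip is available, and the meeting typically ends (one of the carriers jumps away) with zero chance of the two packets ever becoming jointly available during that encounter. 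This is not a removable technicality: Theorem 2 allows an arbitrary $M(L)$, and when $M(L)\gg L^2$ a packet dropped at a site is picked up after a time of order $L/M(L)$, so packets are carried except for a vanishing fraction of time; the expected number of ``both sitting at the same site'' encounters before absorption is then of order $L^2/M(L)\to 0$. In that regime your bound $|p_L(A_1,A_2)-p_L(A_2,A_1)|\le \mathbb P(\text{fewer than }k\text{ randomizing meetings})+(2/3)^k$ cannot be closed, because the per-meeting probability $\ge 1/3$ does not apply to the meetings you are actually able to count.

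This is precisely why the paper enlarges the class of switching events: besides co-location (your mechanism, its case (i)), it allows the configuration where packets $i$ and $j$ are carried by two distinct particles at the same site and neither particle carries any other packet, in which case one switches the packets' labels \emph{together with the randomness of the two particles} (case (ii)). With this enlarged class the paper obtains a conditional lower bound $\mathbb P(F_q\mid F_1^c,\dots,F_{q-1}^c)\ge b(N)>0$ at every encounter, uniformly in $L$ and $M$, and combines it with the fact that the number of encounters tends to infinity. Note also that switching particle randomness can disturb the other $N-2$ packets unless the two particles carry no other packets; this is where the $N$-dependence of $b(N)$ comes from, and it shows that your blanket reduction to $N=2$ (``the other packets play no role'') is too quick: it is defensible for pure co-location switchings (by Lemma \ref{lemma:subsystem} the pair's refined trajectories are unaffected by the other packets), but once the repair via particle swaps is made -- which is needed to cover all $M(L)$ -- the other packets must be explicitly excluded from the switched particles, as in the paper's Cases 2--3.
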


\begin{proof} Since binary transformations generate the symmetric group,
it suffices to consider $\sigma$ with the property that for some $i \ne j$,  
$\sigma (i)=j, \sigma (j)=i$, and $\sigma (\ell) = \ell$ for all $\ell \ne i,j$. 
For fixed $L$, we consider 
the process $\bm Z_n=\bm Z_n^{(L)}$, and begin with the following observation:

 Abbreviating $p_{L,v_1, \dots, v_N}$ as $p$, 
we claim that
$$
p(A_1, \dots, A_N | \bar Y, F) = 
p(A_{\sigma (1)}, \dots, A_{\sigma (N)} | \bar Y, F)
$$
if $F$ is one of the following two types of events:

\smallskip
(i) $F= \{\bm Z_{i,n}=\bm Z_{j,n}\}$ for some $n$;

(ii) for  two particles $k \ne k'$ and for some $n$, 
$$F=\{\bm Z_{i,n}=k, \bm Z_{j,n}=k',  Y_{k,n}=Y_{k', n} \mbox{ and } 
\bm Z_{\ell,n} \ne k, k' \ \forall \ \ell \ne i,j \}\ .
$$
\smallskip
\noindent To see that the asymptotic distributions are as claimed in case (i), 
we simply switch the roles of packets $i$ and $j$ from time 
$n$ on. In case (ii), packets $i$ and $j$ are carried by two different particles,
which are at the same site at time $n$. In this case, we switch not only the
roles of packets $i$ and $j$ but the sets of randomness for particles $k$ and $k'$
from this time on. The condition that these particles  do not carry
other packets at time $n$ ensures that 
the asymptotic distributions of other packets are not affected.
We will refer to an event corresponding to (i) or (ii) above as 
a ``viable switching" for packets $i$ and $j$.

\medskip
Let $\bm{\hat Z}_n$ be the process obtained from $\bm Z_n$. As explained
in Section 5, module waiting times, $\bm{\hat Z}_{i,n} - \bm{\hat Z}_{j,n}$ is a SSRW
when it is $\ne 0$. When $\bm{\hat Z}_{i,n} = \bm{\hat Z}_{j,n}$, the duration
they spend together following each encounter is controlled by Lemma 10. 
As these durations are independent for different encounters and 
are bounded by random variables with finite expectations, it follows from the
discussion in Sect. 3.1 that the number of times
packets $i$ and $j$ meet before reaching the bath
is $O(L)$; we in fact need only that 
\begin{equation}
 \label{eq:encounters}
\mathbb P ( \text{\# of encounters }> C) \rightarrow 1 \quad
\text{ as $L \rightarrow \infty$ for every $C$.}
\end{equation}

Let $\hat \tau_1< \hat \tau_2 < \dots$ be the times packets $i$ and $j$  meet
in $\mathcal D_L$ in the $\bm{\hat Z}_n$-process, i.e. $\bm{\hat Z}_{i, \hat \tau_q}=
\bm{\hat Z}_{j, \hat \tau_q} \in \mathcal D_L$, and 
let $\tau_q$ denote the corresponding times in the
$\bm Z_n$-process. We let $F_q$ denote
the event that a viable switching occurs at the $q$th meeting, and
will show that there exists $b=b(N)>0$ such that
\begin{equation} \label{switchprob}
\mathbb P(F_{q} | \ F^c_1, \dots, F^c_{q-1}, \mbox{ and } \tau_{q} \mbox{ occurs }) > b\ .
\end{equation}
Once we prove this, the assertion in the lemma will follow: We make the relevant
switch the first time a viable switching occurs, and  the probability
that this occurs before the packets reach the baths tends to $1$ as $L \to \infty$.

To prove (\ref{switchprob}), suppose the $q$th meeting
takes place at site $v$. Confusing $\hat{\bm Z}_n$ with $\bm Z_n$ 
momentarily, we assume for definiteness that packet $i$ is the first to arrive at site $v$, where it remains through time $\tau_q$ at which time packet $j$ is brought 
to site $v$ by particle $k'$. Under these assumptions, there are the following possibilities:

\smallskip \noindent 
{\it Case 1.} $\bm Z_{i,\tau_q}=v$. Since $\mathbb P(\bm Z_{j,\tau_q}
=v) = \frac12$, $\mathbb P(F_q) = \frac12$ in this case. 

\smallskip \noindent 
{\it Case 2.} $\bm Z_{i,\tau_q-1}=v$, and $\bm Z_{i,\tau_q}=k'$. In order for
a viable switching to occur, packets $i$ and $j$ must be the only packets carried
by particle $k'$ at the end of step $\tau_q$. There being a maximum of $N$
packets available to be picked up by particle $k'$ at this time, we have
\begin{equation} \label{exchbound}
\mathbb P(\bm Z_{i,\tau_q}=\bm Z_{j,\tau_q}=k' \mbox{ and }
 \bm Z_{\ell, \tau_q} \ne k' \ \forall \ell \ne i,j) \ge \frac{1}{N+1} \cdot \frac{2}{N(N-1)} \ .
 \end{equation}

\smallskip \noindent 
{\it Case 3.} $\bm Z_{i,\tau_q-1}=k$ for some particle $k$. Since $k \ne k'$,
for a viable switching to occur, packet $i$ must be the only packet picked up by
particle $k$ the last time it moved; that probability is $\ge \frac{1}{N(N+1)}$.
As packet $j$ must also be the only packet carried by particle $k'$
at the end of step $\tau_q$, we have $\mathbb P(F_q) \ge \left(\frac{1}{N(N+1)}\right)^2$.

\smallskip
We have thus a lower bound $b$ for $F_q$ that depends only on $N$.
\end{proof}

\smallskip
Notice that the argument above applies equally well to the setting of Theorem 4: 
even with their initial positions $O(L^\vartheta)$ apart,
we still have (\ref{eq:encounters}) beacuse of the gambler ruin's estimate and the argument above.

\smallskip
\begin{proof}[Proof of Theorems 2 and 4 for $d=1$] Let $v_1, v_2, \dots \in S$ and
$p_{L, v_1, \dots, v_N}$ be as defined at the beginning of this subsection. 
For each $N$, $p_{L,v_1, \dots, v_N}$ 
has a convergent subsequence as $L \to \infty$ by compactness. 
It follows that there exists $L_n \to \infty$ so that as $n\to \infty$,
$$
p_{L_n, v_1, \dots, v_N} \Rightarrow p_{\infty, v_1, \dots, v_N} \qquad
\mbox{ for every } N\ .
$$
Equivalently, if $X_1^{(L)}, X_2^{(L)}, \dots$ are $\{0, 1\}$-valued
random variables defined in such a way that when $\{0,1\}$ is identified 
with $\{\mathcal L, \mathcal R\}$,
the joint distribution of $(X_1^{(L)}, \dots, X_N^{(L)})$ is equal to
$p_{L,v_1, \dots, v_N}$ for each $N$, then there is a sequence of  
random variables $(X_1, X_2, \dots)$ to which $(X_1^{(L_n)}, X_2^{(L_n)}, \dots)$ converges.

By Lemma \ref{lemma:exch}, the sequence $(X_1, X_2, \dots)$ is exchangeable,
and de Finetti's Theorem applies (see Sect. 3.1). It follows from Sect. 6.1 that
if  $m$ is the probability in Theorem 5, then
\begin{eqnarray*}
x & = & \mathbb P(X_1=1) = \int_0^1 \theta \mathrm{d}m\ , \\
x^2 & = & \mathbb P(X_1 = X_2=1) = \int_0^1 \theta^2 \mathrm{d}m\ .
\end{eqnarray*}
From these two integrals, together with Jensen's Inequality, it follows that
$m=\delta_x$, the delta function at $x$. That in turn implies, by the characterization
of the measure $m$ in de Finetti's Theorem, that $X_1, \dots, X_N$ are
independent. Since the analysis above applies to any limit point of 
$p_{L, v_1, \dots, v_N}$
as $L \to \infty$, we conclude that $p_{L, v_1, \dots, v_N}$ converges to 
the product measure as claimed.
 \end{proof}


\medskip
\section{LTE for particle numbers and energies}
\label{sec:dense}

This section is dedicated to the proof of Theorem \ref{thm:dense}.

We start with the following simple observation: The condition 
$M(L)/L^d \to \alpha \vol (\mathcal D)$ implies that for any $v \in \mathcal D_L$,
\begin{eqnarray*}
&& \mu^{(L)} \{\# \mbox{ particles at site } v = K \}  \\
&=& {{M(L)} \choose K} \left( \frac{1}{|\mathcal D_L|}\right)^K  \left( 1-\frac{1}{|\mathcal D_L|}\right)^{M(L)-K}
\rightarrow \ \ \frac{\alpha^K}{K!} e^{-\alpha} \quad \text{ as $L \to \infty$}.
\end{eqnarray*}
A similar computation involving finitely many sites proves statement (1) of Theorem \ref{thm:dense}.

We will give a proof of statement (2) of Theorem \ref{thm:dense} in the case
$S=\{ 0\}$. This assumption is not necessary, but it simplifies the notation
considerably, and the proof of the general case is entirely analogous. 

To fix notation, let $K \in \mathbb Z^+$ be the number of particles at site 
$\langle xL \rangle$, and fix  arbitrary nonnegative integers 
$n^*_0$, $\tilde n^*_{\mathcal [1]}, ..., \tilde n^*_{[K]}$, to be used as moments
of the site and particle energies.
For $L$ with
$M(L) \ge K$, let $(\mathcal T_1, ...,\mathcal T_K)$ be an ordered list of
distinct elements of $\{ 1,2,..., M(L)\}$. We let $Q(\mathcal T_1, ...,\mathcal T_K)$ 
denote the event that these are
exactly the particles at site $\langle xL \rangle$, and
define $\check{\underline{n}}_*=
\check{\underline{n}}_*(\mathcal T_1, ...,\mathcal T_K)$ by
\begin{equation}
\label{eq:n*choice2}
n_v = \left\{ \begin{array}{rl}
 n^*_0 &\mbox{ if $v = \langle xL \rangle$} \\
  0 &\mbox{ if $v \neq \langle xL \rangle$,}
       \end{array} \right. \quad
\hat n_v = 0 \ \ \forall  v, \quad 
\tilde n_j = \left\{ \begin{array}{rl}
 \tilde n^*_{[i]} &\mbox{ if $ j = \mathcal T_i$} \\
  0 &\mbox{ otherwise\ .}
       \end{array} \right.
\end{equation} 

\begin{lemma} \label{lemma:last} The following holds for all large enough $L$:
Let $(\mathcal T_1, ...,\mathcal T_K)$ be fixed. We define 
$\check{\underline{n}}_*$ 
as above, and let $N = n^*_0 + \sum_{j=1}^K \tilde n^*_{[j]}$. Then
\begin{equation} \label{F'}
 \int  F(\check{\underline{n}}_*,  \check{\underline{x}} ) 
1_{Q(\mathcal T_1, ...,\mathcal T_K)} \mathrm{d}\mu^{(L)} (\check{\underline{x}}, \bar X) \ \sim \ 
\mu^{(L)}(Q(\mathcal T_1, ...,\mathcal T_K)) \cdot [u(x)]^{N} \ .
\end{equation}
\end{lemma}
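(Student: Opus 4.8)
The plan is to extend the pathwise-duality machinery of Section~\ref{sec:duality} so that it accommodates the indicator $1_{Q(\mathcal T_1,\dots,\mathcal T_K)}$, and then to invoke a (harmless) variant of Proposition~\ref{lemma:main}. The one genuinely new observation is that the particle-position coordinate of $\bm X^{(L)}_t$ is \emph{autonomous}: the positions $\bar X_t$ perform $M$ independent random walks on $\mathcal D_L$ with reflection at the boundary, whose transition kernels do not depend on the energies or the packets. This has two consequences. First, the particle marginal of $\mu^{(L)}$ is the uniform product measure $U$ on $\mathcal D_L^M$, whence $\mu^{(L)}(Q)=U(Q)=(|\mathcal D_L|-1)^{M-K}|\mathcal D_L|^{-M}>0$ for all large $L$. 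Second, since the transition kernel of a single reflected walk is symmetric, $\bar X_t$ started from $U$ is reversible, so not only does $\mathbb P[\bm\sigma]=\mathbb P[\bm\sigma^{-1}]$ continue to hold, but conditioning on the endpoint $\sigma_{\mathrm{end}}$ of a forward particle path corresponds, after reversal, to fixing the \emph{starting} configuration $(\bm\sigma^{-1})_{\mathrm{start}}$ of the dual. Since $F$ does not depend on particle positions while $Q$ depends only on particle positions, the indicator $1_Q$ passes cleanly through the pathwise duality of Lemma~\ref{lemmaduality2}.

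Concretely, I would run the finite-$t$ identity~(\ref{finite-t}) as in the proof of Lemma~\ref{lemma:dualitycons}, starting $\bm X_0$ from $(\check{\underline{x}}_*,U)$ and inserting the factor $1_Q(\bar X_t)$; after applying Lemma~\ref{lemmaduality2} path by path this restricts the sum over forward particle paths $\bm\sigma$ on $[0,t]$ to those with $\sigma_{\mathrm{end}}\in Q$. Using $\mathbb P[\bm\sigma]=\mathbb P[\bm\sigma^{-1}]$ together with the fact (from invariance of $U$) that the $\sigma_{\mathrm{end}}$-marginal of this restricted sub-probability measure is $U$ restricted to $Q$, the right-hand side equals $U(Q)\cdot\int \mathbb E_{(\check{\underline{n}}_*,\bar Y)}\big(F(\check{\underline{n}}_t,\check{\underline{x}}_*)\big)\,U(\mathrm{d}\bar Y\mid Q)$, where the inner expectation is over the dual process $\bm Y_t$. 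Letting $t\to\infty$ — with the convergence of $\bm X_t$ to $\mu^{(L)}$ against the (unbounded) test function $F\cdot 1_Q$ handled exactly as in the proof of Lemma~\ref{lemma:dualitycons}, and with all dual packets eventually deposited in $\mathcal B_L$ — and then switching to the equivalent discrete-time chain $\bm Z_k$ as in~(\ref{eq:dualityend})--(\ref{exp-exp}), I would arrive at the exact identity
\[
 \int F(\check{\underline{n}}_*,\check{\underline{x}})\,1_{Q}\,\mathrm{d}\mu^{(L)}(\check{\underline{x}},\bar X)\;=\;\mu^{(L)}(Q)\cdot\mathbb E\!\left(\prod_{i=1}^N T\!\left(\frac{\bm Z^{(L)}_{i,\infty}}{L}\right)\right),
\]
where $\bm Z^{(L)}_k$ is the $N$-packet chain whose initial packet configuration is $\check{\underline{n}}_*$ — so that all $N$ packets sit at, or are carried by particles sitting at, the single site $\langle xL\rangle$ — and whose initial particle configuration is uniform on $Q$.

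It then remains to prove $\mathbb E\big(\prod_{i=1}^N T(\bm Z^{(L)}_{i,\infty}/L)\big)\to[u(x)]^N$ as $L\to\infty$. This is the assertion of Proposition~\ref{lemma:main} for $S=\{0\}$, modified in two harmless ways: some of the $N$ packets begin carried by the particles sitting at $\langle xL\rangle$ rather than resting on the ground there, and the particle configuration begins from $U(\cdot\mid Q)$ rather than from $U$. Neither modification affects the proof in any dimension. The first is erased the moment one passes to $\hat{\bm Z}_n$, which identifies a packet at a site with a packet carried by a particle at that site. For the second, note that each packet individually performs a SSRW modulo waiting times regardless of the particle configuration, and that the remaining ingredients — the sticking-time bound of Lemma~\ref{lemma:sticking}, the random-walk estimates of Section~\ref{sec:d=2} (based on recurrence for $d=2$ via Lemmas~\ref{lemma3} and~\ref{lemma:mesoscop}, on transience for $d>2$), and, for $d=1$, the encounter lower bound~(\ref{eq:encounters}) together with the uniform-in-$L$ lower bound $b=b(N)$ for a viable switching at each encounter — all concern only the motions of the packets and the local dynamics at sites where packets coincide, none of which is influenced by the global particle configuration ($b$ depends on $N$ alone, and the sticking bounds are uniform). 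Hence the inductive argument of Section~\ref{sec:d=2} for $d\ge 2$, respectively the exchangeability argument of Section~\ref{sec:d1} for $d=1$, applies verbatim and yields $[u(x)]^N$; combined with the displayed identity this gives~(\ref{F'}).

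The main obstacle is the middle step: checking that endpoint-conditioning on the forward particle path reverses cleanly to a fixed starting configuration of the dual, that the normalizing constant produced is precisely $\mu^{(L)}(Q)=U(Q)$, and that the $t\to\infty$ exchange is legitimate against $F\cdot 1_Q$. This is the ``further weakening'' of duality alluded to after Lemma~\ref{lemmaduality2}. The accompanying verification that the random-walk estimates of Sections~\ref{sec:d=2} and~\ref{sec:d1} survive the change of initial particle law is routine but should be recorded.
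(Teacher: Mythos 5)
Your proposal follows essentially the same route as the paper: you insert the indicator $1_{Q}$ into the finite-time duality identity, use path reversal (with $\mathbb P[\bm\sigma]=\mathbb P[\bm\sigma^{-1}]$) to turn the endpoint condition $\sigma_{\mathrm{end}}\in Q$ into a restriction of the dual chain's initial particle configuration to $Q$, factor out $\mu^{(L)}(Q)=U(Q)$, and then check that Proposition~\ref{lemma:main} survives starting the particles from $U(\cdot\mid Q)$ and having some packets initially carried by the particles at $\langle xL\rangle$. This is exactly the paper's argument, with your treatment of the reversal step and of the uniform particle marginal of $\mu^{(L)}$ merely spelled out a bit more explicitly, so the proposal is correct.
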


Observe that the quantities on both sides of (\ref{F'}) are 
independent of $(\mathcal T_1, ...,\mathcal T_K)$, 
as any two lists of $K$ particle names are clearly interchangeable,
so the ``$\sim$" can be interpreted without ambiguity as convergence as $L \to \infty$.
Let us denote the right side of (\ref{F'}) by $\mu^{(L)}(Q_K) \cdot [u(x)]^N$.

\begin{proof}[Proof of Theorem 3(2) assuming Lemma \ref{lemma:last} and $S=\{0\}$] 
 Let
$$A_K =  \{\# \mbox{ particles at site } \langle xL \rangle = K \}\ .
$$
We project $\mu^{(L)} | A_K$, the conditional 
measure of $\mu^{(L)}$ on $A_K$, to the site and particle energy coordinates 
on $\langle xL \rangle$. The resulting probability, $\nu^{(L)}_{x,K}$, will be
viewed as a measure on $\mathbb R^{K+1}$ with coordinates
$(\xi, \omega_1, ..., \omega_K)$, the $\xi$-coordinate corresponding to site
energy. 

Let $n^*_0$, $\tilde n^*_{\mathcal [1]}, ..., \tilde n^*_{[K]}$ be fixed. Using
the notation above, we have
\begin{eqnarray*}
&& \int \frac{\xi^{n^*_0}}{n^*_0 !} \ \prod_{j=1}^K \frac{\omega_j^{\tilde n^*_{ [j]}}}{n^*_{ [j]}!} \ \mathrm{d} \nu^{(L)}_{x,K}  \\
&= & \frac{1}{K !} \sum_{(\mathcal T_1, \dots, \mathcal T_K)}
\int  \frac{\xi_{\langle xL \rangle}^{n^*_0}}{n^*_0 !} \ 
\prod_{j=1}^K \frac{\eta_{\mathcal T_j}^{\tilde n^*_{ [j]}}}{n^*_{ [j]}!} \ 
1_{Q(\mathcal T_1, ..., \mathcal T_K)}\mathrm{d}(\mu^{(L)} ( \check{\underline{x}} , \bar{X})|A_K)\\
&= & \frac{1}{K !} \sum_{(\mathcal T_1, \dots, \mathcal T_K)}
\int  F(\check{\underline{n}}_*(\mathcal T_1, ..., \mathcal T_K),  \check{\underline{x}} ) 
1_{Q(\mathcal T_1, ..., \mathcal T_K)} \ \frac{1}{\mu^{(L)}(A_K)}
\mathrm{d}(\mu^{(L)}( \check{\underline{x}} , \bar{X}) ) \\
& \sim & {{M(L)} \choose K} \cdot \mu^{(L)}(Q_K) [u(x)]^N \cdot
\frac{1}{\mu^{(L)}(A_K)} \ = \  [u(x)]^{N}\ .
\end{eqnarray*}
The first two equalities are by definition. The convergence as $L \to \infty$
and the last
equality are from Lemma \ref{lemma:last} and the comments following that lemma.

As this holds for all $n^*_0$, $\tilde n^*_{\mathcal [1]}, ..., \tilde n^*_{[K]}$, 
we conclude, as in Sect. 4.3, that $\nu^{(L)}_{x,K}$ tends to a product of $K+1$ 
exponential distributions each with mean $u(x)$.
\end{proof}

\smallskip

\begin{proof}[Proof of Lemma \ref{lemma:last}]  Let
$(\mathcal T_1, ..., \mathcal T_K)$ be fixed.
We say a particle configuration $\sigma \in Q$ if $Q(\mathcal T_1, ..., \mathcal T_K)$ holds, and introduce the function
$$ F'(\check{\underline{n}}, \check{\underline{x}},\sigma ) = 
F(\check{\underline{n}},\check{\underline{x}} )
  1_{ \{ \sigma \in Q\} }\ .
  $$
The assertion in Lemma \ref{lemma:last} can then be rewritten as
\begin{equation}
\label{eq:dense2}
\int F'(\check{\underline{n}}_*, \check{\underline{x}},\bar X ) \mathrm{d}\mu^{(L)} (\check{\underline{x}}, \bar X) \sim 
\mu^{(L)}\{\bar X \in Q\} \cdot [u(x)]^{N}.
\end{equation}
As in Section \ref{sec:duality}, we  approximate the left side of 
(\ref{eq:dense2}) by 
$\mathbb E(F'(\check{\underline{n}}_*,\check{\underline{x}}_t, \bar X_t))$,
and decompose into sample paths $\bm \sigma=(\sigma_0, \dots, \sigma_m)$ 
of particle movements on $[0,t]$:
\begin{eqnarray}
&&\int \mathbb E(F'(\check{\underline{n}}_*,\check{\underline{x}}_t, \bar X_t )| 
\check{\underline{x}}_0 = \check{\underline{x}}_*, \bm \sigma) \mathbb
P (\mathrm{d}\bm \sigma)\label{denseformula}\\
&=& \sum_m
\sum_{\substack{\bm \sigma = (\sigma_0, ..., \sigma_m)\\ \sigma_m \in Q}} 
\mathbb E(F(\check{\underline{n}}_*,\check{\underline{x}}_t)| 
\check{\underline{x}}_0 = \check{\underline{x}}_*, \bm \sigma) \mathbb P (\bm \sigma)\nonumber\\
&=& \sum_m
\sum_{\substack{\bm \sigma = (\sigma_0, ..., \sigma_m)\\ \sigma_m \in Q}}
\mathbb E( F(\underline{\check{n}}_t, \underline{\check{x}}_*) | \underline{\check{n}}_0 = \underline{\check{n}}_*, \bm \sigma^{-1})
 \mathbb P (\bm \sigma^{-1})\ . \label{denseformula2}
\end{eqnarray}
The first equality above is the definition of $F'$ and the second follows from
Lemma \ref{lemmaduality2}. Informally, for an $\bm X_t$-trajectory 
that ends in a state with a set of particles at a certain site, its ``dual trajectory",
which is obtained by reversing the paths of the particles, should {\it start} with
the same set of particles at the same site.

Taking the limit $t \to \infty$ in lines (\ref{denseformula}) and 
 (\ref{denseformula2}) we obtain
\begin{equation} \label{E'} 
\int F'(\check{\underline{n}}_*, \check{\underline{x}},\bar X ) \mathrm{d}\mu^{(L)} (\check{\underline{x}}, \bar X)=
\mathbb E \left( \prod_{i=1}^N T \left( 
\frac{\bm Z_{i,\infty}^{(L)}}{L}
\right) 1_{\{\bm Z_0 \in Q \}}\right).
\end{equation}
To handle the right side of (\ref{E'}), we alter the definition of $\bm Z_k$ in 
Sect. 4.3 slightly by restricting $\bar Y$ 
in $\bm Z_0$ to $Q$, everything else unchanged. Let $U_Q$ denote this 
new probability distribution of $\bar Y$, and define
$\mathbb E'(\cdot) = \int \mathbb E_{(\check{\underline{n}}_*,\bar Y)}(\cdot) 
\ U_Q(\mathrm{d}\bar Y)$ (cf (\ref{exp-exp}) in Sect. 4.3), so that the right side of (\ref{E'})
is equal to
$$
\mu^{(L)}(\{\bar Y \in Q\} )\cdot \mathbb E' \left( \prod_{i=1}^N T \left( 
\frac{\bm Z_{i,\infty}^{(L)}}{L}\right)\right)\ .
$$
To obtain (\ref{eq:dense2}), it remains to check that Proposition 4 holds 
with $\mathbb E$ replaced
by $\mathbb E'$:  For $d \ge 2$, this is not an issue, since properties of $\bar Y$ do 
not appear in the proof. For 1d, one needs to check that the switching
arguments in Sect. 6.2 are not affected by the restriction of $\bar Y$ to $Q$,
and that is true as well. 
\end{proof}

\medskip

\section*{Appendix: Proof of Proposition \ref{lem12}}

Our proof uses the continuous mapping theorem, which can be stated as follows
 (see e.g. Theorem 5.1
 in \cite{B68}). As before, weak convergence is denoted by ``$\Rightarrow$", and for
a mapping $f$ and measure $\mu$, $f_*\mu$ is the measure given by
 $f_*\mu(A)=\mu(f^{-1}(A))$.

\smallskip
\begin{theorem}[{\rm Continuous mapping theorem}]
\label{thm:ctsmap}
Let $X$ and $X'$ be separable metric spaces, and let $P(X)$ denote the
set of Borel probability measures on $X$. We consider a Borel measurable mapping
 $f:X \rightarrow X'$ with discontinuity set $D_f$, and let $\mu_n, \mu \in P(X)$
be such that (i) $\mu_n \Rightarrow \mu$ as $n \to \infty$ and (ii) $\mu(D_f) = 0$.
Then $f_*{\mu_n} \Rightarrow f_*\mu$ as $n \to \infty$.
\end{theorem}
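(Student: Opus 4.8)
The plan is to reduce the whole statement to the Portmanteau characterization of weak convergence, which says that for Borel probability measures $\nu_n,\nu$ on a metric space one has $\nu_n \Rightarrow \nu$ if and only if $\limsup_n \nu_n(F) \le \nu(F)$ for every closed set $F$. Since the hypothesis $\mu_n \Rightarrow \mu$ delivers exactly such an inequality, and the desired conclusion $f_*\mu_n \Rightarrow f_*\mu$ is exactly such an inequality for the pushforwards, the entire argument collapses to a single set-theoretic estimate. Before running it, I would record that all sets in sight are Borel: $f^{-1}(F)$ is Borel because $f$ is Borel measurable and $F$ is closed, while $D_f$ is Borel because the continuity points of any map between metric spaces form a $G_\delta$ set, so $D_f$ is $F_\sigma$. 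This guarantees that $\mu(f^{-1}(F))$, $\mu(D_f)$, and $(f_*\mu)(F)=\mu(f^{-1}(F))$ are all well defined.

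The key step is the topological inclusion
\[
\overline{f^{-1}(F)} \ \subset\ f^{-1}(F)\cup D_f ,
\]
valid for every closed $F\subset X'$. To prove it I would take $x\in\overline{f^{-1}(F)}$ with $x\notin D_f$; since $X$ is a metric space there is a sequence $x_k\in f^{-1}(F)$ with $x_k\to x$, and continuity of $f$ at $x$ gives $f(x_k)\to f(x)$. As $f(x_k)\in F$ and $F$ is closed, $f(x)\in F$, i.e.\ $x\in f^{-1}(F)$. This is the one place where the metric (first countable) structure is genuinely used, the argument being sequential.

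With the inclusion in hand the conclusion follows by chaining three facts. For an arbitrary closed $F\subset X'$,
\[
\limsup_n (f_*\mu_n)(F) = \limsup_n \mu_n\big(f^{-1}(F)\big) \le \mu\big(\overline{f^{-1}(F)}\big) \le \mu\big(f^{-1}(F)\big) + \mu(D_f) = (f_*\mu)(F),
\]
where the first inequality is the Portmanteau estimate for $\mu_n\Rightarrow\mu$ applied to the closed set $\overline{f^{-1}(F)}$ together with monotonicity $f^{-1}(F)\subset\overline{f^{-1}(F)}$, the second is subadditivity combined with the inclusion above, and the final equality uses the hypothesis $\mu(D_f)=0$. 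Since $F$ was an arbitrary closed subset of $X'$, the Portmanteau theorem applied now in the reverse direction, to the sequence $f_*\mu_n$, yields $f_*\mu_n\Rightarrow f_*\mu$.

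The only genuine obstacle is establishing the inclusion $\overline{f^{-1}(F)}\subset f^{-1}(F)\cup D_f$ and verifying that every set involved is measurable; the remainder is bookkeeping layered on top of the Portmanteau characterization. In a fully written version I would either cite Portmanteau outright or recall its short proof, approximating the indicator of a closed set $F$ from above by the bounded continuous functions $g_m(x)=\max\!\big(0,\,1-m\,\dist(x,F)\big)$ and passing $m\to\infty$ after using $\int g_m\,d\nu_n\to\int g_m\,d\nu$. I emphasize that separability of $X$ and $X'$ is not needed for the sequential inclusion argument and enters only, if at all, through the standard formulation of the Portmanteau equivalences.
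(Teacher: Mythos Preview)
Your proof is correct. Note, however, that the paper does not supply its own proof of this statement: it is quoted as a classical result and attributed to Billingsley (Theorem~5.1 in \cite{B68}), so there is nothing in the paper to compare against. The argument you give---Portmanteau on closed sets together with the inclusion $\overline{f^{-1}(F)}\subset f^{-1}(F)\cup D_f$---is precisely the standard textbook proof, and in fact is essentially Billingsley's own argument.
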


\smallskip

To fix some notation, for $\mathfrak T \in \mathbb R^+$ we let
$$ X = C([0, \mathfrak T], \mathbb R^d) $$
be the set of continuous maps from $[0, \mathfrak T]$ to
$\mathbb R^d$ endowed
with the sup norm, making it a separable metric space. 
For $a \in \mathbb R^d$, we let $B^a$ be the standard 
Brownian motion starting from $a \in \mathbb R^d$ up to time $\mathfrak T$,
and with a slight abuse of notation, we use $B^a$ to denote also the 
corresponding measure on $X$. Let $W^a_n$ be 
the rescaled SSRW up to time $\lfloor n\mathfrak T \rfloor$ starting 
from $a \in \mathbb R^d$, 
i.e. if $\hat S_k$ is a $d$-dimensional SSRW with $\hat S_0=\langle a \sqrt n \rangle$, then
$W^a_n(k/n) = \frac{\hat S_k}{\sqrt n}$ for $k = 0,1,..., \lfloor n\mathfrak T \rfloor$,
and $W^a_n(t)$ is obtained by 
interpolating linearly between $ t = k/n$ and $t = (k+1)/n$. 
As with $B^a$, we use $W^a_n$ to denote also the corresponding measure on $X$. 
By the invariance principle, 
\begin{equation} \label{invpr}
W^a_n \Rightarrow B^a \qquad \mbox{ as } n \to \infty
\end{equation}
on any interval $[0, \mathfrak T]$. The convergence in (\ref{invpr}) differs from
that asserted in  
Proposition \ref{lem12} in that the latter is for paths that terminate
not at a fixed time but upon reaching $\partial \mathcal D$.

\begin{proof}[Proof of Proposition \ref{lem12}]
Let $x \in \mathcal D$ and $\varepsilon>0$ be given. We write
$$ \mathcal D^1 = \{ y \in \mathbb R^d | \exists z \in \mathcal D, |y-z|<1\},$$
and let $\mathfrak T = \mathfrak T (\varepsilon)$ be such that
a Brownian motion starting from $x$ reaches $\partial \mathcal D^1$ before time $\mathfrak T $ with probability at least $1-\varepsilon/(2\| T\|_{\infty})$.
For $\omega \in X=C([0, \mathfrak T], \mathbb R^d)$, we define
$\tau(\omega) = \min \{ t \in [0, \mathfrak T]: \omega(t) \in \partial \mathcal D \}$
if such a $t$ exists, $= \infty$ if it does not. Then we define $f: X \to \mathbb R$ by
$$
f( \omega) \ = \  \left\{ \begin{array}{rl}
 T(\omega(\tau)) &\mbox{ if $\tau\leq \mathfrak T$} \\
   \max_{y \in \partial \mathcal D} T(y) &\mbox{ if } \tau = \infty\ .
       \end{array} \right.
$$

\begin{lemma} \label{discontinuity} $B^x(D_f) = 0$
\end{lemma}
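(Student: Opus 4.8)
The plan is to show that the map $f$ is continuous at $B^x$-almost every path $\omega$, i.e.\ that the discontinuity set $D_f$ has $B^x$-measure zero. The strategy is to identify a "good" set of paths $G \subset X$ with $B^x(G) = 1$ on which $f$ is continuous, so that $D_f \subset G^c$.

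First I would isolate two potential sources of discontinuity. One comes from the $\tau = \infty$ branch: a nearby path might reach $\partial\mathcal D$ just barely before time $\mathfrak T$ while $\omega$ does not, or vice versa. But this is not really a problem away from the measure-zero set of paths for which $\tau(\omega) = \mathfrak T$ exactly; and in any case, by the choice of $\mathfrak T$ the set $\{\tau = \infty\}$ already has small measure — though for the lemma as stated we need exact measure zero, so I would instead simply note that $\{\omega : \tau(\omega) = \mathfrak T\}$ is $B^x$-null (the hitting time of a $C^2$ boundary by Brownian motion has a continuous distribution, or more elementarily, $B^x\{\tau = t\} = 0$ for each fixed $t$). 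The second and main source of discontinuity is paths $\omega$ that "touch" $\partial\mathcal D$ tangentially without properly crossing it: if $\omega$ reaches $\partial\mathcal D$ at time $\tau < \mathfrak T$ but immediately returns to $\mathcal D$ (or only grazes $\partial\mathcal D$), then a small perturbation can change $\tau$ drastically or miss the boundary entirely.

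The key step, then, is to define $G$ to be the set of paths $\omega$ such that: (i) $\tau(\omega) < \mathfrak T$; (ii) for every $\eta > 0$, $\omega$ enters the open exterior $\mathbb R^d \setminus \overline{\mathcal D}$ at some time in $(\tau(\omega), \tau(\omega)+\eta)$ — that is, $\omega$ genuinely crosses the boundary rather than grazing it; and (iii) $\tau(\omega) \neq \mathfrak T$. I claim $B^x(G) = 1$: property (i) together with (iii) fails only on a null set by the choice of $\mathfrak T$ and the continuity of the distribution of $\tau$, and property (ii) holds a.s.\ by the strong Markov property — conditioned on $B^x_\tau = y \in \partial\mathcal D$, the post-$\tau$ path is a Brownian motion from a boundary point, which immediately visits both sides of the $C^2$ hypersurface $\partial\mathcal D$ almost surely (this is the standard "Brownian motion started on a smooth hypersurface crosses it immediately" fact, provable by comparison with a one-dimensional Brownian motion in the normal coordinate). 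The final step is to verify that $f$ restricted to $G$ is continuous: if $\omega \in G$ and $\omega_k \to \omega$ uniformly, then because $\omega$ strictly exits $\overline{\mathcal D}$ shortly after $\tau(\omega)$, for large $k$ the path $\omega_k$ also exits, forcing $\tau(\omega_k) \le \tau(\omega) + \eta$; and since $\omega$ stays in the open set $\mathcal D$ on $[0,\tau(\omega))$ (strictly, as $\partial\mathcal D$ is first hit at $\tau$), for large $k$ the path $\omega_k$ cannot hit $\partial\mathcal D$ much before $\tau(\omega)$. Letting $\eta \to 0$ gives $\tau(\omega_k) \to \tau(\omega)$, hence $\omega_k(\tau(\omega_k)) \to \omega(\tau(\omega))$ by uniform convergence and continuity of $\omega$, and finally $f(\omega_k) = T(\omega_k(\tau(\omega_k))) \to T(\omega(\tau(\omega))) = f(\omega)$ by continuity of $T$.

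The main obstacle I anticipate is making the "genuine crossing" property (ii) and its almost-sure validity fully rigorous: one needs the regularity of $\partial\mathcal D$ (the $C^2$ hypothesis) to set up normal coordinates near a boundary point and reduce to the one-dimensional statement that a standard Brownian motion started at $0$ takes strictly positive values in every right-neighborhood of $0$ almost surely. Handling the rectangle case (where $\partial\mathcal D$ is only piecewise smooth, with corners) requires a small extra remark, but corners form a set that Brownian motion hits with probability zero, so the argument localizes to the smooth faces. Everything else — the continuity verification and the null-set bookkeeping — is routine once $G$ is correctly chosen.
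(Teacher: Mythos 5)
Your overall strategy is the same as the paper's: decompose $D_f$ into $\{\tau=\mathfrak T\}$ together with the set of paths that reach $\partial\mathcal D$ before $\mathfrak T$ but fail to enter the open exterior immediately afterwards, dismiss the first piece as a null set, and kill the second via the strong Markov property plus the fact that Brownian motion started at a boundary point immediately leaves $\bar{\mathcal D}$. Your identification of the discontinuity set, the continuity check on the good set $G$, and the remark that harmonic measure does not charge the non-smooth part of $\partial\mathcal D$ (corners of the rectangle) all match what the paper does.

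The gap is precisely in the step you flag as the main obstacle, and as described it would not go through: reducing to the statement that a one-dimensional Brownian motion started at $0$ takes strictly positive values in every right-neighborhood of $0$ is not sufficient when $\partial\mathcal D$ is curved. In normal coordinates at $\omega(\tau)$, the $C^2$ hypothesis only gives that nearby points of $\bar{\mathcal D}$ satisfy $y_1 < K\|(y_2,\dots,y_d)\|^2$, so positivity of the normal component $\hat B_1(t)$ of the post-$\tau$ motion does not put the path outside $\bar{\mathcal D}$; you need $\hat B_1(t)$ to exceed $K$ times the square of the tangential displacement. Making this quantitative comparison of scales (normal fluctuation of order $\sqrt t$ versus curvature correction of order $t$) is essentially the entire content of the paper's proof: it fixes deterministic times $t_n\downarrow 0$, uses Borel--Cantelli to get $|\hat B_1(t_n)|>t_n^{2/3}$ and $\|\hat B_{d-1}(t_n)\|<t_n^{1/3}/\sqrt K$ for all large $n$, and then an excursion argument to guarantee $\hat B_1(t_n)>0$ for infinitely many $n$, which forces the path out of $\bar{\mathcal D}$ inside any neighborhood $U$ of $\omega(\tau)$. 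You could alternatively close the gap by invoking regularity of every boundary point of a $C^2$ domain for the exterior $\mathbb R^d\setminus\bar{\mathcal D}$ (Zaremba's exterior cone criterion), but some such argument must be supplied; as written, your proposal asserts the key crossing fact rather than proving it.
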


We first finish the proof assuming the result of this lemma.
Via a rigid translation, we may assume $x =0$ 
(so that $xL \in \mathbb Z^d$ for all $L$). Then with $S_0=xL$, $S_n/L$
in the proposition is $W^0_{L^2} (n/L^2) $ in the notation above.
Observe that we are now in the setting of the continuous mapping theorem: 
$W^0_{L^2}  \Rightarrow B^0$ as $L \to \infty$ is condition (i) in 
Theorem \ref{thm:ctsmap}, and the assertion in Lemma \ref{discontinuity} is condition (ii). Thus the theorem applies, and its conclusion together with our choice of $\mathfrak T$ gives exactly (\ref{approx}) in the case $S_0=xL$. 

To prove the full statement of Proposition \ref{lem12}, we observe that
if the result was false, there would be  a sequence $x_k \in \mathbb R^d$ with $x_k \to x$ 
and a sequence $L_k \to \infty$ such
that if $S_n^{(k)}$ is the SSRW on $\mathcal D_{L_k}$ with $S^{(k)}_0 = x_k L_k$,
then
$$
 \left| \mathbb E \left(T \left( \frac{S^{(k)}_{\tau}}{L_k} \right) \right) -u(x)\right| 
 > \varepsilon
$$
where $\tau$ is the smallest $n$ such that $S^{(k)}_n \in \mathcal B_{L_k}$.
Such a scenario cannot occur: Since $W^{x_k}_{L_k^2} = (x_k-x) + W^x_{L_k^2}$,
it follows from (\ref{invpr}) that $W^{x_k}_{L_k^2} \Rightarrow B^x$ on 
 $[0, \mathfrak T]$, and the argument 
in the last paragraph with $W^{x_k}_{L_k^2}$ in the place of $W^0_{L^2}$ 
gives the opposite inequality. 
\end{proof}

\smallskip
To complete the proof, it remains to show that the discontinuity set of $f$ has
zero Wiener measure.

\begin{proof}[Proof of Lemma \ref{discontinuity}] First, we identify the discontinuity
set $D_f$. If $\tau(\omega)
= \infty$, then the trajectory of $\omega$ up to time $\mathfrak T$ is bounded away
from $\partial \mathcal D$, hence $f$ is continuous at $\omega$. If 
$\tau(\omega) < \mathfrak T$, then $\liminf_{\omega' \to \omega}
\tau(\omega') \ge \tau(\omega)$ for the same reason, but 
the corresponding $\limsup$ can be strictly greater
than $\tau (\omega)$  if  the trajectory of 
$\omega$ does not cross to the other side of $\partial \mathcal D$ 
immediately following $\tau(\omega)$. More precisely, we have deduced that
$D_f = \{\tau  = \mathfrak T\} \cup E$ 
where
$$ E = 
\{\omega : \tau (\omega) < \mathfrak T \mbox{ and } \exists \eta=\eta(\omega) >0 \mbox{ s.t. } \omega((\tau, \tau+\eta)) \subset \bar{\mathcal D}\} 
$$
where $\bar{\mathcal D}$ is the closure of $D$. 

Clearly, $\{ \tau = \mathfrak T\}$ has measure $0$, so it suffices to show $B^x(E)=0$.

Since harmonic measure is absolutely continuous, the set of $\omega$
for which $\omega (\tau)$ lies at a point at which $\partial \mathcal D$ is not
$C^2$ differentiable has measure zero. Let $\omega$ be outside of this
measure zero set, and fix an orthonormal basis $\{e_1, ..., e_d\}$ 
of $\mathbb R^d$ such that $e_1$ is the outward normal to $\partial \mathcal D$
at $\omega(\tau)$. Then there exist $K>0$ and a neighborhood $U$ of $\omega(\tau)$ in $\mathbb R^d$ such that
\begin{equation} \label{paraboloid}
 y \in  \bar{\mathcal D} \quad \text{ implies } \quad y_1 <  K \|
 (y_2, ... ,  y_d)\|^2\ .
\end{equation}
Recall that by the strong Markov property, $B^x$ starting from the stopping time $\tau $ is a Brownian motion. In particular, by projecting this Brownian motion, which we call
$\hat B(t)$, to the line parallel to $e_1$ and to the hyperplane spanned by $e_2, ..., e_d$, we obtain two independent Brownian motions, $\hat B_1(t)$ 
and $\hat B_{d-1}(t)$. Since $\hat B_1(t)/\sqrt t$ and
$\hat B_{d-1}(t)/\sqrt t$ have standard normal distributions, 
$$ \mathbb P(|\hat B_1(t)| < t^{2/3}) \rightarrow 0 \quad \text{ and } \quad 
\mathbb P(\|\hat B_{d-1}(t)\| > t^{1/3}/\sqrt{K}) \rightarrow 0 
$$
as $t \to 0$. Choosing $t_n \downarrow 0$ so that
$$ \sum_n \mathbb P(|\hat B_1(t_n)| < t_n^{2/3}) \ ,
\ \  \sum_n
 \mathbb P(\|\hat B_{d-1}(t_n)\| > t_n^{1/3}/\sqrt{K}) \ < \ \infty,$$
it follows from the Borel-Cantelli lemma that
$|\hat B_1(t_n)| > t_n^{2/3}$ and $\|\hat B_{d-1}(t_n)\| < t_n^{1/3}/\sqrt{K}$ hold for
all but finitely many $n$. For $t_n$ for which these inequalities hold, we are
guaranteed that $\hat B(t_n) \not \in \bar{\mathcal D}$ if $\hat B_1(t_n)>0$
and $\hat B(t_n) \in U$.

Now it is a well known fact that on the time interval $[0, \eta]$ for every $\eta>0$,
a 1D Brownian motion starting from $0$ makes infinitely many excursions from $0$,
and each excursion is positive with probability $1/2$ independently of other excursions. Applying this fact to $\hat B_1(t)$, and assuming (as we may) that
each $t_n$ lies in a different excursion, it follows that with probability $1$,
$\hat B_1(t_n)>0$ for infinitely many $n$. 
Since $\mathbb P(\hat B(t) \in U, t \in [0, \eta]) \to 1$ as $\eta \to 0$,
we have proved that 
$ B^x (E) = 0$.

\end{proof}


\medskip

\end{document}